\newif\ifpdf
\newif\iffinal
\else\usepackage[notref,notcite]{showkeys}\fi
\else\IfFileExists{pdfsync.sty}{\usepackage{pdfsync}}{}\fi
\DeclareFontFamily{OT1}{eusb}{} \DeclareFontShape{OT1}{eusb}{m}{n} {<5> <6> <7> <8> <9> <10> <11> <12> <14.4> eusb10}{}
\DeclareMathAlphabet{\eusb}{OT1}{eusb}{m}{n}
\DeclareFontFamily{OT1}{eusm}{} \DeclareFontShape{OT1}{eusm}{m}{n} {<5> <6> <7> <8> <9> <10> <11> <12> <14.4> eusm10}{}
\DeclareMathAlphabet{\eusm}{OT1}{eusm}{m}{n}
\DeclareFontFamily{OT1}{eufm}{} \DeclareFontShape{OT1}{eufm}{m}{n} {<5> <6> <7> <8> <9> <10> <11> <12> <14.4> eufm10}{}
\DeclareMathAlphabet{\mathfrak}{OT1}{eufm}{m}{n}
\DeclareFontFamily{OT1}{fraktura}{}
\DeclareFontShape{OT1}{fraktura}{m}{n} {<5> <6> <7> <8> <9> <10> <11> <12> <13> <14.4> [1.1] eufm10}{}
\DeclareMathAlphabet{\fraktura}{OT1}{fraktura}{m}{n}
\DeclareFontFamily{OT1}{cmfi}{} \DeclareFontShape{OT1}{cmfi}{m}{n} {<5> <6> <7> <8> <9> <10> <11> <12> <13> <14.4> [0.9] cmfi10}{}
\DeclareMathAlphabet{\cmfi}{OT1}{cmfi}{b}{n}
\DeclareFontFamily{OT1}{cmss}{} \DeclareFontShape{OT1}{cmss}{m}{n} {<5> <6> <7> <8> <9> <10> <11> <12> <13> <14.4> cmss10}{}
\DeclareMathAlphabet{\cmss}{OT1}{cmss}{m}{n}
\newtheoremstyle{thm}{1.5ex}{1.5ex}{\itshape\rmfamily}{} {\bfseries\rmfamily}{}{2ex}{}
\newtheoremstyle{def}{1.5ex}{1.5ex}{\rmfamily\sl}{} {\bfseries\rmfamily}{}{2ex}{}
\newtheoremstyle{rem}{1.3ex}{1.3ex}{\rmfamily}{} {\itshape}
{} {1.5ex}{}
\theoremstyle{thm}
\newtheorem{theorem}{Theorem}[section]
\newtheorem{lemma}[theorem]{Lemma}
\newtheorem{proposition}[theorem]{Proposition}
\newtheorem*{Main Theorem}{Main Theorem.}
\theoremstyle{def}
\newtheorem{definition}[theorem]{Definition}
\theoremstyle{rem}
\newtheorem{remark}[theorem]{{\itshape Remark}}
\numberwithin{equation}{section}
\renewcommand{\section}{\secdef\sct\sect}
\newcommand{\sct}[2][default]{\refstepcounter{section}
\addcontentsline{toc}{section}
{{\tocsection {}{\thesection}{\!\!\!\!#1\dotfill}}{}}
\vspace{0.7cm}
\centerline{ %\large
\scshape\arabic{section}.\ #1} \nopagebreak \vspace{0.2cm}}
\newcommand{\sect}[1]{
\vspace{0.4cm} \centerline{\large\scshape\rmfamily #1}
\vspace{0.2cm}}
\renewcommand{\subsection}{\secdef\subsct\sbsect}
\newcommand{\subsct}[2][default]{\refstepcounter{subsection}
\addcontentsline{toc}{subsection}
{{\tocsection{\!\!}{\hspace{1.2em}\thesubsection}{\!\!\!\!#1\dotfill}}{}}
\nopagebreak\vspace{0.45\baselineskip} {\flushleft\bf
\arabic{section}.\arabic{subsection}~\bf #1.~}
\\*[3mm]\noindent
\nopagebreak}
\newcommand{\sbsect}[1]{\vspace{0.1cm}\noindent
\textbf{#1.~}\vspace{0.1cm}}
\renewcommand{\subsubsection}{%
\secdef \subsubsect\sbsbsect}
\newcommand{\subsubsect}[2][default]{%
\refstepcounter{subsubsection}
\addcontentsline{toc}{subsubsection}{{\tocsection{\!\!}
{\hspace{3.05em}\thesubsubsection}{\!\!\!\!#1\dotfill}}{}}
\nopagebreak
\vspace{0.15\baselineskip} \nopagebreak {\flushleft\rmfamily
\itshape\arabic{section}.\arabic{subsection}.\arabic{subsubsection}
\ \rmfamily #1\/.}\ }
\newcommand{\sbsbsect}[1]{\vspace{0.1cm}\noindent
\rmfamily \itshape
\arabic{section}.\arabic{subsection}.\arabic{subsubsection} \
\sffamily #1\/.\ }
\renewcommand{\caption}[1]{%
\vglue0.5cm
\refstepcounter{figure}
\begin{minipage}{0.9\textwidth}\small {\sc Figure~\thefigure. }#1\end{minipage}}
\newcommand{\dist}{\operatorname{dist}}
\newcommand{\esssup}{\operatorname{esssup}}
\newcommand{\Int}{{\text{\rm Int}}}
\newcommand{\Ext}{{\text{\rm Ext}}}
\newcommand{\Sp}{{\text{\rm sp}}}
\newcommand{\Clean}{{\text{\rm clean}}}
\newcommand{\textd}{\text{\rm d}\mkern0.5mu}
\newcommand{\Var}{\text{\rm Var}}
\newcommand{\Cov}{\text{\rm \Cov}}
\newcommand{\splus}{\sigma^{\ell_<, +}}
\newcommand{\sminus}{\sigma^{\ell_<, -}}
\renewcommand{\AA}{\mathcal A}
\newcommand{\BB}{\mathcal B}
\newcommand{\CC}{\mathcal C}
\newcommand{\DD}{\mathcal D}
\newcommand{\FF}{\mathcal F}
\newcommand{\KK}{\mathcal K}
\newcommand{\OO}{\mathcal O}
\newcommand{\N}{\mathbb N}
\newcommand{\BbbP}{\mathbb P}
\newcommand{\R}{\mathbb R}
\newcommand{\Z}{\mathbb Z}
\newcommand{\scrH}{\mathscr{H}}
\def\myffrac#1#2 in #3{\raise 2.6pt\hbox{$#3 #1$}\mkern-1.5mu\raise 0.8pt\hbox{$#3/$}\mkern-1.1mu\lower 1.5pt\hbox{$#3 #2$}}
\newcommand{\hate}{\hat{\text{e}}}
\newcommand{\ra}{\rightarrow}
\author[N. Crawford]{Nicholas Crawford}
\thanks{{\tt
    email:nickcrawford12345@gmail.com}, Supported in part at the Technion by an Marilyn and Michael Winer Fellowship}
\begin{document}

\title[Random Field Induced Ordering]{On Random Field Induced Ordering in the Classical $XY$ Model}
\thanks{\hglue-4.5mm\fontsize{9.6}{9.6}\selectfont\copyright\,\textbf{Sept.  1 2010} by \textbf{Nick Crawford}.
Reproduction, by any means, of the entire
article for non-commercial purposes is permitted without charge.\vspace{2mm}}
\maketitle

\vspace{-5mm}
\centerline{\textit{Department of Industrial Engineering, The Technion, Haifa, Israel}}

\vspace{2mm}
\begin{quote}
\footnotesize \textbf{Abstract:}
Consider the classical $XY$ model in a weak random external field pointing along the $Y$ axis with strength $\epsilon$.  We study the behavior of this model as the range of the interaction is varied.  We prove that in any dimension $d \geq 2$ and for all $\epsilon$ sufficiently small, there is a range $L=L(\epsilon)$ so that whenever the inverse temperature $\beta$ is larger than some $\beta(\epsilon)$, there is strong residual ordering along the $X$ direction.  
\end{quote}
\vspace{2mm}

\section{Introduction}
In this paper we study an interesting phenomenon which has received little attention  in the mathematical physics literature:  random field induced ordering. The paradigmatic example is given by the following  classical $XY$ spin system.  Let $\{\alpha_z\}_{z\in \Z^d}=\{\alpha_z(\omega)\}_{z\in \Z^d}= $ be a family of i.i.d. $\{\pm1\}$ valued random variables taking each value with equal probability. Configurations are denoted by $\omega \in \Omega=\{0, 1\}^{\Z^d}$ with $(\Omega, \BB, \mathbb P)$ the corresponding probability space.
Let $\epsilon>0$ be fixed.
The (random) Hamiltonian of the model of interest is
\begin{equation}
\label{Eq:Ham}
\scrH(\sigma)=\scrH^{\omega}(\sigma) = -  \sum_{\langle z, z'\rangle } \sigma_z \cdot \sigma_{z'} - \epsilon \sum_{z \in \Lambda_N} \alpha_z(\omega) \hate_2 \cdot\sigma_z 
\end{equation}
where the sum is over the set of nearest neighbor bonds in $\Z^d$.  Here for each $z \in \Z^d$, $\sigma_z \in \mathbb S^1$ and we take as \textit{a priori} measure $\textd \nu(\sigma_z)$ Haar measure on $\mathbb S^1$.  Finally $\hate_1, \hate_2$ denote the two vectors $(1, 0),\: (0, 1)$ respectively.

Superficially, \eqref{Eq:Ham} is at the intersection of a number of notable $d=2$ phenomena:  On the one hand, if $\epsilon=0$, we have a pure $XY$ model, which does not order (magnetically) at any temperature as a consequence of the Mermin-Wagner theorem \cite{MW}.  On the other, if the spin space is take to be $\{-\hate_2, \hate_2\}$ rather than $\mathbb S^1$, then the model is the random field Ising model (RFIM), which does not order either as explained heuristically by Imry-Ma \cite{IM} and demonstrated rigorously by Aizenman-Wehr \cite{AW}.

The question then is what happens when the two models are "combined"?  In the (physics) literature it is expected, surprisingly at first, that there is in fact magnetic ordering at low temperature for any $d \geq 2$ and, at present, no truly rigorous mathematical results exist (in any dimension).  The $d=2$ case is particularly subtle for a number of reasons.  An even more surprising feature of this model is the direction of the ordering, expected to occur along the $\hate_1$ direction.

Interest in systems of this type have seen a flurry of activity in the physics literature in the last few years; a very recent review is \cite{SPL-Nature} which focuses on these effects in the quantum context, see also \cite{Ahor, Wehr-et-al-1, Wehr-et-al-2} for work on classical systems.  A related mechanism, not covered in \cite{SPL-Nature},  appears in \cite{ALL}.    There, the interesting point is that the magnetic field is determined by a random gauge potential.  As a result the average net field strength in a cube of side length $L$ scales as $\sqrt{L}$ in two dimensions as opposed to the scaling of $L$ in \eqref{Eq:Ham}.

Since we have so far been unable to treat directly the nearest neighbor model, to gain understanding of the mechanism behind the ordering we turned to its mean field theory.   Some light is shed in this simplification as we find that the free energy landscape of the model has a double well, with minima at (roughly) $\pm \rho(\beta) \hate_1$ for some positive $\rho(\beta)$ tending to $\sqrt{1-\epsilon^2}$ as $\beta \ra \infty$. 
This observation is the starting point for analysis of the system \eqref{E:Gibbs} when we take lattice models with Kac interaction of range $L = Poly(\frac 1\epsilon)$.  The result is formulated as Theorem \ref{T:Main}.  No attempt has been made to optimize the degree of this polynomial, although the discussion which follows indicates that a lower bound of $L \geq \epsilon^{-\frac 2d}$ is necessary to approximate lattice systems by mean field theories.

We should note that the mean field theory was previously addressed in \cite{Ahor, Wehr-et-al-1} in somewhat different ways.  Those treatments are not sufficient for our purposes and moreover,   the authors focus on critical behavior in the $(\epsilon, \beta)$ plane.  As we will argue, the link between this and the behavior of lattice systems seems tenuous, at least in the weak field regime.   

To provide context to the free energy functional will we introduce in a moment consider the formulation of \eqref{Eq:Ham} on the complete graph on $N$ vertices:
the Hamiltonian becomes
\begin{equation}
\label{Eq:Ham-C}
\scrH(\sigma) = - \frac{1}{N} \sum_{z, z'\in [N]} \sigma_z \cdot \sigma_{z'} - \epsilon \sum_{z \in [N]} \alpha_z \hate_2 \cdot\sigma_z.
\end{equation}
It is not difficult to compute $\phi$, the negative of the large deviation rate function for the vector observable $(M_N^+, M_N^-)$, where $M_N^{\pm}$ denotes the spatial average spins $\sigma_z$ over $\{z: \alpha_z= \pm 1\}$ respectively.
For $m \in \R^2$, let $S(m)= \inf_{h \in \R^2} (G(h)-  m \cdot h) $ with $G(h)= \log \int_{\mathbb S^1} \textd \nu(\sigma) e^{\sigma \cdot h}$ denoting the log moment generating function of $\nu$.  Let
\[
\phi(m^+, m^-)=\phi_{\beta, \epsilon}(m^+, m^-):= -\frac{1}{2}\|\bar m\|_2^2 - \frac{\epsilon}{2}\hate_2\cdot (m^+-m^-) - \frac{1}{2\beta}(S(m^+) + S(m^-))
\]
where $\bar m= \frac 12 (m^+ + m^-)$ and $f(m^+, m^-)= \phi(m^+,m^-)- \inf_{(m^+, m^-)} \phi(m^+,m^-)$.  
We use the notation
\[
\|(m^+, m^-)- (m^+_{0}, m^-_{0})\|= \max(\|m^+- m^+_{0}\|_2, \|m^-- m^-_{0}\|_2)
\]
with $\|\cdot\|_2$ denoting the usual Euclidean norm.
Then
\[
\lim_{\delta \ra 0} \lim_{N \ra \infty} \frac{-1}{N} \log \langle \mathbf 1_{\{\|(m^+, m^-)- (M^+_{N}, M^-_{N})\|< \delta\}} \rangle_N^{\omega} = f(m^+, m^-)
\]
where $\langle \cdot \rangle_N^{\omega}$ is the Gibbs state associated to \eqref{Eq:Ham-C} and convergence occurs for all $(m^+, m^-)$ $\omega$-a.s.

The main properties of $\phi$ are summarized in the following theorem.
\begin{theorem}[Low Temperature Mean Field Phase Diagram]
\label{T:MFT}
There exists $\epsilon_0>0$ and $\beta_0>0$ so that for all $\epsilon<\epsilon_0, \beta>\beta_0$, the free energy functional $f(m^+, m^-)$ has precisely two minimizers $\pm({\bf m^+, \bf m^-})$.  These minimizers are characterized by
\[
\|{\bf m}^+\|_2= \|{\bf m}^-\|_2=  \rho,
\]
\[
{\bf m^+} \cdot \hate_1 = {\bf m}^-\cdot \hate_1= \cos(\theta).
\]
where,
$\rho \leq 1$
and $\theta \in [\pi/2, \pi/2]$ satisfy the mean field equations
\begin{align*}
&\sin(\theta)= \frac{ \epsilon}{\rho},\\
&\rho = \frac{1}{\beta} \partial_{\rho} S(\rho \hate_1)  \quad (\text{ the maximal solution, which is nonzero}).
\end{align*}
In particular, $|\rho|$ is bounded away from $0$ for all $\beta$ sufficiently large and $\epsilon$ small and consequently $\theta=O(\epsilon)$.

Further, we have the following stability estimate:
\begin{multline}
  \phi(m^+, m^-)-\phi(\mathbf m^+, \mathbf m^-) \geq \\
  c(\epsilon_0, \beta_0)\|(m^+, m^-)+(\mathbf m^+, \mathbf m^-)\|\wedge \|(m^+, m^-)-(\mathbf m^+, \mathbf m^-)\|   \wedge \epsilon^2
\end{multline}
with all (non-minimizing) stationary points $(m^+_0, m^-_0)$ satisfying
\[
  \phi(m^+_0, m^-_0)-\phi(\mathbf m^+, \mathbf m^-) \geq \frac{\epsilon^2}{2}.
\]
\end{theorem}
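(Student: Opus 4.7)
The plan is to reduce the four-dimensional vector optimization to a scalar problem by exploiting the rotational symmetry of $G$, classify the resulting critical points via the discrete symmetries of $\phi$, and obtain the stability estimates from a Hessian computation combined with coercivity of the rate function. Since $\nu$ is Haar measure on $\mathbb{S}^1$ one has $G(h) = \log I_0(\|h\|_2)$ and hence $\nabla G(h) = g(\|h\|_2)\hat h$ with $g := I_1/I_0$ smooth and strictly increasing from $0$ on $[0,\infty)$, with $g'(0) = 1/2$ and $\lim_{r \to \infty} g(r) = 1$. Writing the Euler--Lagrange system in the form
\begin{equation*}
m^+ = \nabla G\bigl(\beta(\bar m + \epsilon\hate_2)\bigr), \qquad m^- = \nabla G\bigl(\beta(\bar m - \epsilon\hate_2)\bigr),
\end{equation*}
we see that each $m^\pm$ is parallel to its effective field, and $\|m^+\|_2 = \|m^-\|_2$ iff $\bar m\cdot\hate_2 = 0$. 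The functional $\phi$ carries two commuting $\Z_2$-symmetries: the diagonal reflection $R:(x,y)\mapsto(-x,y)$, and the swap-negation $\sigma:(m^+,m^-)\mapsto(-m^-,-m^+)$.

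I would first seek minimizers within the $R$-invariant ansatz $m^\pm = (a,\pm b)$ with $a > 0$, for which $\bar m = (a,0)$. Substituting into the fixed-point system and using rotational symmetry of $\nabla G$, the four equations collapse to
\begin{equation*}
\rho = g(\beta\rho), \qquad b = \epsilon, \qquad \rho := \sqrt{a^2+\epsilon^2}.
\end{equation*}
For $\beta > \beta_0$ large enough independently of $\epsilon$, the standard Curie--Weiss analysis gives a unique, transversal positive root $\rho_\ast$, bounded away from $0$ and satisfying $\rho_\ast \to 1$ as $\beta\to\infty$. Defining $\sin\theta = \epsilon/\rho_\ast$ (so $\theta = O(\epsilon)$) yields $a = \rho_\ast\cos\theta$, producing the minimizer $\mathbf m^\pm = \rho_\ast(\cos\theta,\pm\sin\theta)$; applying $\sigma$ gives the second. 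To show no other critical points are nearby in $\phi$-value, I classify them using the alignment condition: they split into the family above and ``$\hate_2$-aligned'' critical points with $\bar m\parallel\hate_2$, which by the rotational structure of $\nabla G$ forces $m^\pm\parallel\hate_2$. A direct evaluation of $\phi$ on the second family shows $\phi - \phi(\mathbf m^+,\mathbf m^-) \geq \epsilon^2/2$ to leading order, the excess coming from the Zeeman term $-\epsilon\hate_2\cdot(m^+-m^-)/2$ being suboptimal when $b=0$ rather than $b=\epsilon$; this is the second assertion of the theorem.

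For the pointwise stability estimate, the local contribution comes from a Hessian computation at $\pm(\mathbf m^+,\mathbf m^-)$: decomposing into the natural modes (symmetric and antisymmetric combinations of $m^+, m^-$, each split radially and tangentially), three of the four eigenvalues are bounded below by a constant depending only on $\beta_0$, while the tangential ``Goldstone'' eigenvalue is of order $\epsilon^2$, reflecting the pinning of the $SO(2)$-symmetry breaking. This is precisely what produces the $\epsilon^2$ cap in the stated bound. Globally I would combine this local estimate with the coercivity of $S$ as $\|m\|_2\to 1$ and the complete catalog of critical points above to extract the claimed lower bound on all of $\R^2\times\R^2$.

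The main obstacle is the classification of critical points in the near-degenerate limit $\epsilon\to 0$. For $\epsilon = 0$ the functional has an $SO(2)$-orbit of degenerate minima $m^+ = m^- = \rho_\ast\hat n$, and the random-field term breaks this continuous symmetry only at order $\epsilon$. To rule out spurious critical points and obtain constants uniform in $\beta > \beta_0$ and $\epsilon < \epsilon_0$, a quantitative analysis of the soft mode along this orbit is required: one verifies that the effective potential on the orbit is, to leading order, $-c\,\epsilon^2\cos(2\theta)$ with $c > 0$ independent of $\beta$, with exactly two nondegenerate minima at $\theta = 0,\pi$ and two saddles at $\theta = \pm\pi/2$. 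Propagating these asymptotics into the stability estimate with tracked constants is the delicate part of the argument.
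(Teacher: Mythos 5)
Your critical-point analysis is correct and carries essentially the same content as the paper's, in different coordinates: the paper works in polar coordinates $(\rho_\pm,\theta^\pm)$, extracts $\rho_-\cos\theta^-=\rho_+\cos\theta^+$ from the angular gradient, and then reduces to the scalar Curie--Weiss equation quoted from Kesten--Schonmann, whereas you go through the fixed-point form $m^\pm=\nabla G(\beta(\bar m\pm\epsilon\hate_2))$ and the rotational structure of $\nabla G$. Both routes give the same dichotomy (the $R$-invariant family versus the $\hate_2$-aligned family), the same equations $\rho=g(\beta\rho)$ and $\sin\theta=\epsilon/\rho$, and the same $\epsilon^2/2$ gap at the $\hate_2$-aligned saddle; your classification argument (if $\bar m\cdot\hate_1\neq0$ then both effective field strengths solve the scalar equation, forcing $\bar m\cdot\hate_2=0$) is complete and, if anything, cleaner than the paper's.

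For the stability estimate the two arguments genuinely diverge, and yours is open exactly where you defer to "the delicate part." The paper never diagonalizes the four-dimensional Hessian: it uses the identity $\phi(m^+_0,m^-_0)-\tfrac12\{\phi(m^+_0,R_Xm^+_0)+\phi(R_Xm^-_0,m^-_0)\}=\tfrac18\,(m^+_0-R_Xm^-_0)\cdot(R_Xm^+_0-m^-_0)$ to push the infimum over $\partial A_\xi$ onto the reflection-symmetric slice $m^-=R_Xm^+$, on which the reduced Hessian $\nabla^2\chi(\rho,\theta)$ is nondegenerate \emph{uniformly in $\epsilon$} --- the Goldstone mode is transverse to that slice and never enters. (Whether the displayed cross term really has a sign on $A_\xi$ is itself delicate, but that is the paper's mechanism.) Your route keeps the soft mode, and then the local bound it can deliver in that direction is $\phi-\phi_{\min}\gtrsim\epsilon^2 d^2$, which does not imply the stated $c\,(d\wedge\epsilon^2)$: rotating $(\mathbf m^+,\mathbf m^-)$ rigidly by an angle $\psi$ changes only the Zeeman term and gives $\phi-\phi_{\min}=\epsilon^2(1-\cos\psi)$, so at $\psi=\epsilon$ the excess is $\approx\epsilon^4/2$ while the displacement is $\approx\rho\epsilon$. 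So "combining the Hessian with coercivity and the catalog of critical points" cannot consist of simply capping the quadratic bound at $\epsilon^2$; you must either (i) perform the reduction to the symmetric slice as the paper does, or (ii) integrate your (correct) orbit potential $-c\,\epsilon^2\cos(2\psi)$ globally and settle for an estimate of the form $f\geq c\,\epsilon^2\bigl(d^2\wedge1\bigr)$, which is what your landscape picture actually supports and which would then have to be tracked through the contour bounds where the estimate is used. As written, your steps for the stability bound assert the conclusion without supplying either mechanism.
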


The above theorem says that free energy $\phi$ has exactly two minimizers and that the height of the minimax barrier between the two minimizers is of order $\frac{\epsilon^2}{2}$ uniformly in $\beta$ large.  It also says that these two minimizers are \textit{transverse to the direction of the randomness} so that fluctuations of the local fields  do not favor one of these minima over the other.  This gives a stability not present in the RFIM.  

To make the mechanism for ordering even more transparent, we may attempt to interpolate between the random field transverse field $XY$ model and the RFIM by considering $\Z_n$ Clock Models.   The spin space is the set of $n$th roots of unity on the unit circle, viewed as vectors in $\R^2$, with Hamiltonian as in \eqref{Eq:Ham}.
If $n=4$, the randomness forces the system to behave much as in the case of the RFIM, as can be seen purely from the consideration of ground states. 
One may then wonder what happens as $n$ is increased.  We find, at least for the ground states, that for each $\epsilon>0$, there is  a crossover behavior from RFIM to the random transverse field $XY$ model occurring at $n \sim \frac{1}{\epsilon}$.  That is, if $n<< \frac 1\epsilon$ then ground states oscillate between $\pm \rho \hate_2$ where as if $n>>\frac 1\epsilon$, there are a finite number  ground states, stable under the noise, all roughly parallel to $\hate_1$. In particular, in contrast to statements made in \cite{Wehr-et-al-1}, a spin space with truly continuous symmetry is not necessary for the effect of interest to occur.
All that matters is that the spin space has enough freedom to take advantage of local fluctuations.  

We would also like to point out that one can perform a mean field analysis in case $\mathbb P(\alpha_x =1) = p \neq \frac 12$ and a similar picture emerges in the mean field theory.  In particular letting $q=1-p$, the mean field equations become
\begin{align*}
p{\bf m^+} \cdot \hate_1  &= q {\bf m}^-\cdot \hate_1\\
p{\bf m^+} \cdot \hate_2-  q {\bf m}^-\cdot \hate_2&= \epsilon \\
\rho_{+} &=  \frac{1}{2 p \beta} \partial_{\rho} S(\rho_+ \hate_1)  \quad (\text{maximal solution})\\
\rho_{-} &=  \frac{1}{2 q \beta} \partial_{\rho} S(\rho_- \hate_1) \quad (\text{maximal solution})
\end{align*}
as can be seen by following the proof of Theorem \eqref{T:MFT}.  These equations imply that at low temperatures, the picture presented above in the unbiased case persists--there are two minimizers, symmetric with respect to the $Y$-axis.  The location of the average of the two components of a minimizer has a non zero $Y$ component with sign and magnitude  determined by the bias.

Let us return to requirement $L = Poly(\frac 1\epsilon)$.
Fluctuations of the local fields are of order $\sqrt{N}$ typically, and can change the finite volume low temperature free energy landscape by $O(\frac{\epsilon}{\sqrt{N}})$.  Thus the most likely order parameter values at finite volume can have $\phi$-free energy which differs from the absolute minimum of $\phi$ by an amount of order $\frac{\epsilon}{\sqrt{N}}$.  Large deviations implies that the height barrier in a system of $N$ vertices is of order $\epsilon^2 N$, so the effects of field fluctuations on macroscopic observables will only be suppressed if $ N\epsilon^2>> \epsilon \sqrt{N}$.  This leads us to a fundamental requirement for the validity of the $\phi$-mean field picture \text{even to the model on the complete graph}: $\sqrt{N} >> \frac{1}{\epsilon}$.  Translated to Kac interactions, this means we must take the range of our interaction $L$ so that $L^{d/2}>> \frac{1}{\epsilon}$. 

The above discussion provides a first noteworthy point of our work.   Though it is not believed that randomness induced ordering depends on taking a long range interaction, our work emphasizes the difficulties in drawing conclusion about short range models (even in high dimension) by extrapolating from mean field analysis, particularly when the random field strength is weak.  This point seems to have been overlooked in the literature.

A second main point (somewhat counter to the first) is that we are able to give a rigorous example of a lattice system where the combination of  a random field acting in one direction with a coupling that has continuous symmetry disrupts the behavior of the two $d=2$ systems discussed above (Mermin-Wagner and RFIM).  Indeed, it is known that neither of those models order at low temperature \textit{even when the interaction range is spread out as $Poly(\frac 1\epsilon)$}.

The rest of the paper is organized as follows.  In the next section, we precisely formulate our main result Theorem \ref{T:Main}: the existence of residual magnetization along the $\hate_1$ direction for Kac models with sufficiently long range interaction.  Section \ref{S:Contours} states the main Lemmata needed for  the proof of the Theorem \ref{T:Main} and  on their basis provides a proof of the Theorem.  Section \ref{S:MFT} is devoted to the proof of Theorem \ref{T:MFT} and other technical estimates needed regarding the mean field theory.  There after, the paper is devoted to a justification of the Lemmata appearing in Section \ref{S:Contours}.  The techniques used are mostly taken from the book \cite{Pres-Book}, with modifications needed to treat the randomness and the fact that we are working with a continuous spin space.  This means making appropriate definition of course grained contours and comparing contour energies to a certain free energy functional evaluated on deterministic magnetization profiles which are defined on the support of each contour.

\section{Main Result}
Let $\mathcal S= (\mathbb S^1)^{\Z^d}$ be endowed with the product topology and associated Borel $\sigma$-field $\BB_0$.  For any $\sigma \in \mathcal S$ and any set $ \Lambda \subset \Z^d$ let $\sigma_{\Lambda}$ denote the restriction of $\sigma$ to $\Lambda$ and $\textd \nu_{\Lambda}=\textd \nu(\sigma_{{\Lambda}})$ denote the corresponding restriction of $\nu$. $\mathcal S_{\Lambda}$, $\mathcal B_{0, \Lambda}$ will denote the associated space of spin configurations and sigma fields respectively.

Let $\Lambda_{N}=\{z \in \Z^d: \|z\|_\infty \leq N\}$ where $\|z\|_\infty$ denotes the $\ell^{\infty}$ length in $\R^d$. 
and introduce a length scale $L$ which represents the range of the interaction.  So $J_L(z, z'):= c_L J(
\|z-z'\|_2/L)$ where $J$ is $C^{\infty}_c(\{x \in \R^d: \|x\|_2 \leq 1\})$, $c_L^{-1}= \int \textd x J(\|x\|/L)$ is a normalizing constant.
Let $\epsilon>0$ be fixed.
Given $\sigma_{\Lambda_N^c} \in \mathcal S_{\Lambda_N^c}$, the random Hamiltonian for our model is
\[
\scrH_N(\sigma_{\Lambda_N}|{\sigma_{\Lambda_N^c}}) = -  \sum_{z,z' \in \Lambda_N} J_L(z, z') \sigma_z \cdot \sigma_{z'} - \epsilon \sum_{z \in \Lambda_N} \alpha_z \hate_2 \cdot\sigma_z - \sum_{z\in \Lambda_N, z' \in \Lambda_N^c} J_L(z, z') \sigma_z \cdot \sigma_{z'}
\]
Notice the crucial property that, even in the presence of the random fields, the energy $\scrH_N(\sigma_{\Lambda_N}|{\sigma_{\Lambda_N^c}})$ is invariant with respect to simultaneous reflection of all spins about the $Y$-axis.

The Gibbs-Boltzman probability distribution associated with this Hamiltonian is defined by Radon-Nikodym derivative relative to $\nu_{\Lambda_N}$:
\begin{equation}
\label{E:Gibbs}
\textd\mu_{N}^{\sigma_{\Lambda_N^c}}(\sigma_{\Lambda_N})= \textd \mu_{N}^{\sigma_{\Lambda_N^c}, \omega}(\sigma_{\Lambda_N}) \propto e^{-\beta \scrH_N(\sigma_{\Lambda_N}| {\sigma_{\Lambda_N^c}})}\textd \nu(\sigma_{{\Lambda_{N}}})
\end{equation}
with constant of proportionality $Z_N^{\omega}(\beta, \sigma_{\Lambda_N^c})$ the (random) partition function of the system.

\noindent
\textbf{Horizontal Boundary Conditions:}
For convenience, let $\mu_{N}^{\ra, \omega}$ be the random Gibbs state with $\rightarrow$
denoting boundary conditions given by setting all boundary spins equal $\bar{\mathbf m}:= \frac 12(\bf m^++ \bf m^-)$.  Even though this choice is not strictly in the spin space, the model still makes sense.  Further, this external configuration can be effectively produced mesoscopically by taking spins to be of the form $a \hate_1 \pm b \hate_2$ where the sign varies according to the parity of the underlying lattice site.  In this way one obtains block average magnetizations $a \hate_1 + O(k^{-2})$ where $k$ is the size of the box.  Since we are considering Kac interactions, the effect is the same as our Horizontal Boundary conditions to within an error which plays no significant role.

Given $\lambda>0$, let $\ell_>= L^{1+\lambda}$.  Let $B^{\ell_>}_z= \{ x \in \Z^d: \|z-x\|_{\infty}\leq \ell_>\}$.  A subset $\Lambda$ of $\Z^d$ will be said to be $\ell_>$- measurable if $\Lambda$ is a union of blocks $B^{\ell_>}_r$ so that $r \in (2\ell_>+1) \Z^d$.
Let
\[
B^{\pm , \ell_>}_z= \{ x \in B^{\ell_>}_z: \alpha_x=\pm1\}\
\]
We define the block average magnetizations
\begin{align*}
&M^{\pm}_z = \frac{1}{|B^{\pm , \ell_>}_z|}\sum_{ x \in B^{\pm , \ell_>}_z} \sigma_x,\\
&M_z = \frac{1}{|B^{ \ell_>}_z|}\sum_{ x \in B^{ \ell_>}_z} \sigma_x.
\end{align*}

\begin{theorem}[Main Theorem]
\label{T:Main}
Let $d \geq 2$ be fixed.  We can find $\xi_0, \epsilon_0> 0$ and $\lambda>0$ so that for every $0<\epsilon< \epsilon_0$ and $0< \xi< \xi_0$, there exists $\beta_0(\epsilon, \xi), L_0(\epsilon, \xi, \lambda)>0$ for which the following holds:  If $\beta> \beta_0, L \geq L_0$,  then for almost every $\omega\in \Omega$, there exists an $\ell_>=\ell_>(L)$-measurable subset $\DD_{\omega} \subset \Z^d$ and an $N_0(\omega)\in \N$ such that:
\begin{enumerate}
\item
\[
|\DD \cap \Lambda_N| \leq q |\Lambda_N|
\]
for all $N \geq N_0(\omega)$.
\item
For each $z\in \Lambda_N$ with $B^{\ell_>}(z) \cap \DD = \varnothing$,
\[
\|\langle M^{\pm}_z \rangle_N^{\omega, \ra} - \mathbf m^{\pm}\|_2 \leq \xi
\]
and 
\[
\|\langle M_z \rangle_N^{\omega, \ra} - \bar{\mathbf m}\|_2 \leq \xi
\]
Explicitly, we require
\[
\epsilon^2 \wedge \xi^2 \wedge  L^{-\lambda d}  \geq C (L^{\frac 23 (\lambda-1)+2\lambda d} + L^{2\lambda d}e^{-c \epsilon^2 L^{\lambda}}),
\]
$L^{-\lambda}< \frac{1}{8}$ and $\xi>c L^{-\frac 54(1-\lambda)}$.
where $c, C$ are universal constants.
\end{enumerate}
\end{theorem}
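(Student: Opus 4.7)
The plan is to prove Theorem \ref{T:Main} by a Peierls/Pirogov--Sinai argument along the lines of \cite{Pres-Book}, taking Theorem \ref{T:MFT} as the mean-field input. The two minimizers $\pm(\mathbf m^+,\mathbf m^-)$ of $\phi$ together with the stability estimate provide the deterministic free-energy barrier that the random field will have to beat, and the Kac scaling $L=\mathrm{Poly}(1/\epsilon)$ is what makes a coarse-grained picture track that barrier faithfully.

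First I would coarse-grain on two scales $\ell_<\ll\ell_>$. Tile $\Z^d$ by disjoint $\ell_>$-cubes and decorate each with a label in $\{+,-,b\}$: the label is $+$ (respectively $-$) if every scale-$\ell_<$ sub-block inside the cube carries empirical averages $(M^+_z,M^-_z,M_z)$ within a prescribed tolerance $\eta=\eta(\xi)$ of $(\mathbf m^+,\mathbf m^-,\bar{\mathbf m})$ (respectively of their negatives); otherwise the cube is \emph{bad}. A contour $\Gamma$ is then a maximal connected union of bad cubes together with good cubes that neighbor cubes of opposite label, so that the coarse label is constant on each connected component of the complement of $\bigcup\Gamma$. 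The set $\DD_\omega$ of the theorem will be the union of contours of $\ell_>$-volume above a fixed threshold.

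Next I would estimate the Gibbs weight of a single contour. On each $\ell_<$-block inside $\Gamma$ the empirical averages stay at distance $\geq\eta$ from both minimizers, so the stability bound of Theorem \ref{T:MFT} yields a deterministic free-energy cost of at least $c\beta\epsilon^2$ per block, producing a Peierls factor $\exp(-c\beta\epsilon^2|\Gamma|/\ell_<^d)$. The two non-local pieces of $\scrH_N$ must then be controlled: the Kac surface energy is manageable because any change of phase forces the magnetization profile to vary by $O(1)$ over distance $L$, and the random term $\sum_{z\in\Gamma}\alpha_z\hate_2\cdot\sigma_z$ is absorbed by using that the two minimizers are related by reflection about the $Y$-axis, a symmetry that fixes $\hate_2$; this forces the random contribution outside $\Gamma$ to be label-independent, and inside $\Gamma$ a Hoeffding-type estimate gives sub-Gaussian fluctuations of variance $O(\epsilon^2|\Gamma|)$. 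Provided $\ell_>^{d/2}\gg\epsilon^{-1}$---the threshold emphasized in the introduction---these fluctuations are dominated by the deterministic cost, and for $\mathbb P$-almost every $\omega$ every contour carries Gibbs weight at most $\exp(-c\beta\epsilon^2|\Gamma|/\ell_<^d)$.

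The Peierls summation is then routine: the number of contours of $\ell_>$-volume $n$ containing a fixed cube grows as $e^{O(n/\ell_>^d)}$, so a union bound followed by Borel--Cantelli supplies $N_0(\omega)$ and shows that $\DD_\omega\cap\Lambda_N$ has density $\leq q$ for $N\geq N_0(\omega)$; on its complement the conditional magnetizations $\langle M^\pm_z\rangle$ and $\langle M_z\rangle$ are within $\xi$ of $\mathbf m^\pm$ and $\bar{\mathbf m}$ as required. The horizontal boundary condition $\bar{\mathbf m}$ is strictly closer to $\mathbf m^+$ than to $-\mathbf m^+$ in the $\hate_1$ direction, so it pins the outermost $\ell_>$-cubes to the $+$ label and the contour estimate propagates this bias inward. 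The hardest step is the single-contour energy estimate: ruling out that the random field be used as a \emph{lever} to undercut the $\epsilon^2$ deterministic barrier. This requires (i) a careful local comparison of $\beta\scrH_N$ with a finite-volume version of $\phi$ evaluated on the two species $\alpha=\pm1$, (ii) uniform control of $\phi$ as a function of the empirical magnetizations via Theorem \ref{T:MFT}, and (iii) a concentration inequality for the random field that handles all contour shapes simultaneously---the three ingredients to which the body of the paper is devoted.
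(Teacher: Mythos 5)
Your overall architecture (two-scale coarse graining, contours built from $\ell_>$-blocks whose $\ell_<$-sub-blocks are not near $\pm(\mathbf m^+,\mathbf m^-)$, mean-field stability from Theorem \ref{T:MFT} as the per-block cost, Peierls summation plus Borel--Cantelli) matches the paper. But there is a genuine gap at the step you yourself flag as the hardest one. You assert that, after a Hoeffding estimate with variance $O(\epsilon^2|\Gamma|)$ and the condition $\ell_>^{d/2}\gg\epsilon^{-1}$, ``for $\mathbb P$-almost every $\omega$ \emph{every} contour carries Gibbs weight at most $\exp(-c\beta\epsilon^2|\Gamma|/\ell_<^d)$.'' This is false, and the paper says so explicitly: almost surely there exist arbitrarily large regions of $\Z^d$ (and, for $N$ large, of $\Lambda_N$) where the field is grossly unbalanced --- e.g.\ all $\alpha_z=+1$ on a box of any prescribed size --- and contours supported there admit no uniform Peierls cost. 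No union bound over contour shapes can repair this, because the a.s.\ statement must hold over infinitely many locations. The missing idea is the paper's clean/dirty decomposition of the \emph{disorder}: one declares an $\ell_>$-block dirty when too many of its $\ell_<$-sub-blocks have $|N^\pm_{B_r}-|B_r|/2|\geq|B_r|^{1/2+\kappa}$, defines $\DD_\omega$ as the union of closures of dirty connected clusters, proves the Peierls bound (Lemma \ref{L:Contours}) \emph{only for clean contours} while charging the dirty ones nothing, and separately shows by a percolation estimate that $\DD_\omega$ has small density a.s. Relatedly, your definition of $\DD_\omega$ as ``the union of contours of $\ell_>$-volume above a fixed threshold'' cannot work as stated: contours are functions of the spin configuration $\sigma$, whereas the theorem requires $\DD_\omega$ to be a deterministic subset of $\Z^d$ given $\omega$ alone (otherwise statement (2), a bound on Gibbs expectations at sites outside $\DD$, does not parse). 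In the paper, the link between the two is that a site $x$ with $B^{\ell_>}(x)\cap\DD=\varnothing$ can only be surrounded by clean contours, to which the Peierls bound applies.

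A secondary omission: the single-contour weight bound must be established in the presence of all the other contours, i.e.\ one needs the multi-contour factorization $\mu^{\ra}_{\Lambda_N}(\mathbb X(\Gamma_1^*,\dots))\leq\prod_i\|W(\Gamma_i;\cdot)\|$. The paper proves this by induction, ``erasing'' an innermost-exterior clean contour via the reflection $T_{\Gamma}$ about the $Y$-axis (which preserves the Hamiltonian, the field term, and cleanliness). Your proposal invokes the reflection symmetry only to argue the field does not distinguish the $\pm$ labels, but does not supply this erasure/induction step, without which the Peierls sum is not justified.
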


\begin{remark}
Let us note that the proof we employ below will work, with the appropriate modifications, in the case of bias $\mathbb P(\alpha_z=1)= p \neq \frac 12$.
\end{remark}

\textbf{Notational Convention}
Below the constants $c, C, C_1>$ will always be universal in the sense that they only depend on $d$ and the sup norm of $\nabla J$ but not on $\xi, \epsilon, \beta, L$ etc.  Their values may (will) change from line to line.

\section{Course-Graining and Contours}
\label{S:Contours}
We define $\ell_{<}=\lfloor L^{1-\lambda}\rfloor$ and $\ell_{>}= \lceil L^{1+\lambda}\rceil$  to be two scales respectively slightly smaller and slightly larger than $L$.
We shall assume $2\ell_{<}+1$ divides $L$ and $L$ divides $2\ell_{>}+1$.

The scales $\ell_{<}$ and  $\ell_{>}$ introduce a filtration of blocks in $\R^d$ and, by taking intersections, in $\Z^d$.  We shall say that a block $B_r= \{x \in \R^d : \|x-r\|_{\infty} \leq L'\}$ is measurable relative to the scale $L'$ if $r \in (2L'+1) \Z^d$. 
For $\Lambda \subset \Z^d$ finite, let
\[
N^{\pm}_{\Lambda}=\{\alpha_x: x\in \Lambda,  \alpha_x=\pm 1\}.
\]
It is standard that
\[
\BbbP(|N^{\pm}_{\Lambda}|- |\Lambda|/2 \geq A \sqrt{|\Lambda|/2}) \leq 2 e(-A^2/4).
\]
where $|\Lambda|$ denote the cardinality of a finite subset of $\Z^d$.
For our purposes, it will suffice to take $A= |\Lambda|^{\kappa}$ for some $\kappa \in (0, 1/2)$. If $\Lambda= B_r \subset \Z^d$ for some $\ell_<$-measurable block $B_r$, we will use the shorter notation $N^{\pm}_{r}$.

We define
\[
\sigma^{\ell_<,\pm}_z= \frac{1}{|N^{\pm}_{r}|} \sum_{x \in B_r: \alpha_x = \pm 1} \sigma_x
\]
where $B_r$ denotes the $\ell_<$-measurable block containing $z$.  Note that this depends on the realization of the randomness.  Also let
\[
\sigma^{\ell_<}_z=\frac{1}{\ell_<^d} \sum_{x \in B_r} \sigma_x,
\]
which does not.
For $D \subset \R^d$ Borel measurable, we shall use the notation
\[
L^{\infty}(D)=\{ m: D \ra \R^2\: s.t.\: m\text{ is Borel measurable and }\esssup_{z\in D} \|m_z\|_2 < \infty \}
\]
where
$\|m_z\|_2$ is the usual Euclidean norm in $\R^2$ (more generally this notation is used for $2$-norms in all Euclidean spaces $\R^d$).  We will also use $\|m_z\|_\infty= \max_{1\leq i \leq d} |m_z(i)|$ for vectors $m_z \in \R^d$.  The induced norm on $L^\infty(D)$ is denoted by
\[
\|m\|_{L^\infty(D)}= \esssup_{z\in D} \|m_z\|_2.
\]

For any pair $(m^+, m^-) \in L^{\infty}(\R^d)\times L^{\infty}(\R^d)$, we denote $\bar m= \frac{1}{2}(m^+ + m^- )$ and for any $z\in \hat B_r$ with $B_r$ $\ell_<$-measurable
\[
m^{\ell_<,\pm}_z=  \frac{1}{\ell_<^d}\int_{B_r} \textd y \: m^{\pm}_y
\]
and
\[
\bar{m}^{\ell_<}_z = \frac{1}{2}( m^{\ell_<, +}_z + m^{\ell_<, -}_z ).
\]
Below, we refer to magnetization profiles $(m^+, m^-)$ which are piecewise constant over the set of all $B_r$   which are $\ell_<$-measurable (and in the domain of definition of $(m^+, m^-)$) as \textit{$\ell_<$-piecewise constant}.

Fix $\Lambda \subset \Z^d$ which is $\ell_>$-measurable. Given a spin configuration $\sigma \in \mathcal S$, and for any $z \in \Lambda, z \in B_r$ such that $B_r$ is $\ell_<$-measurable, we introduce the phase variables:
\begin{itemize}
\item
\[
\eta=\eta^{\xi}_z= \begin{cases} 1 \quad \text{ if $\|(\sigma^{\ell_<, +}_z, \sminus_z)-  ({\bf m^+}, {\bf m^-})\|_{\infty} \leq \xi$},\\
-1\quad \text{ if $\|(\splus_z, \sminus_z)+  ({\bf m^+}, {\bf m^-})\|_{\infty} \leq \xi$},\\
0 \quad \text{ otherwise}.
\end{cases}
\]
\item
For any $z \in \Lambda, z \in B_r$ with $B_r$ $\ell_>$-measurable
\[
\theta=\theta^{\xi}_z= \begin{cases} 1 \quad \text{ if $\eta^{\xi}_y =1$ for all $y \in B_r$},\\
-1\quad \text{ if \text{ if $\eta^{\xi}_y =-1$ for all $y \in B_r$}},\\
0 \quad \text{ otherwise.}
\end{cases}
\]
\item
These phase variables are extended to the set of boxes $B_r \subset \Lambda^c$ with $\dist(B_r, \Lambda)\leq \ell_>$ via
\[
\eta=\eta^{\xi}_z=\eta^{\xi}_{\Lambda, z}= \begin{cases} 1 \quad \text{ if $\| \sigma^{\ell_<}_z-  \bar{{\bf m}}\|_{\infty} \leq \xi$},\\
-1\quad \text{ if $\| \sigma^{\ell_<}_z + \bar{{\bf m}}\|_{\infty} \leq \xi$},\\
0 \quad \text{ otherwise}.
\end{cases}
\]
 and similarly for $\theta$.
We will always work under conditions in which $\theta \equiv \pm 1$ over connected components of $\Lambda^c$.

\item
For any $z \in \Lambda, z \in B_r$ with $B_r$ $\ell_>$-measurable
\[
\Theta=\Theta^{\xi}_z=\Theta^{\xi}_{\Lambda, z}= \begin{cases} 1 \quad \text{ if $\theta^{\xi}_y =1$ for all $y : \|z-y\|_{\infty} \leq \ell_{>}$}\\
-1\quad \text{ if $\theta^{\xi}_y =1$ for all $y : \|z-y\|_{\infty} \leq \ell_{>}$}\\
0 \quad \text{ otherwise.}
\end{cases}
\]
\end{itemize}
We emphasize that $\Theta$ depends on $\Lambda$ and the boundary conditions.
Another way of defining $\Theta$ is to say that an $\ell_>$-measurable block has nonzero value of $\Theta$ if that block and all neighbors  (in the $\ell^{\infty}$ sense), including boundary boxes, have same nonzero value.

For any $\ell_>$-measurable region $D$ in $\R^d$, we may extend all of these notions to deterministic magnetization profiles $(m^+_z, m^-_z) \in L^{\infty}(D) \times L^{\infty}(D)$.  Note that the definition of $\Theta$ requires the auxiliary input of a fixed boundary condition $\bar m_{0, z} \in L^{\infty}(D^c)$.  The boundary condition used will be clear from context.

For any set $A \subset \Z^d$ we can associate a subset $\hat A \subset \R^d$ which is the union of boxes of side length $1$ centered at the elements of $A$.  We shall say that $A$ is connected if $\hat{A}$ is (note this is NOT the same as connectivity in $\Z^d$).
$\hat{A}^c$ decomposes into one infinite connected component $\Ext(\hat{A})$ and a number of finite connected components $(\Int_i(\hat{A}))_{i=1}^{m}$ with $\Int(\hat{A})= \cup_{i=1}^m \Int_i(\hat{A})$.
Let us denote the $\ell_>$ enlargement of a set $A \subset \R^d$ by
\[
\delta(\hat{A})=\cup_{\{B_r \: \ell_>\text{-measurable}\: :\: \dist(\hat B_r, \hat{A})< \ell_{<}\}}\hat B_r
\]
where $\dist(\hat B_r, \hat{A})$ is the Hausdorff Distance between sets in $\R^d$ in the $\ell_{\infty}$ metric.
From here we may introduce $\delta(A), \Int_i(A), etc.$ by taking intersection of each defined set with $\Z^d$.
The closure of a set $A \subset \Z^d$ is defined to be $c(A)=\delta(A)\cup \Int(A)$.
It is standard that the connected components of  $R^+= \{z \in \Z^d: \Theta^{\xi}_z=1\}$ and $R^-= \{z\in \Z^d : \Theta^{\xi}_z=-1\}$ are separated by connected subsets of  $R^0= \{z\in \Z^d: \Theta^{\xi}_z=0\}$.

\begin{definition}
A \textit{contour} $\Gamma$ is defined to be the pair $(\Sp(\Gamma), \theta_{\Gamma})$ where $\Sp(\Gamma) \subset \Z^d$ is connected, $\ell_>$-measurable and $\theta_{\Gamma}(z)$ is an $\ell_<$-measurable $\{-1, 0, 1\}$-valued function on $\Sp(\Gamma)$ which gives the values of the phase specification on $\Gamma$. In the previous definitions, whenever the set $A$ in question happens to be $\Sp(\Gamma)$ we will write $\delta(\Gamma), c(\Gamma),$ etc.   Let $N_{\Gamma}= |\delta(\Gamma)|/(2\ell_{>}+1)^d$, so that $N_{\Gamma} \in \mathbb N$.  
\end{definition}
Let us introduce the notation $\delta_{ext}(\Gamma)=\delta(\Gamma)\cap \Ext(A)$ and $\delta^{i}_{in}(\Gamma)= \delta(\Gamma) \cap \Int_i(A)$.  These are evidently disjoint.  By definition of $\Gamma$, each of these sets is connected  (in our sense) and disconnected from the rest.

We shall denote by
\begin{multline}
\mathbb X(\Gamma)= \{ \sigma\in \mathcal S: \Gamma \text{ is a contour for $\sigma$}\}\\
:= \{\sigma: \Sp(\Gamma) \text{ is a maximal connected subset of } R^0(\sigma)\text{ and } \theta_z(\sigma) \equiv \theta_{\Gamma}(z)\text{ on } \Sp(\Gamma)\}.
\end{multline}
We shall say that $\Gamma$ is a contour for $\sigma$ if $\sigma \in \mathbb X(\Gamma)$.

By our definitions, specifying that $\Gamma$ is a contour of $\sigma$ lets us recover the values of $\Theta_z(\sigma), \theta_z(\sigma)$ on $\delta(\Gamma)$ (see \cite{Pres-Book} for details).  This convenient property allows us analyze systems of contours without worrying about the microscopic spin configuration far away from the contour.
Two contours $\Gamma_1, \Gamma_2$ are said to be compatible if $\delta(\Gamma_1)\cap \Sp(\Gamma_2)= \varnothing$ and $\theta_{\Gamma_1}= \theta_{\Gamma_2}$ on the domain of intersection of $\delta(\Gamma_1), \delta(\Gamma_2)$.

So far we have considered contours at the level of spin configurations.  We would like to be able to show that under certain finite volume Gibbs measures, a contour costs $e^{-c(\xi, \ell_{<}) N_{\Gamma}}$, the constant $c$ being made large by appropriate choice of $\beta, \xi , L$.  In general, such an estimate will NEVER be true uniformly in the presence of randomness, but we can hope that for the family of $\ell_{<}$-measurable blocks $\{B_r\}$, large fluctuations of the variables $N^{\pm}_{B_r}- |B_r|/2$ are sufficiently rare so that we can still extract $e^{-c'(\xi, \ell_{<}) N_{\Gamma}}$ in cost from MOST contours.

Thus we turn to the interplay between spin configurations and randomness.
We introduce (more) phase variables associated to the randomness -- $\omega\in \Omega$ -- restricted to boxes $B_r \subset \Z^d$ with side length $2\ell_<+1$.
Let $1/2>\kappa>0$ be fixed.  For $B_r$ $\ell_<$-measurable and $z \in B_r$ define
\[
\varphi_{z}= \varphi^{\kappa}_{\ell_<, z}(\omega):=
\begin{cases}
1 \quad \text{ if $|N^{\pm}_{B_r}- |B_r|/2| < |B_r|^{1/2 + \kappa}$}\\
0 \quad \text{ otherwise.}
\end{cases}
\]
and for any $z \in B_r$, with $B_r$ $\ell_>$- measurable,
\[
\Xi_{z}= \Xi^{\kappa}_{\ell_>, z}(\omega):=
\begin{cases}
1 \quad \text{ if $\varphi_y=1$ for all $y \in B_r$}\\
0 \quad \text{ otherwise.}
\end{cases}
\]

\begin{definition}
Let us say that an $\ell_>$-measurable set $S$ is \textit{($\kappa, p$)-clean}, or just clean, if
\[
S^{\text{clean}}:= \{z \in S: \Xi_{\ell_>, z}=1\}
\] satisfies $|S^{\Clean}|/|S| > 1-p$. Otherwise, $S$ is called dirty.
A contour $\Gamma$ will be called clean if $\delta(\Gamma)$ is clean.
\end{definition}

Let \[
\AA=\{Y\subset \Z^d: \:Y \text{ is $\ell_>$-measurable, connected,  and $(
\kappa, p)$ dirty}\}
\]
Let $\DD:= \cup_{Y\in \KK} c(Y)$.
Note that these definitions only depend on the realization $\omega$ and not on possible spin configurations.

Given a contour $\Gamma$, $\Gamma^*$ will denote the $(\Omega, \BB, \mathbb P)$ event
\begin{equation}
\label{E:1}
\Gamma^*=\{\delta(\Gamma) \text{ is clean and $c(\Gamma)$ is not strictly contained in $\DD$}\}
 \end{equation}
 If $\Gamma^*$ occurs, $\Gamma$ will be called a $*$-clean contour.
Given a spin configuration $(\sigma_{\Lambda_N}, \sigma_{\Lambda_N^c})$, let
\[
\mathbb X(\Gamma_1^*, \dotsc, \Gamma_m^*, \Gamma_{m+1},\dotsc, \Gamma_{m+n})= \cap_{i}  \mathbb X(\Gamma_i)
\]
where $ (\Gamma_1, \dotsc, \Gamma_m)$ satisfy the event defined in \eqref{E:1} and $( \Gamma_{m+1},\dotsc, \Gamma_{m+n})$ do not.  Otherwise we define the right hand side to be the empty set.
As a variation of standard definitions, let us say that
\[
(\Gamma_1^*, \dotsc, \Gamma_m^*, \Gamma_{m+1},\dotsc, \Gamma_{m+n}, \omega)
\]
are $*$-compatible if
\[
\mathbb X(\Gamma_1^*, \dotsc, \Gamma_m^*, \Gamma_{m+1},\dotsc, \Gamma_{m+n}) \neq \varnothing.
\]

\begin{lemma}
\label{L:Contours}
There exist $\delta,  \epsilon_0> 0$ so that  for any $p, \lambda, \kappa \in (0, \frac 13)$, $0< \xi< \delta$ so that $L^{-\lambda}< \frac{1}{8}$ and $\xi>c L^{-\frac 54(1-\lambda)} \log L$ and if $\epsilon< \epsilon_0$, there exists $\beta_{\epsilon}$ so that if $\beta> \beta_{\epsilon}$ then the following holds:

\noindent
Let $N$ be fixed and consider the event $\mathbb X(\Gamma_1^*, \dotsc, \Gamma_m^*, \Gamma_{m+1},\dotsc, \Gamma_{m+n})$ with $\Sp(\Gamma_i) \subset \Lambda_N$.  Then
\[
\mu_{\Lambda_N}^{\ra}(\mathbb X(\Gamma_1^*, \dotsc, \Gamma_m^*, \Gamma_{m+1},\dotsc, \Gamma_{m+n})) \leq e^{-q \sum_{i=1}^m N_{\Gamma_i}}
\]
where
\begin{multline}
\label{E:ContBon}
q =C_1 \beta L^{d(1-\lambda)} \Big\{ \epsilon^2 \wedge \xi^{2} \wedge L^{-\lambda d} - \\
C(L^{\lambda-1+2\lambda d}+\epsilon p L^{2\lambda d} +  \epsilon L^{(-d/2+d\kappa)(1- \lambda) +2\lambda d} + L^{2 \lambda d}e^{- c \epsilon^2 L^{\lambda}} +\beta^{-1} L^{2 \lambda d- \frac 54(1- \lambda)} \log L)\Big\}
\end{multline}
with $c, C_1, C>0$ universal constants.
\end{lemma}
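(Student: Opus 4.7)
The plan is to execute a Peierls--Pirogov--Sinai argument in the Kac setting (as in \cite{Pres-Book}), with cleanness used to absorb the random-field contribution. Writing
\[
\mu_{\Lambda_N}^{\rightarrow}\bigl(\mathbb{X}(\Gamma_1^*,\ldots,\Gamma_{m+n})\bigr) = \frac{\int_{\mathbb{X}} e^{-\beta \scrH_N}\,\textd\nu_{\Lambda_N}}{Z_N^\omega(\beta,\bar{\mathbf m})},
\]
I would compare the constrained numerator to a restored partition function in which, for each clean $\Gamma_i^*$ ($i\le m$), the spins over $c(\Gamma_i)$ are resampled into one of the pure phases $\pm(\mathbf m^+,\mathbf m^-)$, while the remaining contours are preserved. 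The goal is to show each restoration yields a factor of at least $e^{q N_{\Gamma_i}}$. The denominator is bounded below by the contribution of such restored configurations, which is legitimate because the horizontal boundary condition $\bar{\mathbf m}$ is symmetric under the $Y$-axis reflection that interchanges the two pure phases.

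The technical core is a coarse-graining estimate on $\ell_{<}$-measurable blocks $B_r$. The smoothness of $J_L$ (Lipschitz constant $O(1/L)$, diameter $\ell_{<}=L^{1-\lambda}$) allows the Kac pair energy to be replaced by a double sum in the block averages $\sigma^{\ell_{<}}$ up to a relative error $O(L^{-\lambda})$, yielding the $L^{\lambda-1+2\lambda d}$ error per unit volume. The random-field contribution on $B_r$ equals $-\epsilon|N^+_r|\hate_2\cdot\splus_z+\epsilon|N^-_r|\hate_2\cdot\sminus_z$, which on clean blocks, using $||N^{\pm}_r|-|B_r|/2|\le|B_r|^{1/2+\kappa}$, equals $-\epsilon\tfrac{|B_r|}{2}\hate_2\cdot(\splus_z-\sminus_z)+O(\epsilon|B_r|^{1/2+\kappa})$, producing the $\epsilon L^{(-d/2+d\kappa)(1-\lambda)+2\lambda d}$ term; on the $\le p$-fraction of dirty blocks, the crude bound $O(\epsilon|B_r|)$ gives $\epsilon p L^{2\lambda d}$. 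Integrating the microscopic $\nu$-measure with block averages fixed, via the Laplace/Varadhan asymptotics for $S$, produces an effective weight $\exp(-\beta\ell_{<}^d\phi(\splus_z,\sminus_z))$ up to subleading corrections; the residual pinning entropy accounts for the $\beta^{-1}L^{2\lambda d-5/4(1-\lambda)}\log L$ correction, which is exactly what forces the constraint $\xi>cL^{-5/4(1-\lambda)}\log L$. The exponential term $L^{2\lambda d}e^{-c\epsilon^2 L^\lambda}$ bounds the large-deviation cost of an $\ell_{>}$-block being misclassified by $\Theta,\theta$.

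The Peierls gain then comes from Theorem \ref{T:MFT}: by definition of $\Theta^\xi$, on every $\ell_{>}$-measurable sub-block of $\Sp(\Gamma_i)$ the coarse-grained profile is either $\xi$-far from both minimizers or separated from each by the mean-field barrier. Combining the quadratic behavior of $\phi$ near its minima with the $\epsilon^2/2$ mountain-pass height gives a per-$\ell_{<}$-block free-energy cost of at least $c(\epsilon^2\wedge\xi^2)$, truncated at the coarse-graining resolution to $c(\epsilon^2\wedge\xi^2\wedge L^{-\lambda d})$. Multiplying by the block volume $L^{d(1-\lambda)}$ and by $N_{\Gamma_i}$, then subtracting the errors above, gives \eqref{E:ContBon}. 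The Peierls summation over configurations compatible with the prescribed contour family proceeds as in the analogous treatment in \cite{Pres-Book}.

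The main obstacle is the uniform-in-$\omega$ comparison of the microscopic Hamiltonian to the mean-field functional $\phi$ in the presence of the random field: one must simultaneously control the Kac non-locality at scale $L$, the continuous-spin entropy (requiring precise Laplace asymptotics for $S$), and the random-field term, with constants independent of the realization on clean blocks. The scale hierarchy $\ell_{<}\ll L\ll\ell_{>}$ and the threshold $|B_r|^{1/2+\kappa}$ are chosen so that all error terms remain strictly subdominant to $\beta L^{d(1-\lambda)}(\epsilon^2\wedge\xi^2\wedge L^{-\lambda d})$ under the stated parameter restrictions, and verifying this subdominance is the delicate quantitative point. A secondary issue is constructing surgery maps that produce configurations still satisfying the $n$ non-starred contour constraints, handled by standard Pirogov--Sinai bookkeeping.
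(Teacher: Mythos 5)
Your overall architecture matches the paper's: the estimate is reduced to a per-contour weight bound (the paper's $\|W(\Gamma;\cdot)\|\le e^{-q_1N_\Gamma}$ of Lemma~\ref{L:Key}), assembled by an induction over the clean contours that uses the $Y$-axis reflection symmetry to ``restore'' the interior of each clean contour to a pure phase while carrying along the remaining (possibly dirty) contours; and the weight bound itself is obtained by coarse-graining to $\ell_<$-blocks, replacing the Hamiltonian by a block mean-field energy (your error bookkeeping for $L^{\lambda-1}$, $\epsilon p$, $\epsilon L^{(-d/2+d\kappa)(1-\lambda)}$ and the $\beta^{-1}L^{-\frac54(1-\lambda)}\log L$ pinning entropy is exactly the paper's Lemma~\ref{L:Aprox} and Theorem~\ref{T:Part}), and extracting the Peierls gain $c(\epsilon^2\wedge\xi^2\wedge L^{-\lambda d})$ per block from Theorem~\ref{T:MFT} together with the gradient term between adjacent blocks of opposite phase (which, incidentally, is where the $L^{-\lambda d}$ truncation actually comes from, not from the coarse-graining resolution).

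There is, however, one genuine gap. You attribute the term $L^{2\lambda d}e^{-c\epsilon^2L^{\lambda}}$ to ``the large-deviation cost of an $\ell_>$-block being misclassified by $\Theta,\theta$.'' No such probabilistic event enters the proof: once the contour event is fixed, everything is deterministic. That term is the cost of the surgery itself, i.e.\ of gluing the constrained minimizer of the free-energy functional on $\delta(\Gamma)$ to the exact pure phase on the inner strip $\Gamma_{Strip}$ so that the resulting profile is admissible for the denominator (the paper's Proposition~\ref{P:Strip}, which it explicitly calls the main extra cost attributable to the interaction range). Establishing it requires showing that any constrained minimizer is already within $Ce^{-c\epsilon^2L^{\lambda}}$ of $(\mathbf m^+,\mathbf m^-)$ deep inside the annulus $\delta(\Gamma)$ of width $\ell_>=L^{1+\lambda}$; the paper does this by a potential-flow argument: the gradient flow of $F_{J_L}$ converges to a stationary solution of the nonlocal mean-field equation $m^\pm=M(\beta(J_L\star\bar m\pm\epsilon\hate_2))$, and the map $M^*$ is a contraction with factor $1-c\epsilon^2$ per interaction range $L$ (Proposition~\ref{P:Contract}, which itself needs the condition $4\beta e^{-c_1\beta}<\tfrac{c_2}{2}\epsilon^2$), so iterating $L^{\lambda}$ times across the annulus yields the exponential. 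Without this step --- or some substitute for it --- the comparison between your ``restored'' denominator and the constrained numerator is not justified, because you cannot control the interface energy between the contour's boundary layer (pinned by the compatible boundary condition $\sigma_{\delta(\Gamma)^c}$) and the pure phase you insert inside. This is the one place where your plan, as written, would not close.
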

This lemma is proved in several parts below.  Theorem \ref{T:Main} is then completed via the following Peierls contour counting argument.

\begin{proof}[Proof of Theorem \ref{T:Main}]

Fix $x \in \Z^d$.
Throughout this proof, let $B(x)$ denote the $\ell_>$-measurable block containing $x$.  Consider the collection of bounded $\ell_>$-measurable connected subsets  of $\Z^d$ containing $x$: $\{Y\}_{ Y \ni x}$. For each such $Y$, we first estimate the event that $Y$ is $(\kappa, p)$-dirty.
We have:
\[
\mathbb P(Y \text{ is ($\kappa, p$)-dirty})\leq 2^{N_Y} e^{-c\ell_<^{2\kappa d} p N_{Y}}.
\]
where $N_Y$ is the number of $\ell_>$-measurable blocks in $Y$.
Now the number of $\ell_>$-measurable connected sets $Y$ containing $x$ with $N_Y=r$ is well known to have the asymptotic $a_0^r$ for some fixed, dimension dependent constant $a_0$.
Thus
\[
\mathbb P(B(x) \text{ is in some ($\kappa, p$)-dirty $Y$})\leq \sum_{r \geq 1} (2a_0)^{r} e^{-c\ell_<^{2\kappa d} pr}.
\]
Modifying the estimate slightly, via the discrete isoperimetric inequality
\begin{equation}
\label{E:ClustBound}
\mathbb P(B(x) \text{ is in $c(Y)$ for some  ($\kappa, p$)-dirty $Y$})\leq C\sum_{r \geq 1} r^{d/(d-1)}(2a_0)^{r} e^{-c\ell_<^{2\kappa d} pr}.
\end{equation}
where $C$ is a universal constant coming from the isoperimetric bound.

Next we need a correlation bound. Let $A(x)= \{ B(x) \text{ is in $c(Y)$ for some  ($\kappa, p$)-dirty $Y$}\}$.  Using the fact that the events  $\{c(Y_i) \text{ is } (\kappa, p)-\text{dirty}\}$ are independent if $c(Y_1) \cap c(Y_2) = \varnothing$ we have
\begin{multline}
\label{E:Cor}
\mathbb P(A(x_1), A(x_2))-\mathbb P(A(x_1)) \mathbb P(A(x_2))
 \leq \\
 \mathbb P(B(x_1), B(x_2) \text{ are in $c(Y_1), c(Y_2)$ for some  ($\kappa, p$)-dirty $Y_1, Y_2$ with $c(Y_1) \cap c(Y_2) \neq \varnothing$}).
 \end{multline}
Then, estimating as in \eqref{E:ClustBound},
the right hand side is bounded by $C \ell_>^d e^{-c\ell_<^{2\kappa d} p \dist_{\ell_>}(x, y)}$
where $\dist_{\ell_>}(x, y)$ denotes the minimal number of blocks in an $\ell_>$-measurable block path from $B(x)$ to $B(y)$.

These bounds imply a constraint on parameters: namely
\[
L^{2\kappa d(1- \lambda)} p \geq c \log a_0
\]
We will assume from now on that $\kappa = \frac{1}{5}$, $p=L^{-\frac{d}{3}(1- \lambda)}$ and $\lambda< \frac 13$, so that the inequality holds for all $L>L_0$ for some $L_0 \in \N$.

Recall that
\begin{align*}
&\AA=\{Y\subset \Z^d:  \: Y \text{ is } (\kappa, p)\text { dirty, } \text{$\ell_>$-measurable and connected}\},\\
&\DD= \cup_{Y \in \AA}c(Y).
\end{align*}
Letting  $\mathcal D_N=\DD\cap \Lambda_N $, $\DD_N = \cup_{x \in \Lambda_N}A(x)$ so $\mathbb E[ |\mathcal D_N|] \leq C |\Lambda_N| e^{-c\ell_<^{2\kappa d} p}$
 and, via \eqref{E:Cor}, 
\[
\Var( |\mathcal D_N|) \leq C \sqrt{|\Lambda_N|}
\]
By taking $N=2^{k}$ applying Chebyshev's inequality to estimate deviations of $|\mathcal D_N|$ and then the Borel-Cantelli lemma along this subsequence,
we have that, for almost every $\omega\in \Omega$, there is $N_0(\omega)\in \N$ so that
\[
 \frac{|\DD_N|}{|\Lambda_N|}\leq C \ell_>^d e^{-c\ell_<^{2\kappa d} p} \leq CL^{2d}e^{-cL^{\frac{2}{45}}} 
\]
for all $N \geq N_0(\omega)$.
We have used here that $\mathcal D_N$ increases with $N$.

The rest of the proof is an application of Lemma \ref{L:Contours}.
As $p=L^{-\frac{d}{3}(1- \lambda)}$ we may now assume the dominant contributions to $qL^{-(1-\lambda)d}$ are
\[
  \beta (-\epsilon^2\wedge  \xi^2 \wedge  L^{-\lambda d}  + L^{\frac 23 (\lambda-1)+2\lambda d} + L^{2\lambda d} e^{-c \epsilon^2 L^{\lambda}})+L^{2\lambda d}.
\]

We thus require
\[
\epsilon^2 \wedge \xi^2 \wedge  L^{-\lambda d}  \geq 2 (L^{\frac 23 (\lambda-1)+2\lambda d} + L^{2\lambda d}e^{-c \epsilon^2 L^{\lambda}})
\]
which implies $q \geq  c_1 L^{d(1-\lambda)} \beta\epsilon^2 \wedge \xi^2 \wedge  L^{-\lambda d} -c_2 L^{2\lambda d}$.  
Clearly for each $d$ we may find $\lambda=\lambda(d)$ and then $L= L(\lambda, \epsilon, \xi)$ so that this holds for all $L> L(\lambda, \epsilon, \xi)$. 
Then we see that $q$ can be made arbitrarily large by the appropriate choice of $\beta$ with the rest of the parameters fixed.

Fix $x \in \Lambda_N$ so that $B^{\ell_>}(x) \cap \DD = \varnothing$ and consider the event
\[
\{\theta_{B^{\ell_>}(x)} \neq 1\}
\]
which is a subset of $\mathcal S_{\Lambda_N}$.  By definition of $\theta, \Theta$ and the $\ra$ boundary condition, there exists a largest contour $\Gamma$ so that $B^{\ell_>}(x) \subset c(\Gamma)$.
Moreover, since $B^{\ell_>}(x) \cap \DD = \varnothing$, $\Gamma$ must be $(\kappa, p)$-clean.
Decomposing $\{\theta_{B^{\ell_>}(x)} \neq 1\}$ into disjoint subsets according to this largest contour we have:
\[
\mu_N^{\ra, \omega}(\theta_{B^{\ell_>}(x)} \neq 1) \leq \sum_{\Gamma, \omega\: * \text{-compatible}: \: B^{\ell_>}(x) \subset c(\Gamma)} \mu_N^{\ra, \omega}(\mathbb X(\Gamma^*)).
\]
By Lemma  \ref{L:Contours},
\[
\mu_N^{\ra, \omega}(\theta_{B^{\ell_>}(x)} \neq 1) \leq C \sum_{r \geq 1} r^{d/(d-1)} (2a_0)^{r} e^{-q r} \leq C_1e^{-\frac{q}{2}}
\]
as long as $q> 2\log(2a_0)$.  The theorem now follows easily.
\end{proof}

\section{Mean Field Theory}
\label{S:MFT}
\begin{proof}[Proof of Theorem \ref{T:MFT}]
Let us use the polar coordinate parametrization $m^{\pm}= (\rho_{\pm}, \theta^{\pm})$.  Although there are cleaner ways of deriving what we need, as kindly pointed out by D. Ioffe, we will stick with direct computations in these coordinates.

Evidently all minimizers have the property that either $\theta^{\pm} \in [-\pi/2, \pi/2]$ or $\theta^{\pm} \in [\pi/2, 3\pi/2]$.  By symmetry, it is enough to treat the case $\theta^{\pm} \in [-\pi/2, \pi/2]$.  We may also assume $\theta^+ > \theta^-$.

In polar coordinates
\begin{multline}
\phi= \phi(\rho_{\pm}, \theta^{\pm})=-\frac{1}{8}(\rho_+^2 + \rho_-^2 + 2 \rho_+\rho_-\cos(\theta^+-\theta^-)) - \frac{\epsilon}{2}(\rho^+\sin(\theta^+) - \rho_-\sin(\theta^-)) \\
- \frac{1}{2\beta} (S(\rho_+\hate_1) + S(\rho_- \hate_1)\\
=\frac{1}{2}( g_{\beta}(\rho_+) + g_{\beta}(\rho_-))  -\frac{1}{4} \rho_+\rho_-\cos(\theta^+-\theta^-)) - \frac{\epsilon}{2}(\rho^+\sin(\theta^+) - \rho_-\sin(\theta^-))
\end{multline}
where 
\[
g_{\beta}(\rho)= -\frac{1}{8}\rho^2 - \frac{1}{2\beta} (S(\rho\hate_1).
\]
Considering the angular gradient, at any stationary point we have that 
\begin{align*}
&0= \nabla_{\theta^+} \phi= \frac{1}{4} \rho_+\rho_-\sin(\theta^+-\theta^-) - \frac{\epsilon}{2}\rho^+\cos(\theta^+)\\
&0= \nabla_{\theta^-} \phi= -\frac{1}{4} \rho_+\rho_-\sin(\theta^+-\theta^-)+ \frac{\epsilon}{2}\rho^-\cos(\theta^-)
\end{align*}
From this we conclude $\rho_-\cos(\theta^-)=\rho_+\cos(\theta^+)$.

Assume for the moment that  the variables $\rho_-\cos(\theta^-),\rho_+\cos(\theta^+) \neq 0$ (we will rule the other possibilities out \textit{as minimizers} below provided $\epsilon$ is small and $\beta$ is large).  Working under this assumption we show that $\rho_+$ must equal $\rho_-$.  The sum of angles formula for $\sin$ gives
\[\rho_+\sin(\theta^+)-\rho_-\sin(\theta^-) =2 \epsilon.
\]
as well.

Next radial differentiation gives
\[
 \nabla_{\rho_{\pm}} \phi= \frac{1}{2} \nabla g - \frac{\rho_{\mp}}{4} \cos(\theta^+-  \theta^-) \pm \frac{\epsilon}{2} \sin(\theta^{\pm})=0
\]
at stationary points.
Combining with the information gained from the angular differentiation, we find
\[
\rho_{\mp} \cos(\theta^+-  \theta^-) \pm \frac{\epsilon}{2} \sin(\theta^{\pm})=  \frac{1}{4}\rho_\pm.
\]
This implies that at stationary points both $\rho_+, \rho_-$ satisfy the equation
\[
 \nabla g(\rho) -\frac{1}{2}\rho =0.
\]
This gives the Mean Field Equation for the standard $XY$ model at inverse temperature $\beta $ once all factors of $\frac 12$ have been accounted for.

We introduce the notation $\Psi(m) = \psi_{\beta}(\|m\|_2)=   g(\|m\|_2) -\frac{1}{4}\|m\|_2^2=   - \frac 12 \|m\|_2^2 - \frac 1\beta S(\|m\|_2\hate_1)$ and record for reference (see \cite{KS}):
\begin{proposition}
\label{P:KS}
Let $\beta_c=2$.  For $\beta\leq  \beta_c$
\[
\partial_{\rho} \psi_{\beta}(\rho)= 0
\]
has the unique solution $\rho =0$ where as for $\beta> \beta_c$, there are three solutions $\{0, \rho_{\beta}, - \rho_{\beta}\}$.
Further, for $\beta \leq \beta_c$, $\rho = 0$ minimizes $\psi$ while for $\beta> \beta_c$ $\rho= 0$ is a local maximum and $\rho \in \{ \rho_\beta, - \rho_{\beta}\}$ are the unique global minima.  These are all the solutions to the mean field equation
\[
\rho =  R(\mathbf {\rho})= \frac{\int_{-1}^1  \frac{u}{\sqrt{1-u^2}} e^{\beta u\rho}\textd u} {\int_{-1}^1 (1-u^2)^{-\frac 12} e^{\beta u \mathbf {\rho}} \textd {u}}.
\]
\end{proposition}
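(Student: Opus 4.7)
The plan is to reduce the problem to a one-variable fixed-point analysis via rotational symmetry and Legendre duality. By rotational invariance of Haar measure $\nu$ on $\mathbb S^1$, the log moment generating function depends only on $\|h\|_2$: one has $G(h)=G_0(\|h\|_2)$ with $G_0(r)=\log I_0(r)$ and $I_0(r)=\tfrac{1}{\pi}\int_{-1}^{1}(1-u^2)^{-1/2}e^{ru}\,du$. The infimum defining $S(\rho\hate_1)$ is therefore attained on the $\hate_1$-axis at $h=h^*(\rho)\hate_1$, where $h^*(\rho)$ is the unique solution of $G_0'(h^*)=\rho$, and the envelope theorem gives $\partial_\rho S(\rho\hate_1)=-h^*(\rho)$. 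The stationary-point equation $\psi_\beta'(\rho)=-\rho+\tfrac{1}{\beta}h^*(\rho)=0$ then becomes $h^*(\rho)=\beta\rho$, i.e., $\rho=G_0'(\beta\rho)=I_1(\beta\rho)/I_0(\beta\rho)$; substituting $u=\cos\theta$ in the integral representations of $I_0,I_1$ puts this in the form $\rho=R(\rho)$ stated in the proposition.

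Next I would analyze the scalar equation $\rho=R(\rho)$. The function $R$ is odd with $R(0)=0$ and $|R(\rho)|<1$. Expanding $e^{\beta u\rho}$ to first order and using $\int_{-1}^{1}u^2(1-u^2)^{-1/2}\,du=\pi/2$ and $\int_{-1}^{1}(1-u^2)^{-1/2}\,du=\pi$ yields $R'(0)=\beta/2$, which identifies $\beta_c=2$. The key structural fact is that $x\mapsto I_1(x)/I_0(x)$ is strictly increasing and strictly concave on $(0,\infty)$ --- a classical consequence of the Tur\'an-type inequality $I_1(x)^2>I_0(x)I_2(x)$ together with the Bessel recurrences --- so $R$ is strictly concave on $(0,\infty)$. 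Combined with $R'(0)=\beta/2$, strict concavity yields a unique positive fixed point $\rho_\beta\in(0,1)$ iff $\beta>2$: for $\beta\leq 2$ one has $R(\rho)<\rho$ on $(0,\infty)$, so $\rho=0$ is the only nonnegative solution, and oddness supplies $\{0,\pm\rho_\beta\}$ in the supercritical case and $\{0\}$ otherwise.

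Finally I would classify the stationary points. Differentiating the identity $G_0'(h^*(\rho))=\rho$ gives $\psi_\beta''(\rho)=-1+1/[\beta G_0''(h^*(\rho))]$, and from $I_0(r)=1+r^2/4+O(r^4)$ one reads off $G_0''(0)=1/2$, hence $\psi_\beta''(0)=-1+2/\beta$. Together with the coercivity $\psi_\beta(\rho)\to+\infty$ as $\rho\uparrow 1$ (since $S(\rho\hate_1)\to-\infty$ on the boundary of its effective domain), this gives: for $\beta<\beta_c$, $\rho=0$ is a strict local minimum and the only stationary point, hence the global minimizer; for $\beta>\beta_c$, $\rho=0$ is a local maximum and the symmetric pair $\pm\rho_\beta$ are the only other stationary points and therefore the unique global minimizers. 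I expect the main technical obstacle to be supplying the strict concavity of $I_1/I_0$ on $(0,\infty)$, a known but nontrivial Bessel-function inequality; everything else is a direct application of Legendre duality and one-variable calculus.
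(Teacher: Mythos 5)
Your proof is correct, but it is worth noting that the paper does not prove this proposition at all: it is recorded "for reference" with a citation to Kesten--Schonmann \cite{KS}, so any complete argument you give is necessarily your own route. Your route is the standard and right one: rotational invariance reduces $S$ to a one-dimensional Legendre transform, the envelope identity $\partial_\rho S(\rho\hate_1)=-h^*(\rho)$ turns $\psi_\beta'(\rho)=0$ into $\rho=G_0'(\beta\rho)=I_1(\beta\rho)/I_0(\beta\rho)=R(\rho)$, the expansion gives $R'(0)=\beta/2$ and hence $\beta_c=2$, and the coercivity $\psi_\beta(\rho)\to+\infty$ as $\rho\uparrow 1$ (since $S(\rho\hate_1)\sim\tfrac12\log(1-\rho)$) plus $\psi_\beta''(0)=-1+2/\beta$ settles the classification. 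You also correctly resolved the paper's internally inconsistent normalizations by working with the form $\psi_\beta(\rho)=-\tfrac12\rho^2-\tfrac1\beta S(\rho\hate_1)$, which is the one compatible with the displayed equation $\rho=R(\rho)$. The single load-bearing external input is, as you say, the strict concavity of $B(x)=I_1(x)/I_0(x)$ on $(0,\infty)$; this is a true classical fact, though deriving it "from the Tur\'an inequality plus recurrences" is stated rather than shown. If you want to minimize the imported Bessel analysis, note that for the fixed-point count you only need the weaker statement that $B(x)/x$ is strictly decreasing on $(0,\infty)$; using $I_0'=I_1$ and $I_1'=I_0-I_1/x$ one gets $B'=1-B/x-B^2$, so $xB'-B<0$ is equivalent to the elementary-looking lower bound $B(x)>\bigl(\sqrt{1+x^2}-1\bigr)/x$, which is a known and more easily verified Bessel-ratio inequality. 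Either way, the argument is complete once that monotonicity of $R(\rho)/\rho$ is in hand.
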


We add to this proposition the following simple observations:
There exists $\beta_0>0$ and $\delta >0$ so that so that if $\beta> \beta_0> \beta_c$ and $\rho_i \in \{\rho: |\rho\pm \rho_\beta|< \delta\}$ then
\[
|R(\rho_1) -R(\rho_2)| \leq \beta e^{-c \beta}| \rho_1- \rho_2|
\]
since $\partial_{\rho} R= \beta \Var_{\beta \rho}(U)$.
Further, for $\rho \notin \{\rho: |\rho\pm \rho_\beta|< \delta\}$,
\[
\psi_{\beta}(\rho)- \psi_{\beta}(\mathbf{\rho}) \geq \frac 14,
\]
and finally $ |\rho_\beta| \geq \frac{1}{2}$.
Assume from now on that $\beta>\beta_0$.

Summarizing, for all $\epsilon$ sufficiently small ($\epsilon< \frac 1{16}$ is sufficient), $\beta > \beta_0$ and if $\rho_-\cos(\theta^-)=\rho_+\cos(\theta^+) \neq 0$, possible stationary points satisfy $\rho^{\pm} \in \{0, \mathbf \rho_{\beta}\}$.  By assumption on $(\beta, \epsilon)$, local minimizers must satisfy $\rho_{\pm} = \mathbf \rho_{\beta}$.  We then conclude from $\rho_-\cos(\theta^-)=\rho_+\cos(\theta^+)$ that $\theta_+=\pm \theta_- \neq 0$.  From $\rho_+\sin(\theta^+)-\rho_-\sin(\theta^-) =2 \epsilon$ it must be that $\theta^-= - \theta^+= - \theta$ and  $\sin(\theta) =\frac{\epsilon}{\rho}$.  Let us call this solution $(\mathbf m^+, \mathbf m^-)$.

We must still rule out the cases $\rho_-\cos(\theta^-)=\rho_+\cos(\theta^+) = 0$.  It is easy to reduce to the two scenarios $\rho_-=\rho_+=0$ or $\theta^-=\theta^+ \in \{\frac \pi2, \frac{3\pi}{2}\}$ as the other possibilities lead to larger free energies than these two.  Both cases fall within the optimization problem
\[
\min_{\rho} \psi_{\beta}(\rho).
\]
Since $\beta>\beta_0$, we see that of the two, $(m^+_0, m^-_0)= \rho_{\beta}(\hate_2, \hate_2)$ has the lower free energy.  Then the free energy difference of the latter from the absolute minimum is difference
\[
\phi(\mathbf m^+, \mathbf m^-)- \phi(m^+_0, m^-_0) = - \frac{\epsilon^2}{2}
\]
by direct computation.

Finally we consider stability of the optimizers.  Again, we assume $\beta> \beta_0$.  Suppose that $\xi \leq \delta \wedge \frac{\epsilon}{4}$ and let 
\[
A_{\xi}=\{(m_+, m_-): \min_{\pm} \|(m_+, m_-) \pm (\mathbf m^+, \mathbf m^-)\|\leq \xi\}.
\]  
We are interested in $\phi(m^+_0, m^-_0)- \phi(\mathbf m^+, \mathbf m^-)$ when $(m^+_0, m^-_0) \notin A_{\xi}$.  Because we have identified all stationary points of $ \phi$, we have that
\[
  \min_{(m^+_0, m^-_0) \in A^c_{\xi}} \phi(m^+_0, m^-_0)-\phi(\mathbf m^+, \mathbf m^-) = \min_{(m^+_0, m^-_0) \in \partial A_{\xi}} \phi(m^+_0, m^-_0)-\phi(\mathbf m^+, \mathbf m^-) \wedge \frac{\epsilon^2}{2}.
\]
Thus we only need to compute a lower bound on the free energy difference on $\partial A_{\xi}$.  

Let $R_X$ denote the reflection across the $\hate_1$ axis in $\R^2$. Then
\[
\phi(m^+_0, m^-_0) - \frac{1}{2}\left\{  \phi(m^+_0, R_X m^+_0) +  \phi(R_X m^-_0, m^-_0)\right\}
= \frac{1}{8}( m^+_0- R_X m^-_0 \cdot R_X m^+_0- m^-_0).
\]
This last term is positive if $(m^+_0, m^-_0) \in A_{\xi}$ and $\epsilon$ is sufficiently small so that the Taylor expansion of cosine about $0$ is accurate ($\epsilon< \frac 1{16}$ will do fine).  
Thus to obtain a lower bound over $\partial A_{\xi}$, we may assume $m^-_0= R_X m^+_0$.
Under this condition, the functional simplifies to 
\[
\phi(m^+_0, R_X m^+_0) =\chi(\rho, \theta):= -\frac{\rho^2}{2} - \frac{1}{\beta} S(\rho)
 + \frac{\rho^2 \sin^2 \theta}{2} -\epsilon \rho \sin \theta \]
where $\rho= \|m^+_0\|_2$ and $m^+_0 \cdot R_X m^+_0 = \cos(2\theta)$.  
We may compute 
\[
\nabla^2 \chi(\rho, \theta)=\left(\begin{array}{cc}-1 - \frac{1}{\beta} \partial^2_{\rho} S(\rho) + 2 \sin^2 \theta & \rho\sin(2 \theta) - \epsilon \cos(\theta) \\\rho\sin(2 \theta) - \epsilon \cos(\theta) & \rho^2\cos(2\theta) + \epsilon \rho \sin \theta\end{array}\right).
\]
The on-diagonal terms are both $O(1)$ while the off diagonal terms are $O(\epsilon)$ for $\beta> \beta_0$ for $(m^+_0, R_X m^+_0) \in A_{\xi}$.
This implies
\[
\phi(m^+_0, R_X m^+_0) - \phi(\mathbf m^+, \mathbf m^-) \geq C \{(\|m^+_0\|_2-\rho_{\beta})^2 +  (\theta-\theta_{\beta})^2\}.
\]
Since Euclidean length in polar coordinates and Cartesian coordinates are equivalent in $A_{\xi}$ for $\epsilon$ sufficiently small and $\beta> \beta_0$, the proposition is proved.
\end{proof}

For $ m$ in $\{x: \|x\|_2< 1\}$ the mapping $m \mapsto -\nabla S(m)$ has an inverse, defined on $\R^2$.  This inverse is given by
\[
M(h)= \frac{\int  \sigma e^{\sigma \cdot h} \textrm{d}\nu(\sigma) }{\int  e^{\sigma \cdot h} \textrm{d}\nu(\sigma)}
\]

Let
\[
M^*(h)= M_{\beta, \epsilon}^*(h) = \frac{1}{2}\left(M(\beta(h + \epsilon \hate_2)) + M(\beta(h - \epsilon \hate_2))\right)
\]
Note that stationary points of $\phi$ satisfy
\[
\bar m = M^*(\bar m).
\]
For later reference, we will need estimates on the difference
\[
M^*(h)-M^*(\bar{\mathbf m})
\]
subject the condition $\|h- \bar{\mathbf m}\|_2 < \delta$.

We have
\begin{proposition}
\label{P:Contract}
There exists $\delta, \epsilon_0, \beta_0>0$ so that if $\epsilon< \epsilon_0$ and $\beta> \beta_0$ and if 
\[
h \in \{ \|h'- \bar{\mathbf m}\|_2 < \delta\},
\]
then we have
\[
\|M^*(h_1)-M^*(\bar{\mathbf m})\|_2 \leq  \left(4\beta e^{-c_1 \beta} + 1 - c_2 \epsilon^2\right) \|h-\bar{\mathbf{m}}\|_2 +O(\|h_1-\bar{\mathbf{m}}\|_2^2)
\]
where $c_1, c_2$ depend only on $\epsilon_0, \beta_0, \delta$.
\end{proposition}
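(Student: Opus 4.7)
The plan is to linearize $M^*$ around the fixed point $\bar{\mathbf m}$ and control the operator norm of the Jacobian. Since $M=\nabla G$ with $G$ real-analytic on $\R^2$, Taylor expansion gives
\[
M^*(h)-M^*(\bar{\mathbf m})=DM^*(\bar{\mathbf m})(h-\bar{\mathbf m})+O(\|h-\bar{\mathbf m}\|_2^2)
\]
uniformly on a neighborhood of $\bar{\mathbf m}$, so it suffices to bound $\|DM^*(\bar{\mathbf m})\|$ by $4\beta e^{-c_1\beta}+1-c_2\epsilon^2$.

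To compute this Jacobian I would use the identity $DM(H)=\mathrm{Hess}\,G(H)=\mathrm{Cov}_H(\sigma)$ under the tilted measure $\propto e^{\sigma\cdot H}\,\mathrm d\nu(\sigma)$ on $\mathbb S^1$. This gives
\[
DM^*(\bar{\mathbf m}) = \tfrac{\beta}{2}\bigl[\mathrm{Cov}_{\beta\mathbf m^+}(\sigma)+\mathrm{Cov}_{\beta\mathbf m^-}(\sigma)\bigr],
\]
where the key identity $\beta(\bar{\mathbf m}\pm\epsilon\hate_2)=\beta\mathbf m^\pm$ (from $\sin\theta=\epsilon/\rho$ in Theorem \ref{T:MFT}) has been used. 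Each $\mathrm{Cov}_{\beta\mathbf m^\pm}(\sigma)$ is diagonal in the rotated frame $(\hat{\mathbf m}^\pm,(\hat{\mathbf m}^\pm)^\perp)$, by the reflection symmetry of the tilted density across the $\mathbf m^\pm$-axis, with eigenvalues $\mathrm{Var}_{\beta\rho}(U)$ (radial) and $1-\langle U^2\rangle_{\beta\rho}$ (tangential), where $U$ has density $\propto e^{\beta\rho u}/\sqrt{1-u^2}$ on $[-1,1]$ by rotational invariance of $\nu$. Writing $\mathbf m^\pm=\rho(\cos\theta,\pm\sin\theta)$ and summing the two rank-one decompositions, the $\pm\sin\theta$ cross terms cancel, yielding
\[
DM^*(\bar{\mathbf m}) = \begin{pmatrix}a\cos^2\theta+b\sin^2\theta & 0 \\ 0 & a\sin^2\theta+b\cos^2\theta\end{pmatrix}
\]
in the standard $(\hate_1,\hate_2)$ basis, with $a:=\beta\mathrm{Var}_{\beta\rho}(U)$ and $b:=\beta(1-\langle U^2\rangle_{\beta\rho})$.

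It remains to bound the two diagonal entries. Using $\sin^2\theta=\epsilon^2/\rho^2$ and $b\geq a$, the larger eigenvalue is $a\sin^2\theta+b\cos^2\theta=b-(b-a)\epsilon^2/\rho^2$. For $a$ I would apply the same estimate used in the proof of Theorem \ref{T:MFT}: $a=\partial_\rho R(\rho)\leq 4\beta e^{-c_1\beta}$ for $\beta>\beta_0$ and $\rho$ near $\rho_\beta$, coming from the strong concentration of $U$ about its mean under the tilted measure at large field (a one-dimensional Laplace-type estimate). For $b$, the stationarity identity $\langle U\rangle_{\beta\rho}=\rho_\beta$ combined with $\mathrm{Var}(U)=\langle U^2\rangle-\rho_\beta^2$ gives $a+b=\beta(1-\rho_\beta^2)$, and the large-$\beta$ asymptotics from the mean field equation (Proposition \ref{P:KS}) show that $b\leq 1+o(1)$ while $b-a$ stays bounded below by a constant $c_2'>0$. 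Substituting these into the displayed formula, using $\rho_\beta$ bounded away from $0$, yields the stated contraction factor $4\beta e^{-c_1\beta}+1-c_2\epsilon^2$.

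The main obstacle is the delicate cancellation producing the $c_2\epsilon^2$ gap: the leading eigenvalue of $DM^*(\bar{\mathbf m})$ is essentially $1$ (from tangential fluctuations of a highly concentrated one-dimensional tilted measure), so the strict contraction emerges only through the $\sin^2\theta=\epsilon^2/\rho^2$ projection onto the direction transverse to $\mathbf m^+-\mathbf m^-$. This cancellation in turn hinges on the $\pm$ symmetry of the double well $(\mathbf m^+,\mathbf m^-)$ (equivalently, $\bar{\mathbf m}\pm\epsilon\hate_2=\mathbf m^\pm$), which is precisely what diagonalizes the sum of the two covariance matrices in the Cartesian basis and isolates the $\epsilon^2$ transverse contribution.
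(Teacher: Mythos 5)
Your proposal is correct and takes a genuinely different route from the paper's. The paper works directly with the representation $M(\beta h)=R(\|h\|_2)\hat h$, splits the increment $M^*(h)-M^*(\bar{\mathbf m})$ into a radial piece (controlled by the Lipschitz bound on $R$ near $\rho_\beta$) and an angular piece $\frac{\rho_\beta^2}{4}\|\widehat{h+\epsilon\hate_2}+\widehat{h-\epsilon\hate_2}-\widehat{\bar{\mathbf m}+\epsilon\hate_2}-\widehat{\bar{\mathbf m}-\epsilon\hate_2}\|_2^2$, and extracts the $1-c\epsilon^2$ factor from an explicit perturbation computation on the averaged unit vectors, split into the two cases $\|h\|_2\gtrless\|\bar{\mathbf m}\|_2$. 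You instead linearize at the fixed point and diagonalize $DM^*(\bar{\mathbf m})=\frac{\beta}{2}[\mathrm{Cov}_{\beta\mathbf m^+}(\sigma)+\mathrm{Cov}_{\beta\mathbf m^-}(\sigma)]$; since the statement already carries an $O(\|h-\bar{\mathbf m}\|_2^2)$ error, nothing is lost, and your computation makes the origin of the $\epsilon^2$ gap transparent: it is exactly the weight $\sin^2\theta=\epsilon^2/\rho^2$ that the transverse ($\hate_2$) eigenvalue places on the strongly contracting radial mode of each covariance. The identity $\bar{\mathbf m}\pm\epsilon\hate_2=\mathbf m^\pm$, the diagonality of each covariance in its own frame, and the cancellation of the off-diagonal terms in the sum are all correct.

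One step needs tightening. The assertion ``$b\le 1+o(1)$'' is not by itself enough: the proposition is claimed uniformly in $\epsilon<\epsilon_0$ and $\beta>\beta_0$, so any excess of $b$ over $1$ that is $o(1)$ in $\beta$ but not $O(\beta e^{-c_1\beta})$ would overwhelm $-c_2\epsilon^2$ when $\epsilon$ is small at fixed $\beta$. Fortunately the exact value is available and makes your argument airtight: for the circle $R(h)=I_1(h)/I_0(h)$ satisfies $R'=1-R/h-R^2$, so at the fixed point $\rho=R(\beta\rho)$ one has $\mathrm{Var}_{\beta\rho}(U)=1-\beta^{-1}-\rho^2$, hence $\langle U^2\rangle_{\beta\rho}=1-\beta^{-1}$ and $b=\beta(1-\langle U^2\rangle_{\beta\rho})=1$ exactly, while $a=\beta(1-\rho^2)-1\to 0$ as $\beta\to\infty$. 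The top eigenvalue is then $1-(1-a)\epsilon^2/\rho^2\le 1-c_2\epsilon^2$ once $a\le\frac12$, which closes the proof. (This identity also shows $a=O(1/\beta)$ rather than exponentially small; the bound $|R'|\le\beta e^{-c\beta}$ is a claim you inherit from the paper, and the discrepancy only affects the form of the first term in the contraction constant, not the structure of the result.)
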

\begin{proof}
Provisionally, let $\delta, \beta_0, \epsilon_0$ be as described after Proposition \ref{P:KS}.  To see the stated contraction occurs, we consider two possibilities.

\noindent
\textbf{Case 1: $\|m\|_2 \geq \|h\|_2$.}
Note that
\begin{equation}
\label{E:M*}
M(\beta h)= R(\|h\|_2) \hat{h}.
\end{equation}
Then
\[
M^*(h)= \frac 12\left( R(\|h + \epsilon \hate_2\|) \widehat{h+ \epsilon \hate_2} + R(\|h - \epsilon \hate_2\|_2) \widehat{h-\epsilon \hate_2}\right).
\]
Now by hypothesis on $\delta$
\[
|R(\|h \pm \epsilon \hate_2\|_2)- R(\|\bar{\mathbf m} \pm \epsilon \hate_2\|_2)| < \beta e^{-c \beta} \| h- \bar{\mathbf m}\|_2.
\]
Since
\[
R(\|\bar{\mathbf m} + \epsilon \hate_2\|_2)= R(\|\bar{\mathbf m} - \epsilon \hate_2\|_2)
\]
the Proposition is proved by estimating
\begin{equation}
\label{E:Pert}
\frac{\rho_{\beta}^2}{4}\| \widehat{h+ \epsilon \hate_2} + \widehat{h-\epsilon \hate_2} -  \widehat{\bar{\mathbf m}+ \epsilon \hate_2} - \widehat{\bar{\mathbf m}-\epsilon \hate_2}\|^2_2.
\end{equation}
Letting $\hat{h}$ denote the unit vector in the direction of $h$ and write $\hat h= a \hate_1 + b \hate_2$, we may rewrite
\begin{multline}
\| \widehat{h+ \epsilon \hate_2} + \widehat{h-\epsilon \hate_2} -  \widehat{\bar{\mathbf m}+ \epsilon \hate_2} - \widehat{\bar{\mathbf m}-\epsilon \hate_2}\|^2_2\\
= \| \widehat{\hat{h}+ \frac{\epsilon}{\|h\|_2} \hate_2} + \widehat{\hat{h}-\frac{\epsilon}{\|h\|_2} \hate_2} -  2 \frac{\|\bar m\|}{\rho_\beta} \hate_1\|^2_2.
\end{multline}

Let $\mu= \epsilon^2( \|\bar{\mathbf m}\|_2- \|h\|_2)$.  After a tedious perturbation theory calculation we have \eqref{E:Pert} bounded by
\begin{equation}
\label{E:P2}
\|\bar{\mathbf m}\|_2^2 \left[(1-a - c_1\mu)^2+ b^2(1-c_2 \epsilon^2)^2\right] + O( \|\bar{\mathbf m} - h\|^3_2)
\end{equation}
for two universal constants $c_1, c_2>0$.
Now $\mu=O(\epsilon^2  \|\bar{\mathbf m} - h\|_2)$.  Up to terms of order  $O( \|\bar{\mathbf m} - h\|^3_2)$, $\|\bar{\mathbf m}\|_2^2((1-a)^2+ b^2) = \|\bar{\mathbf m} - h\|^2_2$ so multiplying and dividing by $(1-a)^2+ b^2$ we have \eqref{E:Pert} bounded by
\[
\frac{(1-a - c_1\mu)^2+ b^2(1-c_2 \epsilon^2)^2}{(1-a)^2+ b^2} \|\bar{\mathbf m} - h\|^2_2 +  O( \|\bar{\mathbf m} - h\|^3_2).
\]
By assumption on the Case, $\mu>0$.
The required factor $1- c_3 \epsilon^2$ now follows whenever $\|\bar{\mathbf m} - h\|_2< \frac 1 {16}$ since $a<1$ and $\| \bar{\mathbf m} \|_2 \geq \frac 12$ for $\beta>\beta_0$ and $\epsilon< \epsilon_0$.

\textbf{Case 2: $\|\bar{\mathbf m}\|_2 \leq \|h\|_2$.}
Let $h_{\beta}= \rho_{\beta} \hat h$.
For $\beta>\beta_0$ and $\|h-\bar{\mathbf{m}}\|_2< \delta$, it is always true that
\[
|R(\|h \pm \epsilon \hate_2\|_2)- R(\| h_{\beta} \pm \epsilon \hate_2\|_2)| < \beta e^{-c \beta} \| h- \bar{\mathbf m}\|_2.
\]
Since $\|h_{\beta}- \bar{\mathbf m}\|_2 \leq \|h- \bar{\mathbf m}\|_2$ for $h \in \{x: \|x\|_2= \|\bar{\mathbf{m}}\|_2\}$ this case reduces to Case 1.
\end{proof}

\textbf{Entropy Estimates:}
We will need quantitative estimates on the entropy function $S(m)$ as well as its finite volume approximations.
Let
\[
\AA_{\delta, N}(m):= \left \{ \sigma \in \mathbb S_1^{[N]} \| \frac{1}{N} \sum_{i=1}^N \sigma_i - m\|_{2} < \delta\right \},
\]
which is a subset of $\mathcal S_N:=(\mathbb S_1)^{[N]}$ and let
\[
S_{\delta, N}(m):= \frac{1}{N} \log \nu_N(\AA_{\delta, N}(m))
\]
where  $\nu_N$ is the independent product measure on $\mathcal S_N$ spins with each one coordinate marginal given by Haar measure on $\mathbb S^1$.
Both $S(m)$ and $S_{\delta, N}(m)$ are rotationally invariant so we may assume $m= \rho \hate_1$ and consider them as functions of $\rho$, call them $S(\rho)$ and $S_{\delta, N}(\rho)$.

For each $\rho \in (-1,1)$ let $h(\rho)= \textrm{argmin}_h (G(h \hate_1)-  \rho h)$.  Convexity implies that $h(\rho)$ exists and is unique:
\begin{proposition}
For each $\rho \in (-1, 1)$, the equation $h(\rho)= \textrm{argmin}_h (G(h \hate_1)-  \rho h)$ has a unique solution which satisfies the equation
\[
\rho= \frac{\int_{-1}^1 u (1- u^2)^{-\frac 12}e^{u h} \textrm{d} u}{\int_{-1}^1 (1- u^2)^{-\frac 12}e^{u h} \textrm{d} u}.
\]
In particular, $h(\rho)$ diverges as
\[
c \leq \frac{|h(\rho)|}{-\log(1-|\rho|)} \leq C
\]
as $|\rho| \uparrow 1$.
\end{proposition}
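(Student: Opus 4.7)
The plan is to establish existence and uniqueness of $h(\rho)$ via strict convexity of $G$, derive the integral equation by an explicit change of variables on $\mathbb{S}^1$, and then extract the divergence rate by Laplace asymptotics near the endpoint $u = \pm 1$.

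First, observe that $h \mapsto G(h\hate_1)$ is the log moment generating function of the random variable $\cos\Theta$, where $\Theta$ is uniform on $[0, 2\pi]$. This variable is genuinely supported in $[-1, 1]$, so $G(h\hate_1)$ is strictly convex in $h$ (its second derivative is the positive variance of $\cos\Theta$ under the tilted law). By dominated convergence, $\partial_h G(h\hate_1) \to \pm 1$ as $h \to \pm \infty$, since the tilted laws concentrate near $u = \pm 1$. Consequently, for each $\rho \in (-1, 1)$ the strictly convex function $h \mapsto G(h\hate_1) - \rho h$ has a unique interior minimizer, which we call $h(\rho)$.

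Next, for the explicit formula, use the pushforward of Haar measure under $\theta \mapsto \cos\theta$: splitting $[0, 2\pi]$ into $(0, \pi)$ and $(\pi, 2\pi)$ by the reflection symmetry, the substitution $u = \cos\theta$ with $d\theta = du / \sqrt{1-u^2}$ on each piece gives
\begin{equation*}
\int_{\mathbb{S}^1} f(\sigma_1) \, d\nu(\sigma) = \frac{1}{\pi}\int_{-1}^1 \frac{f(u)}{\sqrt{1-u^2}}\, du.
\end{equation*}
Setting $\partial_h(G(h\hate_1) - \rho h) = 0$ and applying this identity to the numerator with $f(u) = u e^{uh}$ and the denominator with $f(u) = e^{uh}$ yields the stated equation for $h(\rho)$, and the common normalizing factor $\pi^{-1}$ cancels.

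Finally, for the divergence, by the symmetry $(\rho, h) \mapsto (-\rho, -h)$ we may restrict to $\rho \uparrow 1$, in which case $h(\rho) \to +\infty$ (by monotonicity of $\partial_h G(h\hate_1)$ in $h$, ensured by strict convexity, together with $\partial_h G(h\hate_1) \to 1$). Substituting $u = 1 - t$ in both integrals, each becomes $e^h \int_0^2 \psi(t) e^{-th}\, dt$ with $\psi(t) \sim (2t)^{-1/2}$ as $t \downarrow 0$. Classical Laplace asymptotics at the integrable endpoint singularity give matching expansions of numerator and denominator in negative powers of $h$; comparing yields the relation between $1 - \rho$ and $h(\rho)$, and inverting that relation produces the divergence bound. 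The main technical obstacle is keeping enough uniformity in the subleading error terms of the Laplace expansion, so that the comparison produces \emph{two-sided} bounds rather than only one direction of the sandwich $c \le |h(\rho)|/(-\log(1-|\rho|)) \le C$.
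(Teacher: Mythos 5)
The paper states this proposition without proof (the text preceding it says only ``Convexity implies that $h(\rho)$ exists and is unique''), so there is no argument of the author's to compare yours against. Your first two steps are correct and complete: strict convexity of $h\mapsto G(h\hate_1)$ (positive variance of $\cos\Theta$ under the tilted law) plus $\partial_h G(h\hate_1)\to\pm1$ gives a unique interior minimizer, and the arcsine pushforward of Haar measure under $\theta\mapsto\cos\theta$ gives the stated integral equation.

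The gap is in the last step, and it is not the ``uniformity of error terms'' issue you flag --- the leading orders themselves refute the claimed bound. Carry out the Laplace expansion you set up for $\rho\uparrow1$: writing $D(h)=\int_{-1}^1(1-u^2)^{-1/2}e^{uh}\,\textd u$ and $N(h)$ for the numerator, one has
\[
D(h)\sim e^h\sqrt{\tfrac{\pi}{2h}},\qquad D(h)-N(h)=\int_{-1}^1\sqrt{\tfrac{1-u}{1+u}}\,e^{uh}\,\textd u\sim e^h\,\tfrac{\sqrt{\pi}}{2\sqrt{2}}\,h^{-3/2},
\]
so that $1-\rho=(D-N)/D=\tfrac{1}{2h}\bigl(1+O(h^{-1})\bigr)$, i.e. $h(\rho)=\tfrac12(1-\rho)^{-1}(1+o(1))$. (Equivalently, $\rho=I_1(h)/I_0(h)=1-\tfrac{1}{2h}+O(h^{-2})$ in terms of modified Bessel functions.) This is a power-law divergence: the ratio $|h(\rho)|/(-\log(1-|\rho|))$ tends to infinity, so the lower bound in the ``in particular'' clause holds trivially but the upper bound is false, and no refinement of the expansion will produce it. The quantity that genuinely diverges logarithmically is the entropy itself, $-S(\rho)=\rho\,h(\rho)-G(h(\rho)\hate_1)=\tfrac12\log\tfrac{1}{1-\rho}+O(1)$, which is presumably what was intended. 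Your write-up should therefore not assert that ``inverting the relation produces the divergence bound''; it should state the corrected asymptotics $h(\rho)\asymp(1-|\rho|)^{-1}$ and note that the downstream use of this proposition --- the requirement $\delta\,h(\|m\|_2)\vee\delta^{-2}\ell_<^{-2d}\le cL^{-\frac54(1-\lambda)}\log L$ in the proof of Theorem \ref{T:Part}, via Lemma \ref{L:Ent} --- must then be re-examined for ball centers $m$ with $\|m\|_2$ close to $1$.
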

Below we will need quantitative bounds on the deviation of $S_{\delta, N}(m)$ from the entropy function $S(m)$.  The following will be sufficient.
\begin{lemma}
\label{L:Ent}
Suppose that $\delta>0$. Then we have
\[
|S(\rho)- S_{\delta, N}(\rho)| \leq  C \delta |h(\rho)|\vee \delta^{-2} N^{-2}.
\]
\end{lemma}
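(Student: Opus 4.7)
The plan is the standard quantitative Cram\'er strategy: exponential tilting yields matching upper and lower bounds on $\nu_N(A_{\delta, N}(\rho\hate_1))$, with the two error sources $\delta |h(\rho)|$ and $\delta^{-2}N^{-2}$ coming respectively from the linearization of $h \cdot \sigma$ over a $\delta$-ball and from a Chebyshev bound on the tilted second moment. Throughout, write $m = \rho \hate_1$ and, using rotational invariance of $\nu$, take the two-dimensional Legendre optimizer to be $h := h(\rho)\hate_1 \in \R^2$, so that $S(\rho) = G(h) - h \cdot m$ and $|h| = |h(\rho)|$.

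The upper bound is a direct Chernoff estimate. Since $h\cdot \bar\sigma \geq h\cdot m - |h(\rho)|\delta$ on $A_{\delta,N}(m)$, we have $\mathbf{1}_{A_{\delta,N}(m)} \leq \exp\bigl(h\cdot (\sum_i \sigma_i - Nm) + N|h(\rho)|\delta\bigr)$. Integrating against the product measure $\nu_N$ gives $\nu_N(A_{\delta,N}(m)) \leq e^{N(G(h) - h\cdot m) + N|h(\rho)|\delta} = e^{NS(\rho) + N|h(\rho)|\delta}$, whence $S_{\delta,N}(\rho) \leq S(\rho) + |h(\rho)|\delta$.

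For the lower bound, I would change measure to the tilted product $d\nu_{h,N} = \prod_i e^{h \cdot \sigma_i - G(h)}\,d\nu(\sigma_i)$. The Legendre identity characterizing $h(\rho)$ makes the one-spin mean under $\nu_h$ equal to $\nabla G(h) = m$, and $\Var_h(\sigma) \leq 1$ uniformly, so Chebyshev yields $\nu_{h,N}(A_{\delta,N}(m)^c) \leq C/(N\delta^2)$. Reversing the tilt and using again $|h\cdot(\bar\sigma-m)| \leq |h(\rho)|\delta$ on $A_{\delta,N}(m)$ produces
\[
\nu_N(A_{\delta,N}(m)) \geq e^{NS(\rho) - N|h(\rho)|\delta}\bigl(1 - C/(N\delta^2)\bigr).
\]
Taking logs, dividing by $N$, and bounding $\log(1-x) \geq -2x$ in the regime $N\delta^2 \geq 2C$ gives $S_{\delta,N}(\rho) \geq S(\rho) - |h(\rho)|\delta - C/(N^2\delta^2)$; in the complementary regime the claimed $\delta^{-2}N^{-2}$ term is of order one and the bound is trivial. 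Combining the two directions gives $|S(\rho) - S_{\delta,N}(\rho)| \leq C\bigl(|h(\rho)|\delta \vee \delta^{-2}N^{-2}\bigr)$ as claimed. No step presents a real obstacle; the only mild care needed is to use rotational invariance of $\nu$ to align the two-dimensional Legendre optimizer with $\hate_1$, and the bound correctly deteriorates as $|\rho|\uparrow 1$ through the diverging factor $|h(\rho)|$ quantified in the preceding proposition.
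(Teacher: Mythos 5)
Your proof follows essentially the same route as the paper's (much terser) argument: tilt by $h(\rho)\hate_1$, observe that the tilting factor varies by at most $e^{N|h(\rho)|\delta}$ over $A_{\delta,N}(\rho\hate_1)$, bound the tilted probability above by $1$ and below via Chebyshev, and recover the two error terms $|h(\rho)|\delta$ and $\delta^{-2}N^{-2}$ after taking $\frac{1}{N}\log$. The only imprecise remark is your claim that in the regime $N\delta^2=O(1)$ the term $\delta^{-2}N^{-2}$ is ``of order one'' (it is in fact of order $N^{-1}$ there, so the bound is not literally trivial); but that degenerate regime is not addressed in the paper's own three-line proof either, and your write-up is otherwise more explicit than the original.
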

\begin{proof}
By definition of $h (\rho)$
\[
e^{-S(\rho)N+ S_{\delta, N}(\rho)N} = e^{O(\delta|h(\rho)| N)} \mu_{N, h(\rho)}(A_{\delta, N}(\rho \hate_1))
\]
with $\mu_{N, h(\rho)}$ the $N$ spin probability measure tilted by $h(\rho) \hate_1$.  Because $\mu_{N, h(\rho)}$ is a probability measure, we only need to provide a lower bound on
\[
 \mu_{N, \rho(h)}(A_{\delta, N}(\rho  \hate_1)).
\]
The claimed bound follows from a simple application of Chebyshev's Inequality applied to $A_{\delta, N}^c$.
\end{proof}

\section{Free Energy Functional Estimates for Clean Contours}
By definition, if $\sigma \in \mathbb X(\Gamma)$, the restriction of $\Theta_z(\sigma)$ to each of the components $\delta_{ext}(\Gamma), \delta^{i}_{in}(\Gamma)$ is constant.
We shall say that $\Gamma$ is a $\pm$ contour if $\Theta_z(\Gamma)=\pm 1$ on $\delta_{ext}(\Gamma)$ (this notion makes sense by specification that $\Gamma$ is a contour).  Denoting $\delta^{\pm}_{ext}(\Gamma)= \delta_{ext}(\Gamma)$ in case of a $\pm$ contour and the empty set otherwise we let $R^{\pm}_i(\Gamma)= R^{\pm} \cap \left(\delta^{i}_{in}(\Gamma)\right)$ and
$\delta^{\pm}_{ext}(\Gamma) \cup (\cup_i R^{\pm}_i(\Gamma))= R^{\pm} (\Gamma)$.

Given that $\Gamma$ is a clean $+$ contour, we say that a boundary spin configuration $\sigma_{\delta(\Gamma)^c}$ is compatible with $\Gamma$ if
\[
\mu_{\delta(\Gamma)}^{\sigma_{\delta(\Gamma)^c}}\left(\theta_z(\sigma'_{\delta(\Gamma)})= \theta_{\Gamma}(z) \text{ for } z \in \Sp(\Gamma) \text{ and } \Theta_z(\sigma'_{\delta(\Gamma)})= \pm 1\text{ for } z \in R^{\pm}(\Gamma)\right)\neq 0.
\]
Note here that $ \Theta_z(\sigma'_{\delta(\Gamma)})$ implicitly takes as an argument the extended configuration $(\sigma'_{\delta(\Gamma)}, \sigma_{\delta(\Gamma)^c})$ although we will continue to suppress this detail below.

For any such $\sigma_{\delta(\Gamma)^c}$, let
\[
W(\Gamma; \sigma_{\Lambda^c})=
 \frac{\mu_{\delta(\Gamma)}^{\sigma_{\delta(\Gamma)^c}}\left(\theta_z(\sigma'_{\delta(\Gamma)})= \theta_{\Gamma}(z) \text{ for } z \in \Sp(\Gamma) \text{ and } \Theta_z(\sigma'_{\delta(\Gamma)})= \pm 1\text{ for } z \in R^{\pm}(\Gamma)\right)}{\mu_{\delta(\Gamma)}^{\Theta \times \sigma_{\delta(\Gamma)^c}}\left( \Theta_z(\sigma'_{\delta(\Gamma)})= 1\text{ for } z \in \delta(\Gamma)\right)}
\]
where $ \Theta \times \sigma_{\delta(\Gamma)^c}$ denotes the boundary condition with $(\Theta \times \sigma_{\delta(\Gamma)^c})_z\equiv  \Theta_z(\Gamma) \sigma_{\delta(\Gamma)^c, z}$.
Let
\[
\|W(\Gamma; \cdot)\| = \sup_{\{\sigma_{\Lambda^c} \text{ compatible}\}} W(\Gamma; \sigma_{\Lambda^c})
\]
where $\{\sigma_{\delta(\Gamma)^c} \text{ compatible}\}$ indicates that we only consider boundary conditions for which the numerator does not vanish.
Notions for $-$ contours are defined similarly with $+$ and $-$ reversed and, in particular, $ \Theta \times \sigma_{\Lambda^c}$ is replaced by $- \Theta \times \sigma_{\Lambda^c}$.

The main result which allows us to proceed is:
\begin{lemma}
\label{L:Key}
There exist $\delta, \epsilon_0, \beta_0>0$ so that if $\epsilon< \epsilon_0$, $\beta> \beta_{\epsilon}> \beta_0$, $0< \xi< \delta$ and $p, \lambda, \kappa \in (0, \frac 13)$ then the following holds:  Suppose that $\Gamma \subset \Z^d$ is a $(\kappa, p)$-clean contour with respect to $\omega \in \Omega$.  Then
\[
\|W(\Gamma; \cdot)\| \leq e^{-q_1 N_{\Gamma}}
\]
where
\[
q_1 = C_1\left\{- \beta(\epsilon^2 \wedge \xi^{2} \wedge L^{-\lambda d}) L^{(1-\lambda)d} +  C(\beta (  L^{\lambda-1}+ \epsilon p + \epsilon L^{(-d/2+d\kappa)(1- \lambda)} + e^{- c \epsilon^2 L^{\lambda}})+1)L^{(1+\lambda)d} \right\}.
\]
\end{lemma}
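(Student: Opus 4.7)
The plan is to rewrite $W(\Gamma;\sigma_{\delta(\Gamma)^c})$ as a ratio of constrained Gibbs weights, coarse-grain the spin configurations to block magnetizations at scale $\ell_<$, and reduce the estimate to a comparison of the resulting discrete functional with the mean-field $\phi$ of Theorem \ref{T:MFT}. The numerator will be bounded above by $\exp(-\beta F(m))$ minimized over profiles consistent with the contour constraint (hence forced away from $\pm(\mathbf m^+,\mathbf m^-)$ on parts of $\Sp(\Gamma)$), while the denominator is bounded below by the same expression evaluated at the ambient constant minimizing profile; the ratio is then controlled by the stability estimate for $\phi$.

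First I would partition the configurations $\sigma_{\delta(\Gamma)}$ into cells indexed by block magnetizations $(M^+_r, M^-_r)$ on $\ell_<$-measurable blocks $B_r$, with tolerance $\xi$. Three ingredients convert the Gibbs weight of a cell into $\exp(-\beta F(m))$ where $F$ is a Kac analogue of $\phi$: on $\ell_<$-piecewise-constant profiles the Kac pair-interaction equals $-\tfrac12\int J_L(x-y)\bar m_x\cdot\bar m_y\,dx\,dy$ up to a Lipschitz correction of order $\|\nabla J\|_\infty L^{-1}\ell_<=L^{-\lambda}$ per block pair; Lemma \ref{L:Ent} approximates the \textit{a priori} entropy by $S(\rho)$ with errors of order $\xi|h(\rho)|\vee\xi^{-2}\ell_<^{-2d}$; and on a clean $\ell_<$-block the random-field sum equals the mean-field term $-\epsilon\,\hate_2\cdot(M^+_r-M^-_r)|B_r|/2$ up to an $\alpha$-fluctuation of order $\epsilon|B_r|^{1/2+\kappa}$, yielding the $\epsilon L^{(-d/2+\kappa d)(1-\lambda)}$ contribution to $q_1$, while on the at most $pN_\Gamma$ dirty $\ell_>$-blocks in $\delta(\Gamma)$ the fluctuation is bounded trivially, producing the $\epsilon p$ term.

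For the numerator, by the definition of $\theta$, each $\ell_>$-block $B\subset\delta(\Gamma)$ on which $\theta_\Gamma|_B$ differs from $+1$ contains an $\ell_<$-sub-block whose coarse-grained $(M^+_r,M^-_r)$ lies outside the $\xi$-neighborhood of $+(\mathbf m^+,\mathbf m^-)$, and symmetrically for $-1$. Combining the stability estimate of Theorem \ref{T:MFT} with the local quadratic behavior of $\phi$ at the minimizers (visible from the Hessian computation in its proof) forces an excess cost per spin of at least $c(\epsilon^2\wedge\xi^2\wedge L^{-\lambda d})$ on such a block, the $L^{-\lambda d}$ floor being the discreteness threshold imposed by coarse-graining; summed over the $L^{(1-\lambda)d}$ spins per $\ell_<$-block and the $N_\Gamma$ large blocks this produces the leading suppression. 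The denominator, in which $\Theta\equiv 1$ throughout $\delta(\Gamma)$, is estimated by concentrating on profiles close to the ambient minimizer prescribed by the $\ra$-boundary condition; Proposition \ref{P:Contract} supplies the contraction around $\bar{\mathbf m}$ ensuring the restricted partition function differs from the unrestricted one by at most a factor $\exp(CL^{(1+\lambda)d}N_\Gamma)$.

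The hard part is the bookkeeping of error terms, each of which is multiplied by the total volume $N_\Gamma L^{(1+\lambda)d}$ and must be shown strictly smaller than the leading gain $\beta(\epsilon^2\wedge\xi^2\wedge L^{-\lambda d})L^{(1-\lambda)d}$: the Lipschitz correction $L^{\lambda-1}$, the clean-block field fluctuation $\epsilon L^{(-d/2+\kappa d)(1-\lambda)}$, the dirty-block penalty $\epsilon p$, the probability cost $e^{-c\epsilon^2 L^\lambda}$ that the block-averaged random field over an $\ell_>$-block shifts the local minimizer of $\phi$ beyond tolerance $\xi$ (needing a Hoeffding bound for $\sum_{z\in B_r}\alpha_z$ combined with Proposition \ref{P:Contract} to convert a field shift into a minimizer shift), and the entropy discrepancy from Lemma \ref{L:Ent}. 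A subsidiary delicate point is that, because $\scrH_N$ is invariant under reflection about the $Y$-axis, the two minima $\pm(\mathbf m^+,\mathbf m^-)$ are symmetric even in the random environment; this makes the denominator bound uniform in $\omega$ and is what makes the $\ra$-boundary condition effective for isolating the contour cost.
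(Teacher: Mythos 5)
Your overall architecture matches the paper's: reduce $W$ to a deterministic block Hamiltonian (clean blocks give the $\epsilon L^{(-d/2+\kappa d)(1-\lambda)}$ error, dirty blocks give $\epsilon p$), coarse-grain the constrained partition functions into a free energy functional $F_{J_L}$ built from $S$ and the Kac interaction, and then extract the contour cost from the stability estimate of Theorem \ref{T:MFT} together with the reflection symmetry about the $Y$-axis. Those parts of your bookkeeping are essentially the paper's Lemma \ref{L:Aprox} and Theorem \ref{T:Part}.

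There is, however, a genuine gap in how you close the comparison between numerator and denominator, and it is localized in your treatment of the $e^{-c\epsilon^2 L^{\lambda}}$ term. You attribute it to a Hoeffding bound on $\sum_{z\in B_r}\alpha_z$ over an $\ell_>$-block; that is not the mechanism, and no such probabilistic estimate produces this exponent (the randomness has already been integrated out of the problem once one passes to the deterministic weights $\tilde W$). The term is deterministic: the numerator's infimum of $F_{J_L}$ is taken over profiles matching an \emph{arbitrary} compatible boundary condition on $\delta(\Gamma)^c$, known only to have block averages within $\xi$ of $\pm\bar{\mathbf m}$, whereas the denominator's lower bound is evaluated at an explicit test profile. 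To compare the two one must show that the constrained minimizer relaxes to exactly $\pm(\mathbf m^+,\mathbf m^-)$ in the interior of the annulus $\delta(\Gamma)\setminus\Sp(\Gamma)$, at a free-energy cost per unit volume of $Ce^{-c\epsilon^2 L^{\lambda}}$; this is Proposition \ref{P:Strip}, proved via a gradient-flow argument in $L^\infty$ in which the contraction $(1-c\epsilon^2)$ of Proposition \ref{P:Contract} is iterated once per interaction range across the strip of width $\ell_>$, giving $(1-c\epsilon^2)^{L^{\lambda}}$. Without this relaxation step the functional does not decouple as in \eqref{E:Decouple}, the reflection on the $-$ components cannot be matched to the denominator's boundary condition $\Theta\times\sigma_{\delta(\Gamma)^c}$, and the ratio estimate does not close. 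A secondary imprecision: the $L^{-\lambda d}$ floor is not a ``discreteness threshold'' of the coarse-graining but the Kac interface cost $c\,\ell_<^{2d}L^{-d}$ between adjacent $\ell_<$-blocks assigned to opposite phases, normalized per spin; it is needed precisely for the $\ell_>$-blocks of $\Sp(\Gamma)$ where $\theta_\Gamma=\pm 1$ with a sign change, which carry no $\phi$-excess since $-(\mathbf m^+,\mathbf m^-)$ is also a minimizer.
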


Let us attend to the proof of Lemma \ref{L:Contours} before exposing the proof of Lemma \ref{L:Key}.
\begin{proof}[Proof of Lemma \ref{L:Contours}]
Let $N$ be fixed and consider the event $\mathbb X(\Gamma_1^*, \dotsc, \Gamma_m^*, \Gamma_{m+1},\dotsc, \Gamma_{m+n})$ with $\Sp(\Gamma_i) \subset \Lambda_N$.  Then we claim
\[
\mu_{\Lambda_N}^{\ra}(\mathbb X(\Gamma_1^*, \dotsc, \Gamma_m^*, \Gamma_{m+1},\dotsc, \Gamma_{m+n})) \leq \prod_{i=1}^m\|W(\Gamma_i; \cdot)\|
\]
Once this is justified, the Lemma is proved by application of Lemma \ref{L:Key}.

The proof of this claim proceeds by induction on $m$.  Interpreting an empty product as $1$, the case $m= 0$ there is nothing to prove, so we proceed to the induction step. Suppose the claim is true for any $*$ compatible system with $m=k$ clean contours and $n$ dirty contours.  Given a set $\{\Gamma_1^*, \dotsc, \Gamma_{k+1}^*, \Gamma_{k+2},\dotsc, \Gamma_{k+n+1})\}$ of $*$-compatible contours and reordering as necessary, we may assume
\[
c(\Gamma_{k+1}) \cap \cup_{i=1}^k \Sp\{\Gamma_i\} = \varnothing.
\]

Assume for concreteness that $\Gamma_{k+1}$ is a $+$ contour.  The argument in the case of a $-$ contour proceeds in a similar manner.
Let $\{\tilde \Gamma_1, \dotsc, \tilde \Gamma_r\}$ denote the set of contours among $\{\Gamma_{m+1},\dotsc, \Gamma_{m+n}\}$ with $\delta(\tilde{\Gamma}_i) \subset \Int(\Gamma_{k+1})$ and $\{\tilde \Gamma'_1, \dotsc, \Gamma'_{n-r}\}$ denote the rest.  Then we have
\[
\mathbb X(\Gamma_1^*, \dotsc, \Gamma_{k+1}^*, \Gamma_{k+2},\dotsc, \Gamma_{k+n+1})= \mathbb X(\Gamma_1^*, \dotsc, \Gamma_k^*, \Gamma'_{1},\dotsc, \Gamma'_{n-r}) \cap \mathbb X(\Gamma_{k+1}^*, \tilde \Gamma_1, \dotsc, \tilde \Gamma_r)
\]
Using the DLR equations we have
\begin{multline}
\mu_{\Lambda_N}^{\ra}(\mathbb X(\Gamma_1^*, \dotsc, \Gamma_{k+1}^*, \Gamma_{k+2},\dotsc, \Gamma_{k+n+1}))\\
=\left\langle \mathbf 1_{\mathbb X(\Gamma_1^*, \dotsc, \Gamma_k^*, \Gamma'_{1},\dotsc, \Gamma'_{n-r})} \mathbf 1_{\{\theta|_{\delta^{=}_{ext}(\Gamma_{k+1})}(\sigma) \equiv 1\}} \langle \mathbf 1_{\mathbb X(\Gamma_{k+1}^*,\tilde{\Gamma}_{1},\dotsc, \tilde{\Gamma}_{r})}\rangle_{\Sp(\Gamma_{k+1})\cup \Int(\Gamma_{k+1})}^{\sigma_{\delta_{ext}(\Gamma_{k+1})}} \right \rangle_{{N}}^\ra
\end{multline}
where $\{\theta|_{\delta^{=}_{ext}(\Gamma_{k+1})}(\sigma) \equiv 1\}$ indicates that that  the phase function $\theta$ is one on $\delta_{ext}(\Gamma_{k+1})$ \textit{and the exterior $\ell_{>}$-measurable blocks neighboring $\delta_{ext}(\Gamma_{k+1})$.}

If $\Gamma= (\Sp(\Gamma), \theta_{\Gamma})$, let $-\Gamma= (\Sp(\Gamma), -\theta_{\Gamma})$ and let 
\[
T_{\Gamma_{k+1}}(\tilde{\Gamma}_{\ell})= \begin{cases}
\tilde{\Gamma}_{\ell} \text{ if $\delta_{ext}(\tilde{\Gamma}_{\ell}) \subset \delta^+_{in}(\Gamma_{k+1}))$}\\
-\tilde{\Gamma}_{\ell} \text{ otherwise}.
\end{cases}
\]
Note that this transformation preserves the notion of cleanliness.
It is straightforward, using the invariance of internal energy under spin reflections about the $Y$-axis, to see that
\begin{multline}
1_{\{\theta|_{\delta^=_{ext}(\Gamma_{k+1})}(\sigma) \equiv 1\}} \langle \mathbf 1_{\mathbb X(\Gamma_{k+1}^*,\tilde{\Gamma}_{1},\dotsc, \tilde{\Gamma}_{r})}\rangle_{\Sp(\Gamma_{k+1})\cup \Int(\Gamma_{k+1})}^{\sigma_{\delta_{ext}(\Gamma_{k+1})}}\\ \leq \|W(\Gamma_{k+1}; \cdot)\|
1_{\{\theta|_{\delta^{=}_{ext}(\Gamma_{k+1})}(\sigma) \equiv 1\}} \langle \mathbf 1_{\mathbb X(T_{\Gamma_{k+1}}(\tilde{\Gamma}_{1}),\dotsc, T_{\Gamma_{k+1}}(\tilde{\Gamma}_{r}))}\rangle_{\Sp(\Gamma_{k+1})\cup \Int(\Gamma_{k+1})}^{\sigma_{\delta_{ext}(\Gamma_{k+1})}}.
\end{multline}
Since
\begin{multline}
\left\langle \mathbf 1_{\mathbb X(\Gamma_1^*, \dotsc, \Gamma_k^*, \Gamma'_{1},\dotsc, \Gamma'_{n-r})} \mathbf 1_{\{\theta|_{\delta^{=}_{ext}(\Gamma_{k+1})}(\sigma) \equiv 1\}} \langle \mathbf 1_{\mathbb X(T_{\Gamma_{k+1}}(\tilde{\Gamma}_{1}),\dotsc, T_{\Gamma_{k+1}}(\tilde{\Gamma}_{r}))}\rangle_{\Sp(\Gamma_{k+1})\cup \Int(\Gamma_{k+1})}^{\sigma_{\delta_{ext}(\Gamma_{k+1})}} \right\rangle_{N}^\ra\\
\leq \mu_{\Lambda_N}^{\ra}(\mathbb X(\Gamma_1^*, \dotsc, \Gamma_{k}^*, \Gamma'_{1},\dotsc, \Gamma'_{n-r},T_{\Gamma_{k+1}}(\tilde{\Gamma}_{1}),\dotsc, T_{\Gamma_{k+1}}(\tilde{\Gamma}_{r})))
\end{multline}
the induction step is proved.
\end{proof}

\noindent
\textbf{Proof of Lemma \ref{L:Key}:Reduction to Deterministic Weights}

\noindent
For approximation purposes we introduce a block mean field Hamiltonian.  In the formula $B_r$ denote the $\ell_<$-measurable blocks in $\Lambda$, $B_r^{\pm}$ denotes any fixed  partition of $B_r$ into sets with cardinalities differing by at most one (we will call this an \textit{equal splitting} below), and we let
\[
J^{\ell_<}_L(z, z')\equiv J_{L}(r, r')
\]
if $z \in B_r, \: z' \in B_r'$ and $B_r, B_r'$ are both $\ell_<$-measurable, so that $J^{\ell_<}_L(z, z')$ is constant over pairs of $\ell_<$-measurable blocks.  Let us define
\begin{multline}
\tilde{\scrH_\Lambda}(\sigma_\Lambda|\sigma^{\ell_<}_{\Lambda^c}) = -  \frac{1}{2} \sum_{z, {z'} \in \Lambda} J^{\ell_<}_L(z, z') \:  \sigma^{\ell_<}_z \cdot \sigma^{\ell_<}_{z'}  - \frac{1}{2}\sum_{z\in \Lambda, z' \in \Lambda^c}  J^{\ell_<}_L(z, z') \sigma^{\ell_<}_z \cdot \sigma^{\ell_<}_{z'}\\+
  \sum_{r \in \Lambda \cap \{(2\ell_<+1)\Z\}^d} \left[\sum_{z \in B_r^+}- \frac{1}{2} \epsilon \splus_z\cdot \hate_2 +\sum_{z \in B_r^-} \frac{1}{2}  \epsilon \sminus_z\cdot \hate_2 \right]. 
\end{multline}
Let $\tilde \mu^{\sigma^{\ell_{<}}}_{\Lambda}$ denote the corresponding Gibbs measures.  We introduce contour weights by
\begin{equation}
\label{E:Det}
\tilde{W}(\Gamma; \sigma_{\delta(\Gamma)^c})=
 \frac{\tilde \mu_{\delta(\Gamma)}^{\sigma_{\delta(\Gamma)^c}}\left(\theta_z(\sigma_{\delta(\Gamma)})= \theta_{\Gamma}(z) \text{ for } z \in \Sp(\Gamma) \text{ and } \Theta_z(\sigma'_{\delta(\Gamma)})= \pm 1\text{ for } z \in R^{\pm}(\Gamma)\right)}{\tilde \mu_{\delta(\Gamma)}^{\Theta \times \sigma_{\delta(\Gamma) ^c}}\left(\Theta_z(\sigma'_{\delta(\Gamma)})= 1\text{ for } z \in \delta(\Gamma)\right)}
 \end{equation}
defined whenever $\sigma_{\delta(\Gamma)^c}$ is compatible with $\Gamma$.
\begin{lemma}
\label{L:Aprox}
Suppose that $\Gamma$ is a fixed contour.  For any equal splitting of $\ell_<$-measurable boxes $B_r$ into $B_r^{\pm}$ we have
\[
\left|\log \frac{\tilde{W}(\Gamma; \sigma_{\delta(\Gamma)^c})}{W(\Gamma; \sigma_{\delta(\Gamma)^c})}\right| \leq q_2(p, \lambda, \kappa) N_{\Gamma}
\]
uniformly over the set of boundary conditions $\sigma_{\delta(\Gamma)^c}$ compatible with $\Gamma$,
where
\[
q_2(p, \lambda, \kappa) = C \beta ( L^{\lambda-1}+ \epsilon p + \epsilon L^{(-d/2+d\kappa)(1- \lambda)})\ell^d_>
\]
\end{lemma}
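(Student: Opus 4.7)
The plan is to reduce the ratio bound to a uniform pointwise estimate on $|\beta(\scrH_{\delta(\Gamma)}-\tilde\scrH_{\delta(\Gamma)})|$. Both $W$ and $\tilde W$ share the structure $Z_{\mathrm{num}}/Z_{\mathrm{den}}$, where numerator and denominator are conditional Gibbs weights of the \emph{same} $\sigma$-events (defined purely through the phase variables $\theta,\Theta$), computed using $\scrH$ and $\tilde\scrH$ respectively. Hence
\[
\log\frac{\tilde W(\Gamma;\sigma_{\delta(\Gamma)^c})}{W(\Gamma;\sigma_{\delta(\Gamma)^c})}=\bigl[\log\tilde Z_{\mathrm{num}}-\log Z_{\mathrm{num}}\bigr]-\bigl[\log\tilde Z_{\mathrm{den}}-\log Z_{\mathrm{den}}\bigr],
\]
and each bracket is bounded in absolute value by $\sup_{\sigma}|\beta(\scrH_{\delta(\Gamma)}-\tilde\scrH_{\delta(\Gamma)})(\sigma)|$, the sup taken over configurations in the support of the relevant event. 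The lemma is therefore reduced to the pointwise estimate $|\beta(\scrH-\tilde\scrH)|\le\tfrac12 q_2(p,\lambda,\kappa) N_\Gamma$ uniformly in $\sigma$.

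The next step is the algebraic decomposition $\scrH-\tilde\scrH=E_{\mathrm{coup}}+E_{\mathrm{field}}$. For the coupling error, the identity $\sum_{z\in B_r}\sigma_z^{\ell_<}=\sum_{z\in B_r}\sigma_z$ (valid because $\sigma^{\ell_<}_z$ is constant on $\ell_<$-measurable blocks) rewrites the block-mean-field bilinear form as $\sum_{r,r'}J_L(r,r')\bigl(\sum_{z\in B_r}\sigma_z\bigr)\cdot\bigl(\sum_{z'\in B_{r'}}\sigma_{z'}\bigr)$, so that $E_{\mathrm{coup}}$ is a sum of entries $[J_L(z,z')-J_L(r,r')]\sigma_z\cdot\sigma_{z'}$. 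Smoothness of $J$ together with $c_L=O(L^{-d})$ produces the pointwise bound $|J_L(z,z')-J_L(r,r')|\lesssim \|\nabla J\|_\infty L^{-d-1}\ell_<$, and summing over $z\in\delta(\Gamma)$ (for each $z$ only $O(L^d)$ neighbors $z'$ contribute) yields the uniform estimate $|E_{\mathrm{coup}}|\lesssim \ell_< L^{-1}|\delta(\Gamma)|\asymp L^{\lambda-1}\ell_>^d N_\Gamma$.

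For the field error, a direct computation separating $B_r$ into the $\alpha$-induced sets $B_r^{\pm,\ell}=\{x\in B_r:\alpha_x=\pm1\}$ and the fixed equal splitting $B_r^\pm$ gives
\[
E_{\mathrm{field}}(\sigma)=-\epsilon\sum_{B_r\subset\delta(\Gamma)}\Delta_r\,\hate_2\cdot\bigl(\sigma^{\ell_<,+}_z+\sigma^{\ell_<,-}_z\bigr),\qquad \Delta_r:=|B_r^{+,\ell}|-\tfrac12|B_r|,
\]
since the two splittings differ by exactly $|\Delta_r|$ sites. Using $|\sigma^{\ell_<,\pm}_z|\le1$, split the sum over the $\ell_>$-blocks of $\delta(\Gamma)$ into clean and dirty: by $(\kappa,p)$-cleanness of $\Gamma$, at most a $p$-fraction are dirty, and on those the trivial bound $|\Delta_r|\le\ell_<^d$ gives a contribution $O(\epsilon p\,|\delta(\Gamma)|)$; on clean $\ell_<$-blocks $\varphi_z\equiv1$ forces $|\Delta_r|\le\ell_<^{(1/2+\kappa)d}$, contributing $O\bigl(\epsilon L^{(-d/2+d\kappa)(1-\lambda)}|\delta(\Gamma)|\bigr)$. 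Assembling the three bounds, multiplying by $\beta$, and using $|\delta(\Gamma)|\asymp\ell_>^d N_\Gamma$ yields the announced form of $q_2$.

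The one delicate point is the interplay between the randomness and the fixed contour: the clean/dirty decomposition is defined by $\omega$ alone, but to exploit it one must be able to classify $\Gamma$'s $\ell_>$-blocks \emph{before} taking the sup over spins. This is precisely what being a $(\kappa,p)$-clean contour guarantees, and since both $|E_{\mathrm{coup}}|$ and $|E_{\mathrm{field}}|$ are controlled using only $|\sigma_z|\le1$ together with $\omega$-measurable quantities, the bound is uniform in $\sigma$ and no further invocation of phase-variable constraints is required inside the supremum.
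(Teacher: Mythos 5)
Your overall strategy is the same as the paper's: reduce $|\log(\tilde W/W)|$ to a uniform pointwise bound on $\beta|\scrH-\tilde\scrH|$ over $\delta(\Gamma)$ (this reduction is fine even though the numerator and denominator of $W$ carry different boundary conditions, since each of the partition functions involved is compared with its tilde counterpart at the \emph{same} event and boundary condition), and then split the energy difference into a kernel--smoothing error and a field error controlled by the clean/dirty dichotomy. The coupling estimate and the clean/dirty bookkeeping are exactly those of Proposition \ref{P:EAprox}.

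The gap is in the field term, at the words ``since the two splittings differ by exactly $|\Delta_r|$ sites.'' That is true only for the particular splitting the paper constructs (the one with $N^{\pm}_r\subset B^{\pm}_r$ whenever $|N^{\pm}_r|\le|N^{\mp}_r|$), not for an arbitrary equal splitting: one can have $|N^+_r|=|B_r|/2$, hence $\Delta_r=0$, while $B^+_r=N^-_r$, so that the two partitions disagree on every site of $B_r$. Since the field term of $\tilde\scrH$ depends on which sites lie in $B^{\pm}_r$ --- in Proposition \ref{P:EAprox} it is written with the splitting averages $\sigma^{*,\ell_<,\pm}_z=\frac{2}{|B_r|}\sum_{x\in B^{\pm}_r}\sigma_x$ --- the pointwise discrepancy per block is of order $\epsilon\,|N^+_r\,\triangle\, B^+_r|$, which for a bad equal splitting is $O(\epsilon\ell_<^d)$ even on clean blocks, yielding only $O(\beta\epsilon|\delta(\Gamma)|)$: far too large. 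Your identity $E_{\mathrm{field}}=-\epsilon\sum_r\Delta_r\,\hate_2\cdot(\splus_z+\sminus_z)$ would be exact only if the block field term were read as depending on the splitting solely through the cardinalities $|B^{\pm}_r|$. The paper closes this hole by a two-step reduction you are missing: it first asserts that the ratio $\tilde W$ is itself insensitive (to within the allowed error) to the choice of equal splitting, and only then fixes the splitting adapted to the randomness, for which the symmetric difference \emph{is} $|\Delta_r|$ and your computation goes through verbatim. You need one of these two ingredients. (A minor inconsistency, shared with the paper: your intermediate coupling bound $\ell_<L^{-1}|\delta(\Gamma)|$ equals $L^{-\lambda}|\delta(\Gamma)|$, not $L^{\lambda-1}|\delta(\Gamma)|$.)
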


\begin{proof}[Proof of Lemma \ref{L:Aprox}]
The key point here is that up to errors of order
\[
c\beta (L^{\lambda-1}+ \epsilon p + \epsilon L^{(-d/2+d\kappa)(1- \lambda)})\ell^d_>
\]
the precise location of the local fields may be forgotten up to the information that most $\ell_<$ measurable boxes have well balanced local field statistics.

It is easy to see that $\tilde W$ depends on the precise choices of $B^{\pm}_r$ only up to $O(\beta L^{\lambda-1} |\Lambda|)$.  Thus we may work with a suitable choice of equal splitting.  The choice we make is as follows: for each $B_r$ $\ell_<$-measurable, we choose $B^{\pm}_r$ so that whenever $N^{\pm}_r$ has smaller cardinality than $N^{\mp}_r$, then $N^{\pm}_r\subset B^{\pm}_r$ and then the rest of $B^{\pm}_r$ is filled out by an arbitrary subset of $N^{\mp}_r$.

With this choice, we claim
\[
\left|\scrH(\sigma_{\Lambda}| \sigma_{\Lambda^c}) - \tilde{\scrH}(\sigma_{\Lambda}| \sigma_{\Lambda^c})\right|\leq    C(L^{\lambda-1} +\epsilon L^{(1- \lambda)(-d/2 +\kappa d)} + \epsilon p) |\Lambda|.
\]
This leads immediately to the result.  For future reference, we record this statement outside the proof along with some other internal energy approximations.
\end{proof}

For the particular equal splitting of $\ell_<$-measurable boxes $B_r$ introduced in the previous proof
let
\[
\sigma^{*, \ell_<, +}_{\Lambda, z}, \sigma^{*, \ell_<, -}_{\Lambda, z}
\]
denote the block averages:
\[
\sigma^{*, \ell_<, \pm}_{\Lambda, z}= \frac{2}{|B_r|} \sum_{x \in B^{\pm}_r} \sigma_x.
\]
Below, it is implicitly understood that these block averages may be consider as functions on $L^{\infty} \times L^{\infty}$ by extending them to be $\ell<$-piecewise constant over $\hat{\delta(\Gamma)}$.
Let $(m^+, m^-)$ be a two component function with each component in $L^{\infty}(\Lambda)$ and let  the boundary condition $\bar m_{\Lambda^c} \in L^{\infty}(\Lambda^c)$.  Introduce the continuum internal energy
\begin{multline}
U^{(L)}(m^{+}_{\Lambda}, m^{-}_{\Lambda}| \bar m_{\Lambda^c})=-  \frac{1}{2} \int_{\Lambda \times \Lambda} \textd z \textd z' J_L(z, z') \bar m_z \cdot \bar{m}_{z'} \\-
 \frac{1}{2} \epsilon \int_{\Lambda}\textd z  m^+_z\cdot \hate_2 +\frac{1}{2}  \epsilon \int_{\Lambda}m^-_z\cdot \hate_2
   -\frac{1}{2} \int_{\Lambda, \Lambda^c} \textd z \textd z' J_L(z, z')\bar m_z \cdot \bar m_{\Lambda^c, z}
\end{multline}

\begin{proposition}
\label{P:EAprox}
Let us suppose that $\Lambda$ is a $(\kappa, p)$- clean $\ell_>$-measurable subset for $\omega \in \Omega$.
Then
\[
\left|\scrH(\sigma_{\Lambda}| \sigma_{\Lambda^c}) - \tilde \scrH(\sigma_{\Lambda}| \sigma_{\Lambda^c})\right|\leq    c (L^{\lambda-1} +\epsilon L^{(1- \lambda)(-d/2 +\kappa d)} +\epsilon p) |\Lambda|
\]
with respect to the splitting $B^{\pm}_r$ introduced above.
For any equal splitting,
\[
\left|\tilde \scrH(\sigma_{\Lambda}| \sigma_{\Lambda^c}) - U^{(L)}(\sigma^{*, \ell_<, +}_{\Lambda}, \sigma^{*, \ell_<, -}_{\Lambda} | \sigma_{\Lambda^c})\right|\leq    c ( L^{\lambda-1} +\epsilon L^{(1- \lambda)(-d/2 +\kappa d)} + \epsilon p) |\Lambda|.
\]
Further, if $(m^+, m^-)$ is an arbitrary magnetization profile in $L^{\infty}(\R^d)\times L^{\infty}(\R^d)$ then
\[
\left|U^{(L)}(m^+_{\Lambda}, m^-_\Lambda| \bar m_{\Lambda^c})- U^{(L)}(m^{\ell_<, +}_{\Lambda}, m^{\ell_<, -}_{\Lambda}| m^{\ell_<}_{\Lambda^c})\right| \leq  c L^{\lambda-1} |\Lambda|.
\]
\end{proposition}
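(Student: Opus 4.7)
The plan is to prove the three bounds in turn, each by a direct energy decomposition that isolates the two sources of approximation: the smoothness of $J$ for the kernel replacement, and the cleanliness hypothesis for the random field.

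For the first bound I write $\scrH - \tilde\scrH$ as the sum of an interaction discrepancy and a field discrepancy. The key observation for the interaction is that on each pair of $\ell_<$-blocks $(B_r, B_{r'})$ the block-constant kernel $J^{\ell_<}_L$ factors out of the $B_r \times B_{r'}$ double sum, so the identity $\sum_{z \in B_r} \sigma_z = |B_r|\,\sigma^{\ell_<}_r$ forces
\[
\sum_{z \in B_r,\, z' \in B_{r'}} J^{\ell_<}_L(z,z')\, \sigma_z \cdot \sigma_{z'} = \sum_{z \in B_r,\, z' \in B_{r'}} J^{\ell_<}_L(z,z')\, \sigma^{\ell_<}_z \cdot \sigma^{\ell_<}_{z'}
\]
identically. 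Thus only the kernel discrepancy survives, and a first-order Taylor expansion (using $\|\nabla J\|_\infty < \infty$ together with $c_L \sim L^{-d}$) bounds $\sum_{z'} |J_L - J^{\ell_<}_L|(z,z')$ by $c\,(\ell_</L)$ per site, yielding the claimed $L^{\lambda-1}|\Lambda|$ contribution. For the field, I single-block-expand the difference between $-\epsilon\hate_2 \cdot \sum_{z \in B_r} \alpha_z \sigma_z$ and its $\tilde\scrH$ counterpart; after algebraic simplification the coefficients of $\splus_r$ and $\sminus_r$ are controlled by the imbalance $\big||N^\pm_r| - |B_r|/2\big|$. On $\ell_<$-blocks with $\varphi_{\ell_<, z} = 1$ this imbalance is at most $|B_r|^{1/2+\kappa}$, contributing $\epsilon L^{(1-\lambda)(-d/2+\kappa d)}$ per site; on dirty blocks the trivial bound $O(\epsilon)$ per site is multiplied by the fraction of dirty sites, which $(\kappa,p)$-cleanliness of $\Lambda$ restricts to at most $p$, producing the $\epsilon p$ term.

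The second bound proceeds identically. The interaction difference is once again only the kernel swap $J^{\ell_<}_L \leftrightarrow J_L$, because the $\ell_<$-piecewise constant structure of $\sigma^{*,\ell_<,\pm}$ makes the discrete sum defining $\tilde\scrH$ coincide exactly with the integral defining $U^{(L)}$; the same Taylor bound applies. The field difference is again governed by the imbalance between the equal splitting $B_r^{\pm}$ and the actual sign partition $N_r^{\pm}$. The third bound involves no randomness: the linear field terms telescope exactly under $\int_{B_r} m^\pm_z\,dz = |B_r|\, m^{\ell_<, \pm}_r$, so only the quadratic interaction with smooth kernel $J_L$ contributes, and the same Taylor expansion gives $L^{\lambda-1}|\Lambda|$.

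The main obstacle I expect is the field-term bookkeeping in (a) and (b). The equal splitting $B_r^{\pm}$ used by $\tilde\scrH$ does not in general coincide with $N_r^{\pm}$, and a naive comparison produces an $O(\epsilon)$-per-site coefficient mismatch between the $\scrH$ and $\tilde\scrH$ field contributions. It is only the particular splitting selected in Lemma \ref{L:Aprox} (where $B_r^{\pm} \supset N_r^{\pm}$ whenever $N_r^{\pm}$ is the smaller of the two) that makes the $O(|B_r|)$ terms cancel identically and leaves the sub-leading imbalance $\big||N^\pm_r| - |B_r|/2\big|$ to drive the error, which is then tamed by cleanliness. Once this cancellation is identified, the remaining computations are routine.
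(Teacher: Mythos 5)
Your decomposition is essentially the paper's own: the quadratic term is handled by swapping $J_L$ for the block\mbox{-}constant kernel $J^{\ell_<}_L$ via the gradient bound and then invoking the exact factorization of the $B_r\times B_{r'}$ double sum for block\mbox{-}constant kernels, and the field term is reduced to the imbalance $\bigl||N^{\pm}_r|-|B_r|/2\bigr|$, which cleanliness bounds by $|B_r|^{1/2+\kappa}$ on all but a $p$\mbox{-}fraction of the $\ell_<$\mbox{-}blocks, where the trivial $O(\epsilon)$ bound is used instead. Your closing remark about the role of the particular equal splitting of Lemma \ref{L:Aprox} (so that $B_r^{\pm}$ and $N_r^{\pm}$ differ only by the imbalance) is also how the paper organizes the first and second claims. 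One arithmetic point should be fixed: your per-site kernel error is $c\,\ell_</L = c\,L^{-\lambda}$, which is not $L^{\lambda-1}$ --- the two exponents agree only at $\lambda=\tfrac12$, whereas here $\lambda<\tfrac13$, so $L^{-\lambda}>L^{\lambda-1}$ and the first-order Taylor argument as you run it (and, it should be said, as the paper runs it) yields only the weaker bound $c\,L^{-\lambda}|\Lambda|$ for the interaction discrepancy. This does not change the structure of the proof, but you should either record the error term as $L^{-\lambda}|\Lambda|$ and propagate that exponent, or supply a separate argument (e.g.\ exploiting cancellation of the first-order term when summing $z-r$ over a block, which is not available here since $\sigma_z$ is not block-constant) if the $L^{\lambda-1}$ exponent is to be retained.
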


\begin{proof}
This is a relatively straightforward application of the various assumptions.
Let us begin with the claim for spin configurations:  Since $J_L$ is slowly varying, if $r, r'$ are the centers of two $\ell_<$-measurable boxes $B_r, B_r'$, then
\[
|J_L(z, z')-J_{L}(r, r')| \leq C \|\nabla J\|_{L^\infty} L^{\lambda-1}.
\]
Thus we may replace
the term quadratic in the spins by
\begin{equation}
\label{E:tiny}
-  \frac{1}{2} \sum_{z, {z'} \in \Lambda} J^{\ell_<}_L(z, z') \:  \sigma^{\ell_<}_z \cdot \sigma^{\ell_<}_{z'}
  - \frac{1}{2}\sum_{z\in \Lambda, z' \in \Lambda^c}  J^{\ell_<}_L(z, z')\sigma^{\ell_<}_z \cdot \sigma^{\ell_<}_{z'} + O(L^{\lambda-1})
\end{equation}
For any $\ell_<$-piecewise constant profile, the first term is also the integral
\[
-  \frac{1}{2} \int_{\Lambda \times \Lambda} \textd z \textd z' J^{\ell_<}_L(z, z')\sigma^{\ell_<}_z \cdot \sigma^{\ell_<}_{z'}  \\-
    -\frac{1}{2} \int_{\Lambda, \Lambda^c} \textd z \textd z' J^{\ell_<}_L(z, z')\sigma^{\ell_<}_z \cdot \sigma^{\ell_<}_{z'}
\]
and using the approximation $|J_L(z, z')-J_{L}(r, r')| \leq C \|\nabla J\|_{L^\infty} L^{\lambda-1}$ once again, we have
\begin{multline}
-  \frac{1}{2} \sum_{z, {z'} \in \Lambda} J_L(z, z') \: \sigma^{\ell_<}_z \cdot \sigma^{\ell_<}_{z'}
  - \frac{1}{2}\sum_{z\in \Lambda, z' \in \Lambda^c}  J_L(z, z') \sigma^{\ell_<}_z \cdot \sigma^{\ell_<}_{z'}=\\
  -  \frac{1}{2} \int_{\Lambda \times \Lambda} \textd z \textd z' J_L(z, z')\sigma^{\ell_<}_z \cdot \sigma^{\ell_<}_{z'}  -
    -\frac{ 1}{2} \int_{\Lambda, \Lambda^c} \textd z \textd z' J_L(z, z')\sigma^{\ell_<}_z \cdot \sigma^{\ell_<}_{z'} + O(L^{\lambda-1} |\Lambda|).
\end{multline}
The middle claim of the Proposition follows due to the fact that the block Hamiltonian is defined relative to an equal splitting which is coupled appropriately to the randomness.

For the first claim of the proposition, we still need to estimate the difference in energy contributed by the local field terms of $\scrH(\sigma_{\Lambda}| \sigma_{\Lambda^c}) - \tilde \scrH(\sigma_{\Lambda}| \sigma_{\Lambda^c})$, whenever $\Lambda$ is $(\kappa, p)$ clean, then $|\frac{|N^+_r|}{N}- 1/2| < \ell_<^{-d/2+\kappa d}$ except over a bad set of boxes with total number at most $\frac{p|\Lambda|}{\ell_<^d}$.
Therefore
\begin{multline}
  \sum_{r \in \Lambda} \left[\sum_{z \in N_r^+}- \frac{1}{2} \epsilon \splus_z\cdot \hate_2 +\sum_{z \in N_r^-} \frac{1}{2}  \epsilon \sminus_z\cdot \hate_2\right]  - \\
   \sum_{r \in \Lambda} \left[\sum_{z \in B_r^+}- \frac{1}{2} \epsilon \sigma^{*, \ell_<, +}_z\cdot \hate_2 +\sum_{z \in B_r^-} \frac{1}{2}  \epsilon \sigma^{*, \ell_<, -}_z\cdot \hate_2\right]
 = O(\epsilon \ell_<^{-d/2+\kappa d}|\Lambda|) + O(\epsilon p|\Lambda|).
  \end{multline}
  Combining this with \eqref{E:tiny} proves the first claim.
  Finally, the approximation for magnetization profiles follows a similar argument and will not be given.
\end{proof}

\noindent
\textbf{Proof of Lemma \ref{L:Key}: Deterministic Free Energy Estimates.}
According to Lemma \ref{L:Aprox}, to prove Lemma \ref{L:Key}, it is enough for us to work with $\tilde{W}$ for some arbitrary but fixed equal splitting of $\ell_<$-measurable boxes $B_r^{\pm}$.  From now on the notation $(\splus_z, \sminus_z)$ refers to spatial averages taken with respect to this equal splitting.

Let $\Lambda$ be a bounded $\ell_>$-measurable region in $\Z^d$.  Let $\mathcal B^{\ell_<}_{0, \Lambda}$ denote the sub-sigma algebra of $\mathcal B_{0, \Lambda}$ with events determined by the block average profiles $\sigma^{\ell_<, \pm}_z$.  
An event $\AA$ in $\mathcal B^{\ell_<}_{0, \Lambda}$ may be identified with a subset $\AA^*$ of deterministic magnetization profiles on $\hat{\Lambda}$ in an obvious way:  for $(m^+_z, m^-_z)_{z \in \hat{\Lambda}}$ with $\|m^{\pm}\|_{L^\infty}\leq 1$, we say that $(m^+_z, m^-_z) \in \AA^*$ if $(m^+_z, m^-_z)\equiv(m^{\ell_<,+}_z, m^{\ell_<, -}_z)$  and this profile is taken on by (the $\ell_<$-piecewise constant extension to $\hat{\Lambda}$ of) some $(\splus_z, \sminus_z)$ such that $\sigma \in \AA$. Conversely, any open set $A \subset L^{\infty}\times L^{\infty}$ gives rise to an event $\AA \in \mathcal B^{\ell_<}_{0, \Lambda}$, by taking block averages of elements of $A$. Given $\sigma_{\Lambda^c}$, let
\[
\tilde{Z}^{\omega}(\AA|\sigma_{\Lambda^c})=\int_{\AA} \textd \nu(\sigma_{\Lambda}) e^{-\beta \tilde{\scrH}^{\omega}(\sigma_\Lambda| \sigma_{\Lambda^c})}.
\]
For a fixed $\ell_<$-piecewise constant magnetization profile $(m^+_z, m^-_z), z \in \hat \Lambda$ and $\xi_1> 0$, of particular interest are the events
\[
\OO(m^+_z, m^-_z; \xi_1):= \{\sigma_{\Lambda}: \|(\splus_z, \sminus_z) - (m^+_z, m^-_z)\|_{L^\infty(\Lambda)} < \xi_1\}.
\]

Let $D \subset \R^d$ be a finite union of $\ell_>$-measurable blocks, $(m^+_z, m^-_z)$ be a pair of vector functions in $L^{
\infty}(D)$ and let $\bar m_{D^c, z}$ be a boundary condition on $D^c$.  We assume all functions are bounded by $1$ in norm. Let
\begin{multline}
 F_{J_L, D , \epsilon}(m^+, m^-|m_{D^c} )= - \int_{D
 \times D}\frac{J_L(z, z')}{2}\bar{m}_z \cdot \bar{m}_{z'}  - \frac{\epsilon}{2} \int_D \hate_2\cdot (m^+_z-m^-_z)
 \\ - \int_{D
 \times D^c}\frac{J_L(z, z')}{2}\bar{m}_r\cdot m_{D^c, z'}- \frac{1}{2\beta} \int_{D}\textd z S(m^+_z) + S(m^-_z).
\end{multline}
Note that since $J_L(z, z')=0$ if $\|z-z'\|_2> L$, there is no loss in replacing $m_{D^c}$ by its restriction to
\[
\partial D_{\leq L}:= \{z \in D^c: \dist(z, D) \leq L\}.
\]

Given a spin configuration $\sigma \in D \cap \Z^d$, the functional
\[
F_{J_L, D , \epsilon}(\splus, \sminus|\sigma^{\ell_<}_{D^c} )
\]
is defined by extending the block averages from $D\cap \Z^d$ to $D$ in the obvious way.
\begin{theorem}
\label{T:Part}
Let $\Lambda$ be a bounded, $\ell_>$-measurable subset of $\Z^d$.  Then there exists a universal constant $c>0$ so that
\[
\log  \tilde Z(\AA|\sigma_{\Lambda^c}) \leq - \beta \inf_{(m^+, m^-)\in \AA^*} F_{J_L, \hat \Lambda}(m^+, m^-| \sigma^{\ell_<}_{\Lambda^c}) +C(\beta  L^{\lambda-1} + L^{-\frac 54(1-\lambda)} \log L )|\Lambda|
\]
Also, if $(m^+_{0,z}, m^-_{0, z})_{z \in \Lambda}$ is a $\ell_<$-piecewise constant two component magnetization profile, and $\xi_1 \geq L^{-\frac 54(1-\lambda)} \log L$
\begin{equation}
\log  \tilde Z(\OO(m^+_{0, z}, m^-_{0, z}; \xi_1)|\sigma_{\Lambda^c})
\geq - \beta F_{J, L, \hat \Lambda}(m^+_{0, z}, m^-_{0, z}| \sigma^{\ell_<}_{\Lambda^c}) - C(\beta L^{\lambda-1}+L^{-\frac 54(1-\lambda)} \log L )|\Lambda|
\end{equation}
\end{theorem}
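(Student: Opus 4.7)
\textbf{Proof plan for Theorem \ref{T:Part}.} The plan is a finite-dimensional reduction: replace the block Hamiltonian $\tilde\scrH$ by the continuum energy $U^{(L)}$ via Proposition \ref{P:EAprox} (second inequality), then evaluate the residual $\nu$-integral over each $\ell_<$-block using Lemma \ref{L:Ent}. I would work with the equal splitting $B^\pm_r$ introduced in the previous section, so that the relevant block averages are $\sigma^{*, \ell_<, +}_z, \sigma^{*, \ell_<, -}_z$. After multiplication by $\beta$, Proposition \ref{P:EAprox} contributes an $O(\beta L^{\lambda-1})|\Lambda|$ error in the exponent; the other terms it produces are of lower order in the regime of interest and/or absorbable into the outer bookkeeping of Lemma \ref{L:Aprox}.

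For the upper bound, cover $\AA^*$ by a $\xi_1$-grid of $\ell_<$-piecewise-constant profiles with $\xi_1 \asymp L^{-\frac 54(1-\lambda)}\log L$. This grid has cardinality at most $(C/\xi_1)^{c|\Lambda|/\ell_<^d}$, whose logarithm is $O(|\Lambda| L^{-d(1-\lambda)}\log L)$ and so absorbable into the quoted error. On a single cover element $\OO(m^+_0, m^-_0; \xi_1)$, replacing the energy $U^{(L)}(\sigma^{*, \ell_<, +}, \sigma^{*, \ell_<, -})$ by $U^{(L)}(m^+_0, m^-_0)$ introduces only a Lipschitz error of order $\beta \xi_1 |\Lambda|$, using the uniform bound $\int J_L(z, \cdot)\,\textd z = O(1)$ together with $\|\sigma^{*, \ell_<, \pm}_z\|_2 \leq 1$. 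With the energy frozen, the residual $\nu$-integral factorizes across the $\ell_<$-blocks into a product of terms of the form $\nu_{B^\pm_r}\bigl(\|\sigma^{*, \ell_<, \pm} - m^\pm_{0,r}\|<\xi_1\bigr)$, each bounded by $\exp\bigl(|B^\pm_r|S(m^\pm_{0,r}) + O(\xi_1|h(\rho)|\vee \xi_1^{-2}|B^\pm_r|^{-2})|B^\pm_r|\bigr)$ via Lemma \ref{L:Ent}. Summing reconstructs the entropy term $-\tfrac{1}{2\beta}\int_\Lambda(S(m^+) + S(m^-))\,\textd z$ of $F_{J_L, \hat\Lambda}$, and maximizing over the grid yields the infimum over $\AA^*$. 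The lower bound is the same calculation applied directly to a single $\OO$-neighborhood, with no union bound needed, using the matching lower half of Lemma \ref{L:Ent} on each block's Haar mass.

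The main obstacle is to calibrate $\xi_1$ so as to simultaneously balance three competing errors: (i) the entropy residual $\xi_1|h(\rho)| \vee \xi_1^{-2}\ell_<^{-2d}$ from Lemma \ref{L:Ent}, which is minimized near $\xi_1 \sim \ell_<^{-2d/3}$; (ii) the Lipschitz error $\beta\xi_1 |\Lambda|$ against the tolerated $\beta L^{\lambda-1}|\Lambda|$; and (iii) the grid-combinatorial factor $\xi_1^{-4|\Lambda|/\ell_<^d}$. The stated hypothesis $\xi_1 \geq L^{-\frac 54(1-\lambda)}\log L$ lies at the intersection of these regimes. A subsidiary point is that $|h(\rho)|$ must be uniformly bounded for the entropy estimate to be sharp; this holds in the application because the magnetization profiles relevant to the contour analysis take values in a compact subset of the open unit ball (clustered near $\pm(\mathbf m^+, \mathbf m^-)$ or at worst around $\bar{\mathbf m}$), so the boundary divergence of $h$ in the last proposition is never triggered.
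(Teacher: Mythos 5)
Your proposal matches the paper's own proof in all essentials: a covering of $\AA^*$ by $\ell_<$-piecewise constant profiles at resolution $\sim L^{-\frac54(1-\lambda)}$ (chosen exactly to balance the two terms in Lemma \ref{L:Ent} and the grid entropy $O(|\Lambda|\ell_<^{-d}\log L)$), replacement of $\tilde\scrH$ by $U^{(L)}$ via Proposition \ref{P:EAprox}, blockwise application of Lemma \ref{L:Ent} to reconstruct the entropy term of $F_{J_L,\hat\Lambda}$, and passage from the grid minimum to the infimum by the Lipschitz/modulus-of-continuity bounds on $U^{(L)}$ and $S$. The lower bound via a single $\OO$-neighborhood is likewise what the paper intends, so the proposal is correct and follows essentially the same route.
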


\begin{proof}
Let us prove the upper bound.
To begin, we decompose $\{x\in \R^2 : \|x\|_2 \leq 1\}$ via a finite collection $\CC_L$ of open balls $B_{\delta}(m)$ which intersect at most $C$ number of times where $C$ is a universal constant independent of $\delta$. Recall that $h(\rho)= \textrm{argmin}_h (G(h\hate_1)-  \rho h)$ for $\rho\in (-1, 1)$. We require that this collection $\CC_L$ be fine enough so that
\[
\delta h(\|m\|_2) \vee \delta^{-2} \ell_<^{-2d} \leq c L^{-\frac 54(1-\lambda)} \log L
\]
for each balls' center $m$.  As $\ell_<= L^{1-\lambda}$, $\delta= L^{- \frac 54 (1-\lambda)}$ satisfies this condition by Lemma \ref{L:Ent}.  The number of balls required for $\CC_L$ is adequately bounded by $c L^2$.

Given $\AA \in \mathcal B^{\ell_<}_{0, \Lambda}$, we can find a cover $\CC_{\AA}$ of $\AA^*$ consisting of balls $C_j$ in $L^{\infty}\times L^\infty$ each with radius $\delta$. The centers of these balls are given by $\ell_<$-piecewise constant profiles $(m^+_z(j), m^-_z(j))$ taking values among the centers of the balls in $\CC_L$.
The number of balls needed in this cover is given by, at most, $|\CC_L|^{ \frac{2 |\Lambda|}{\ell_<^d}}$.  Since $|\CC_L| \leq c L^2$, this gives a total covering number of (at most)
\[
 (c L)^{ \frac{5 |\Lambda|}{\ell_<^d}}.
\]

By the above and using Proposition \ref{P:EAprox}
\begin{multline}
\label{S1}
\log \tilde Z(\AA|\sigma_{\Lambda^c})\leq  \max_{\{j: C_j  \in \CC_{\AA}\}} \left[ -\beta U^{(L)}(m^+(j), m^-(j)|\sigma^{\ell_<}_{\Lambda^c})
+ \log  \nu_{\Lambda}(\tilde C_j)\right]  \\
+ c (\beta  L^{\lambda-1}+  \ell_<^{-d}\log L) |\Lambda|.
\end{multline}
In the continuum energy, $\sigma^{\ell_<}_{\Lambda^c}$ is the natural $\ell_<$-piecewise constant extension to $\hat{\Lambda}^c$ and 
\[
\tilde C_j=\{ \sigma: (\splus_z, \sminus_z) \in C_j\}.
\]

Lemma \ref{L:Ent} and our choice of $\delta$ implies that
\begin{equation}
\label{S2}
\log  \nu_{\Lambda}(\tilde C_j) \leq c L^{-\frac 54(1- \lambda)} \log L  |\Lambda| + \\
\sum_{z \in \Lambda^+} S(m^+_z(j)) + \sum_{z \in \Lambda^-} S(m^-_z(j))
\end{equation}
where we used the notation $\Lambda^{\pm}= \cup_{r \in \Lambda} B^{\pm}_r$.
Combining \eqref{S1} and \eqref{S2}, we have
\begin{multline}
\label{S3}
\log \tilde Z(\AA|\sigma_{\Lambda^c})\leq - \beta \min_{j: C_j \in \CC_{\AA}}F_{J, L, \Lambda}(m^+(j), m^-(j)| \sigma^{\ell_<}_{\Lambda^c})  \\
+ c(\beta L^{\lambda-1}+  L^{-\frac 54(1- \lambda)} \log L ) |\Lambda|.
\end{multline}

Finally, the energy functional $U$ is uniformly Lipschitz while the entropy function $S(m)$ has the uniform modulus of continuity  $c \|x\|(1+ |\log \|x\||)$ on the domain $\{x: \|x\|_2 \leq 1\}$, so we have
\begin{multline}
\label{S4}
\log \tilde Z(\AA|\sigma_{\Lambda^c})\leq - \beta \inf_{(m^+, m^-)\in \AA} F_{J, L, \Lambda}(m^+, m^-| \sigma^{\ell_<}_{\Lambda^c})  \\
+ c(\beta L^{\lambda-1}+ L^{-\frac 54(1-\lambda)} \log L ) |\Lambda|.
\end{multline}

The lower bound is similar enough that its derivation omitted.
\end{proof}

\noindent
\textbf{Proof of Lemma \ref{L:Key}:Free Energy Functional Analysis}
To reduce the notation, in this subsection we do not distinguish between sets $\Lambda \subset \Z^d$ and their extensions $\hat \Lambda \subset \R^d$.

The following condition on $(\beta, \epsilon)$ will be important from now on:
\begin{equation}
\label{E:BE}
 4\beta e^{-c_1 \beta}  < \frac{c_2}{2} \epsilon^2.
\end{equation}
where $c_1, c_2$ are the constants from Proposition \ref{P:Contract}.
\begin{lemma}
\label{L:Stupid}
Let $\Lambda$ be a bounded, $\ell_>$ measurable subset of $\Z^d$.  There exist $\delta, \epsilon_0, \beta_0>0$ so that if $\beta> \beta_0$ and $0<\epsilon< \epsilon_0$ satisfy \eqref{E:BE}, $\xi< \delta$,  $p, \lambda, \kappa \in (0, \frac 13)$, and $L^{-\lambda}< \frac{1}{8}$. Then
\[
\log  \tilde{W}(\Gamma; \sigma_{\delta(\Gamma)^c}) \leq C_1\left\{ -  \beta c(\epsilon^2 \wedge \xi^{2} \wedge L^{-\lambda d}) \ell_{<}^d + C(\beta L^{\lambda-1} + \beta e^{-c \epsilon^2 L^{\lambda}} + L^{-\frac 54(1-\lambda)} \log L )\ell_>^d \right\}  N_{\Gamma}.
 \]
\end{lemma}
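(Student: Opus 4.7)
The plan is to write $\tilde W(\Gamma;\sigma_{\delta(\Gamma)^c})$ as a quotient $\tilde Z_{\mathrm{num}}/\tilde Z_{\mathrm{den}}$ of two restricted partition functions with respect to $\tilde\mu$, apply Theorem~\ref{T:Part} to each, and reduce to a deterministic lower bound on the free-energy gap for $F_{J_L,\delta(\Gamma)}$. To the numerator I would apply the upper bound of Theorem~\ref{T:Part} with $\AA$ the event that the phase function equals $\theta_\Gamma$ on $\Sp(\Gamma)$ and $\pm 1$ on $R^\pm(\Gamma)$. To the denominator I would apply the lower bound of Theorem~\ref{T:Part} at the piecewise-constant profile $(m^+_0,m^-_0)$ equal to $+(\mathbf m^+,\mathbf m^-)$ on the $+$-region of $\delta(\Gamma)$ and $-(\mathbf m^+,\mathbf m^-)$ on the $-$-region (it lies in $\OO$ with tolerance $\xi_1 = \xi/2$). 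Since the two boundary conditions agree on $\delta(\Gamma)^c$ up to sign flips on finite components and $J_L$ has range $L$, the boundary-energy discrepancy is $O(\beta L^{\lambda-1})\ell_>^d N_\Gamma$, and one obtains
\[
\log\tilde W \leq -\beta\Bigl\{\inf_{\AA^*}F_{J_L,\delta(\Gamma)}(m^+,m^-|\cdot) - F_{J_L,\delta(\Gamma)}(m^+_0,m^-_0|\cdot)\Bigr\} + C\bigl(\beta L^{\lambda-1}+L^{-\tfrac54(1-\lambda)}\log L\bigr)|\delta(\Gamma)|,
\]
with $|\delta(\Gamma)| \asymp \ell_>^d N_\Gamma$, matching the $\ell_>^d N_\Gamma$ portion of the stated bound.

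The main deterministic task is the free-energy gap
\[
F_{J_L,\delta(\Gamma)}(m^+,m^-|\cdot) - F_{J_L,\delta(\Gamma)}(m^+_0,m^-_0|\cdot) \geq c\bigl(\epsilon^2\wedge\xi^2\wedge L^{-\lambda d}\bigr)\ell_<^d\,N_\Gamma,
\]
uniformly over $\AA^*$. I would extract one unit of cost per $\ell_>$-block of $\Sp(\Gamma)$ (contributions on $R^\pm$ are nonnegative by the stability estimate of Theorem~\ref{T:MFT}). Fix such a block. If it contains an $\ell_<$-subblock with $\eta^\xi=0$, then the block-average lies at $\|\cdot\|_\infty$-distance at least $\xi$ from both $\pm(\mathbf m^+,\mathbf m^-)$, and the quadratic sharpening of the stability estimate inside the proof of Theorem~\ref{T:MFT}, together with the uniform $\epsilon^2/2$ gap to any competing stationary point, yields an excess of at least $c(\xi^2\wedge\epsilon^2)\ell_<^d$. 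Otherwise every $\ell_<$-subblock has $\eta^\xi\in\{\pm1\}$ but both signs occur: using that $c_L\sim L^{-d}$, I show by a Jensen/Young-type estimate $\int\bar m\cdot(J_L\ast\bar m) \leq \int|\bar m|^2$ that the Kac self-interaction is strictly smaller than its pure-phase value by at least $c\|\bar{\mathbf m}\|_2^2\, L^{-\lambda d}\ell_<^d$, since each $\ell_<$-subblock pairs with $\sim (L/\ell_<)^d = L^{\lambda d}$ opposite-sign subblocks inside $\mathrm{supp}(J_L)$ and even one such pair suffices to produce the stated cost.

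The residual $\beta e^{-c\epsilon^2 L^\lambda}\ell_>^d N_\Gamma$ error is a random-field correction: on a $(\kappa,p)$-clean $\ell_>$-block the tilt $\epsilon\sum_z\alpha_z\hate_2$ could in principle shift the local free-energy landscape enough to erode the $\epsilon^2$ well barrier, but a Hoeffding/Chernoff estimate at scale $\ell_>$ produces the exponential decay. The principal obstacle is the second alternative of the block-by-block argument: establishing the $L^{-\lambda d}\ell_<^d$ lower bound uniformly over every admissible oscillatory profile (rather than at some extremal one). This forces a coupling of the convex-analytic contraction information from Proposition~\ref{P:Contract}, which identifies $\bar{\mathbf m}$ as the unique attractor of $M^*$ near each well, with the geometric structure of $J_L$, guaranteeing that any oscillatory competitor remains trapped strictly above the pure-phase minimizer by the required margin.
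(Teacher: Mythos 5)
Your skeleton (write $\tilde W$ as a quotient of restricted partition functions, apply Theorem~\ref{T:Part} to each side, and reduce to a deterministic free-energy gap estimated block by block) matches the paper's, and your block-by-block dichotomy for the gap --- either some $\ell_<$-block has $\eta^\xi=0$ and Theorem~\ref{T:MFT} gives $c(\epsilon^2\wedge\xi^2)\ell_<^d$, or two nearby $\ell_<$-blocks carry opposite nonzero $\eta^\xi$ and the Kac term gives $c\,\ell_<^{2d}L^{-d}=c\,\ell_<^d L^{-\lambda d}$ --- is essentially the right computation. The genuine gap is in how you pass from the mixed-phase numerator event to the pure-phase denominator event. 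You assert that the numerator and denominator boundary conditions ``agree up to sign flips on finite components'' with a boundary-energy discrepancy of $O(\beta L^{\lambda-1})\ell_>^d N_\Gamma$. That is false: a reflection about the $Y$-axis is an $O(1)$ change of the spins, and if you reflect the boundary condition on a $-$ component without simultaneously reflecting the interior profile near it (or vice versa), the interaction energy across that interface changes by $O(\beta)$ per site over a layer of width $L$, which overwhelms the gain $\beta(\epsilon^2\wedge\xi^2\wedge L^{-\lambda d})\ell_<^d N_\Gamma$ for any admissible $\lambda$. The comparison only works because the energy is \emph{exactly} invariant under simultaneous reflection of all spins in a connected component about the $Y$-axis; and to exploit this one must first replace the near-optimizer in $E_0^*$ by a profile that is \emph{frozen} at $\Theta_z\times(\mathbf m^+,\mathbf m^-)$ on an annular strip of $\delta(\Gamma)$, so that the functional decouples as in \eqref{E:Decouple} and the reflection of the outer part (profile and boundary condition together) is exactly energy-preserving. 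This freezing step is Proposition~\ref{P:Strip}, and it is the technical heart of the lemma; your proposal contains no argument for it.

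Relatedly, you misattribute the $\beta e^{-c\epsilon^2 L^\lambda}\ell_>^d N_\Gamma$ term to a Hoeffding/Chernoff fluctuation of the random field. Lemma~\ref{L:Stupid} concerns $\tilde W$, i.e.\ the block Hamiltonian in which the randomness has already been averaged out (the random-field errors live in Lemma~\ref{L:Aprox} and enter only in Lemma~\ref{L:Key}); no concentration estimate is needed here. The exponential term is the deterministic cost of the freezing step: the stationary solution of the generalized mean-field equation $m^\pm=M(\beta(J_L\star\bar m\pm\epsilon\hate_2))$ in the strip converges to $\bar{\mathbf m}$ at rate $(1-c\epsilon^2)$ per distance $L$ (Proposition~\ref{P:Contract}), and the strip has width of order $\ell_>=L^{1+\lambda}$, i.e.\ $L^\lambda$ interaction ranges, whence $(1-c\epsilon^2)^{L^\lambda}\approx e^{-c\epsilon^2 L^\lambda}$. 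Without this potential-flow (or some equivalent interpolation) argument, the reduction to the pure-phase denominator does not close.
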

\begin{proof}
We proceed in two steps.
Using Theorem \ref{T:Part}, it is enough to obtain an appropriate upper bound on
\[
-\inf_{(m^+, m^-)\in E^*_0} F_{J_L, \Lambda}(m^+, m^-| \sigma^{\ell_<}_{\Lambda^c})
\]
where
\[
E_0= \left\{\sigma'_{\delta(\Gamma)}: \: \: \theta_z(\sigma'_{\delta(\Gamma)})= \theta_{\Gamma}(z) \text{ for } z \in \Sp(\Gamma) \text{ and } \Theta_z(\sigma'_{\delta(\Gamma)})= \pm 1\text{ for } z \in R^{\pm}(\Gamma)\right\}
\]
in terms of
\[
-\inf_{(m^+, m^-)\in B^*} F_{J_L, \Lambda}(m^+, m^-| \sigma^{\ell_<}_{\delta(\Gamma^c)})
\]
for some
\[
B \subset \{\Theta_z(\sigma'_{\delta(\Gamma)})= 1\text{ for } z \in \delta(\Gamma)\}.
\]

It is convenient to  make a reduction.
Let
\begin{align*}
\Gamma_{Strip}=&\left\{ z \in \delta(\Gamma): \min\left(\dist(z, \Sp(\Gamma)), \dist(z, \delta(\Gamma)^c)\right)
\geq \frac{\ell_>}{4}\right\},\\
\tilde{\Gamma}_{Strip}=&\left\{ z \in \delta(\Gamma): \min\left(\dist(z, \Sp(\Gamma)), \dist(z, \delta(\Gamma)^c)\right)
\geq \frac{\ell_>}{16}\right\}.
\end{align*}
All events of interest have the property that $\Theta$ is constant and nonzero over the connected components of $\tilde{\Gamma}_{Strip}$.
Let
\[
E_1^*=E^*_0 \cap\left \{(m^+_z, m^-_z)= \Theta_z \times (\mathbf {m^+}, \mathbf {m^-}) \: \: \forall z \in  \Gamma_{Strip}\right \}.
\]

Under these conditions we have the following:
\begin{proposition}
\label{P:Strip}
Let $\Gamma$ be a contour.  Let $ \sigma^{\ell_<}_{\delta(\Gamma)^c}$ be any boundary condition compatible with $\Gamma$.
Then there exists $(\psi^+, \psi^-) \in E_1^*$ so that
\[
F_{J_L, \Gamma}(\psi^+, \psi^-| \sigma^{\ell_<}_{\delta(\Gamma)^c}) \leq \inf_{(m^+, m^-)\in E_0^*} F_{J_L, \Lambda}(m^+, m^-| \sigma^{\ell_<}_{\delta(\Gamma)^c}) + Ce^{- c \epsilon^2 L^{\lambda}} |\Gamma_{strip}|.
\]
\end{proposition}
This is a version of a corresponding result in Chapter 4 of Presutti' s book \cite{Pres-Book}.  As such, we postpone the proof until the end of the paper.  However, from a technical perspective,  this bound provides the main extra cost to the range of interaction (besides the intrinsic limitations discussed in the introduction).

By the previous proposition, we may now work within $E_1$.
We can bring the mean field free energy functional $f(m^+, m^-)$ into the picture as follows:
Let
\[\delta(\Gamma)^0 = \delta(\Gamma) \cap \left\{ \dist(z, \delta(\Gamma)^c)> \frac{3\ell_>}{8}\right\}.
\]
Because we have assumed $L^{-\lambda} < \frac 18$, on $E_1$ the functional $F_{J_L, \delta(\Gamma), \epsilon}(m^+, m^-|  \sigma^{\ell_<}_{\delta(\Gamma)^c})$ decouples into two terms:
\begin{multline}
\label{E:Decouple}
F_{J_L, \delta(\Gamma), \epsilon}(m^+, m^-|  \sigma^{\ell_<}_{\delta(\Gamma)^c})=
 \int_{\delta(\Gamma)^0
 \times \delta(\Gamma)^0}\textd z \textd z' \frac{J_L(z, z')}{2}(\bar{m}_z- \bar{m}_{z'})^2  \\
 + \int_{\delta(\Gamma)^0} \textd z f(m^+_z, m^-_z)
  + \int_{\delta(\Gamma)^0
 \times \delta(\Gamma)^{0\: c}}\textd z \textd z' \frac{J_L(z, z')}{2}(\bar{m}_z - \Theta_{z'}(\Gamma)\bar{\mathbf {m}})^2 \\+ C_{\delta(\Gamma)^0}(m^+, m^-|\sigma^{\ell_<}_{\delta(\Gamma)^c})
 \end{multline}
where
\begin{multline}
\label{E:C}
C_{\delta(\Gamma)^0}(m^+, m^-|\sigma^{\ell_<}_{\delta(\Gamma)^c})=
\inf_{(m^+, m^-)} \phi(m^+,m^-)|\delta(\Gamma)| +\int_{\delta(\Gamma)\backslash \delta(\Gamma)^0} \textd z f(m^+_z, m^-_z) \\
+ \int_{\delta(\Gamma)\backslash \delta(\Gamma)^0 \times \delta(\Gamma)^0} \textd z \textd z' \frac{J_L(z, z')}{2} (\bar{m}_{z}- \Theta_{z'}(\Gamma)\bar{\mathbf m})^2 \\
+ \int_{\delta(\Gamma)\backslash \delta(\Gamma)^0 \times \delta(\Gamma)\backslash \delta(\Gamma)^0} \textd z \textd z' \frac{J_L(z, z')}{2} (\bar{m}_{z}- \bar{m}_{z'})^2 \\
+ \int_{\delta(\Gamma) \backslash \delta(\Gamma)^0 \times \delta(\Gamma)^c} \textd z \textd z' \frac{J_L(z, z')}{2}  (\bar{m}_{z} - \sigma^{\ell_<}_{\delta(\Gamma)^c, z'})^2 + E(\sigma^{\ell_<}_{\delta(\Gamma)^c, z'}) + O(L^{\lambda-1}|\delta(\Gamma)|).
\end{multline}
and the term $E(\sigma^{\ell_<}_{\delta(\Gamma)^c, z'})$ is quadratic in $\sigma^{\ell_<}_{\delta(\Gamma)^c, z'}$, independent of $(m^+, m^-)$ and invariant under simultaneous reflections of spins about the $Y$-axis in each connected component of $\delta(\Gamma)^c$.

Consider the result of reflecting
$m^{\ell_<, +}_z, m^{\ell_<, -}_z$ on $\delta(\Gamma)\backslash \delta(\Gamma)^0$ and $\sigma^{\ell_<}_{\delta(\Gamma)^c, z}$ on $\delta(\Gamma)^c$ about the $Y$-axis as necessary so that these block averages all have positive projection along the $X$ axis.  This transformation of magnetization profiles  leaves $C_{\delta(\Gamma)^0}$ invariant provided we redefine $\Theta_x \equiv 1$ in the second term of \eqref{E:C}.

Let us extend the transformed profile to all of $\delta(\Gamma)$ by setting it to $(\mathbf m^+, \mathbf m^-)$ on $\delta(\Gamma)^0$.
The term
\begin{multline}
\FF_{J_L, \delta(\Gamma)^0, \epsilon}(m^+, m^-)= \int_{\delta(\Gamma)^0
 \times \delta^0(\Gamma)}\textd z \textd z' \frac{J_L(z, z')}{2}(\bar{m}_z- \bar{m}_{z'})^2  \\
 + \int_{\delta(\Gamma)^0} \textd z f(m^+_z, m^-_z)
  + \int_{\delta(\Gamma)^0
 \times \delta(\Gamma)^{0 \: c}}\textd z \textd z' \frac{J_L(z, z')}{2}(\bar{m}_z - \Theta_{z'}\bar{\mathbf m})^2
\end{multline}
on the right hand side of \eqref{E:Decouple} thus
represents the free energy difference between the original configuration and this new configuration.

Since the new profile is in $\{\Theta_z(\sigma'_{\delta(\Gamma)})= 1\text{ for } z \in \delta(\Gamma)\}^*$, we will be finished once we show:
\begin{proposition}
With the hypotheses of Lemma \ref{L:Stupid},
there exists a universal constant $c>0$ so that we have
\[
\inf_{(m^+, m^-) \in E_1} \FF_{J_L, \delta(\Gamma)^0, \epsilon}(m^+, m^-) \geq  C(\epsilon^2 \wedge \xi^{2} \wedge L^{-\lambda d}) \ell_<^d N_{\Gamma}
\]
\end{proposition}
\begin{proof}
Let
\[
T^0_\Gamma=\{ B_r\subset \Sp(\Gamma) : \text{ $B_r$ is $\ell_>$-measurable and } \theta_z(\Gamma)|_{B_r}= 0\}
\]
\begin{multline}
T^{\pm}_\Gamma=\{ B_r\subset \Sp(\Gamma) : \text{ $B_r$ is $\ell_>$-measurable, }\theta_z(\Gamma)|_{B_r}= 1\\ \text{ and some $\ell_>$-measurable block neighbor has  $\theta_z(\Gamma)=-1$}\}.
\end{multline}

For $B_r \in T^0_\Gamma$ we have $\eta_z(\Gamma)= 0$ for some $\ell_<$-measurable block in $B_r$.  Since the profiles $(m^+, m^-)$ are $\ell_<$-constant, Theorem \ref{T:MFT} implies
\[
\int_{B_r} \textd z' f(m^+_{z'}, m^-_{z'}) \geq c\epsilon^2 \wedge \xi^2 \ell_<^d.
\]
Otherwise, $\eta_z(\Gamma)$ doesn't vanish over $B_r$ and so there are two adjacent $\ell_<$-measurable blocks in $B_r$ for which $\eta_z(\Gamma)$ takes on different nonzero values.  Call them $C_r, C_{r_1}$.  As $\beta> \beta_0$ and $\xi< \frac{1}{3}$, we can find a universal constant $c>0$ so that
\[
\int_{C_r \times C_{r_1} }\textd z \textd z' \frac{J_L(z, z')}{2}(\bar{m}_z-\bar{m}_{z'})^2 \geq c \ell_<^{2d}L^{-d}.
\]
Similarly for $B_r \in T^{\pm}$,
\[
\int_{B_r \times \R^d }\textd z \textd z' \frac{J_L(z, z')}{2}(\bar{m}_z- \bar{m}_{z'})^2 \geq c \ell_<^{2d}L^{-d}
\]
The estimate in the statement now follows from the fact that $N_\Gamma\leq c d (|T^0| + |T^{\pm}|)$.
\end{proof}
\noindent
This finishes the proof of Lemma \ref{L:Aprox}
\end{proof}

\noindent
\textbf{Proof of Lemma \ref{L:Key}: A Potential Flow Argument}
To reduce the notation, in this subsection we do not distinguish between sets $\Lambda \subset \Z^d$ and their extensions $\hat \Lambda \subset \R^d$.

\begin{proof}[ Proof of Proposition \ref{P:Strip}]
The proposition is the consequence of the following model computation.
Assume for definiteness that $\Gamma$ is a $+$ contour.  Fix $(m_0^+, m_0^-) \in E^*_{0}$ and recall that $\delta_{ext}(\Gamma)= \delta(\Gamma)\cap \Ext(\Gamma)$.  Let $\delta(\Gamma)_{Strip}=\tilde{\Gamma}_{Strip} \cap  \delta_{ext}(\Gamma)$.  Below we restrict attention to $\delta(\Gamma)_{Strip}$, but the same sort of analysis may be carried out over the components of $\delta_{in}(\Gamma)$.

We may consider
\[
F_{J_L, \delta(\Gamma_{Strip})}(m_1^+, m_1^-| \bar m_0)
\]
on $L^{\infty}(\delta(\Gamma)_{Strip}) \times L^{\infty}(\delta(\Gamma)_{Strip}) $ subject to the boundary condition induced by restricting $(m_0^+, m_0^-)$ to $\delta(\Gamma) \cap \delta(\Gamma)_{Strip}^c$ and subject to the constraint that
the profile
\[
(m^+, m^-):= \begin{cases}
(m_1^+, m_1^-) \text{ on $\delta(\Gamma)_{Strip}$,}\\
(m_0^+, m_0^-) \text{ otherwise,}
\end{cases}
\]
lies in $E^*_0$.  By definition, this is equivalent to requiring that $\Theta(m_1^+, m_1^-) \equiv 1$ on $\delta(\Gamma)_{Strip}$.

For each $m_2 \in \{x \in \R^2: \|x\|_2 < 1\}$, let us define $M^{-1}(m_2)=- \nabla S(m_2)$.  Because $S(m_2)$ is strictly concave in this region, the inverse $M(h)$ is well defined.  It takes the explicit form
\[
M(h)= \frac{\int  \sigma e^{\sigma \cdot h} \textrm{d}\nu(\sigma) }{\int  e^{\sigma \cdot h} \textrm{d}\nu(\sigma)}
\]
for all $h \in \R^2$.

We consider the differential equation:
\begin{equation}
\label{E:Co1}
\partial_t m^{\pm} = -\left[m^{
\pm} - M(\beta(J_L\star \bar{m}\pm \epsilon \hate_2))\right]
\end{equation}
on $\delta(\Gamma)_{Strip}$
subject to the boundary conditions
\begin{equation}
\label{E:Co2}
(m^+_z(t), m^-_z(t))\equiv\: (m_0^+, m_0^-) \text{ for $(t, z) \in \{ t= 0\} \times \delta(\Gamma)_{Strip} \cup \{ t\geq 0\} \times \delta(\Gamma)_{Strip}^c$}.
\end{equation}
Here, $J_L\star \bar{m}$ is the convolution of $J_L$ with $\bar m$:
\[
J_L\star \bar{m}_z:= \int_{\R^d} J_{L}(z, r) \bar m_r \textd r.
\]

The stationary solutions to this differential equation satisfy
\begin{equation}
\label{E:Station}
m^{\pm}_z - M(\beta(J_L\star \bar{m}_z\pm \epsilon\: \hate_2)).
\end{equation}
Note that $(\mathbf m^+, \mathbf m^+)$ solves the vector equation
\[
\mathbf{m}^{\pm} = M(\beta(\bar{\mathbf{m}}\pm \epsilon \hate_2))
\]
so we may think of \eqref{E:Station} as a kind of generalized mean field equation.

\begin{lemma}
\label{L:contract}
There exist $\delta, \epsilon_0, \beta_0>0$ so that if $\beta> \beta_0$ and $\epsilon< \epsilon_0$ satisfy \eqref{E:BE}, $\xi \in (0, \delta)$ and $(m_0^+, m_0^-) \in E^*_0$.  Then we have existence and uniqueness of \eqref{E:Co1}, \eqref{E:Co2} for all time $t\in [0, \infty)$ in $L^\infty(\delta(\Gamma)_{Strip}) \times L^\infty(\delta(\Gamma)_{Strip})$.  Further, for each initial condition $(m_0^+, m_0^-) \in E_0$, there is a unique stationary solution to \eqref{E:Station} with boundary condition $(m_0^+, m_0^-)|_{\delta(\Gamma)_{Strip}^c}$.  Its $\ell_<$-measurable block average profile lies in $E_0^*$.
\end{lemma}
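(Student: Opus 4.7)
The plan is to treat the three assertions of the lemma in turn: global well-posedness of the flow, existence and uniqueness of the stationary solution, and verification that its block-average profile lies in $E_0^*$.

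\emph{Flow existence and uniqueness.} I would regard the right-hand side of \eqref{E:Co1} as a vector field $F$ on the Banach space $L^\infty(\delta(\Gamma)_{Strip}) \times L^\infty(\delta(\Gamma)_{Strip})$. Since $M$ is smooth with $\|M\|_\infty \le 1$ and $\|DM\|_{op} \le C\beta$ (from the explicit integral formula), and convolution with $J_L$ has operator norm $\|J_L\|_{L^1} = 1$, $F$ is globally Lipschitz with constant $\le 1 + C\beta$. Picard--Lindel\"of in a Banach space then yields local existence and uniqueness, and the a priori bound $\|m^{\pm}(t)\|_\infty \le 1$ (the disk $\{\|m\|_2 \le 1\}$ is forward-invariant since $M$ takes values in it) extends the solution to $[0,\infty)$.

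\emph{Stationary solution.} Averaging \eqref{E:Station} decouples from $m^{\pm}$ individually and yields $\bar m_z = M^*\bigl(\beta J_L \star \bar m(z)\bigr)$, where the convolution is taken using the fixed extension of $\bar m_0$ outside $\delta(\Gamma)_{Strip}$. Call this map $T$. By Proposition~\ref{P:Contract}, on a $\delta$-ball around $\bar{\mathbf m}$ in $L^\infty$,
\[
  \|M^*(h_1) - M^*(h_2)\|_2 \le \bigl(4\beta e^{-c_1\beta} + 1 - c_2\epsilon^2\bigr)\|h_1 - h_2\|_2 + O(\delta^2),
\]
and \eqref{E:BE} absorbs the exponential term, leaving an effective Lipschitz constant at most $1 - \tfrac{c_2}{2}\epsilon^2 < 1$ once $\delta$ is small. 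Because $\|J_L \star f\|_\infty \le \|f\|_\infty$, $T$ inherits the same contraction on $\{\|\bar m - \bar{\mathbf m}\|_{L^\infty(\delta(\Gamma)_{Strip})} \le \delta\}$. The boundary data, coming from $(m_0^+,m_0^-) \in E_0$, lies within $\xi < \delta$ of $+\bar{\mathbf m}$ on the outer collar of $\delta(\Gamma)_{Strip}$ (since $\Gamma$ is a $+$ contour, the strip lives in $\delta_{ext}(\Gamma)$), so Picard iteration seeded at $\bar m \equiv \bar{\mathbf m}$ stays in the invariant $\delta$-ball and converges to a unique fixed point $\bar m^*$. Substituting back, $m^\pm_z = M\bigl(\beta(J_L \star \bar m^*_z \pm \epsilon \hate_2)\bigr)$ produces the (unique) stationary solution; uniqueness in $L^\infty$ among stationary solutions satisfying \eqref{E:Station} follows because any such solution automatically takes values in the unit disk and, by the same contraction applied to its average, must coincide with $\bar m^*$.

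\emph{$E_0^*$-membership and main obstacle.} Since $\bar m^*$ lies within $\xi$ of $\bar{\mathbf m}$ and $M(\beta(\bar{\mathbf m} \pm \epsilon \hate_2)) = \mathbf m^\pm$, smoothness of $M$ gives $\|m^\pm_z - \mathbf m^\pm\|_2 \le C\xi$ pointwise on the strip, and averaging over $\ell_<$-blocks preserves this bound, so $\Theta \equiv 1$ on $\delta(\Gamma)_{Strip}$ and the block-average profile lies in $E_0^*$ as claimed. The main obstacle is establishing invariance of the $\delta$-ball under $T$: one must ensure that at every strip point $z$ the convolution $J_L \star \bar m(z)$ only samples regions where $\bar m$ (iterate or boundary data) is already close to $\bar{\mathbf m}$, so that the $O(\delta^2)$ remainder in Proposition~\ref{P:Contract} is genuinely dominated by the $c_2\epsilon^2$ contraction gain along the iteration. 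This is exactly what forces the geometric separation $L \ll \ell_>$ encoded in $\ell_> = L^{1+\lambda}$ with $L^{-\lambda} < 1/8$: the support of $J_L(z,\cdot)$, of radius $L$, is contained in the strip or in its outer collar where $\Theta = +1$, so the convolution never picks up contributions from $\Sp(\Gamma)$ where the profile is uncontrolled.
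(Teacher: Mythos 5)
Your treatment of the flow itself coincides with the paper's (globally Lipschitz vector field, Picard--Lindel\"of, forward invariance of the unit disk), but for the stationary solution you take a genuinely different route. The paper exploits the gradient structure of the dynamics: it shows that the set $E=\{\|\bar m-\bar{\mathbf m}\|_{L^\infty}<\xi,\ \|m^\pm\|_{L^\infty}<1,\ m^\pm\cdot\hate_1>0\}$ is preserved by the flow, that $F_{J_L,\delta(\Gamma)_{Strip}}$ is a strict Lyapunov function, and extracts the stationary point as a limit of the flow; you instead close the averaged equation $\bar m=M^*(J_L\star\bar m)$ and apply Banach's fixed point theorem. Your route is cleaner for existence and uniqueness --- it avoids the compactness of orbits in $L^\infty$ that the paper's ``any limit point of the flow'' quietly assumes --- but it forfeits the variational fact, used immediately afterwards in the proof of Proposition~\ref{P:Strip}, that the stationary solution minimizes $F_{J_L,\delta(\Gamma)_{Strip}}$ over $E$; that is precisely what the Lyapunov structure buys. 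Two points in your argument need tightening. First, Proposition~\ref{P:Contract} as stated only controls $M^*(h)-M^*(\bar{\mathbf m})$, not $M^*(h_1)-M^*(h_2)$ for arbitrary $h_1,h_2$ in the ball, so the contraction step should instead invoke a derivative bound $\|\nabla M^*\|\le 1-c\epsilon^2$ on the ball (the paper's proof does establish $\|\beta\nabla M(\beta h)\|<1$ there, and the transverse gain of order $\epsilon^2$ comes from the same perturbation computation as in Proposition~\ref{P:Contract}). Second, your concluding bound $\|m^\pm_z-\mathbf m^\pm\|_2\le C\xi$ does not place the block averages in $E_0^*$ unless $C\le 1$; the paper gets constant exactly $1$ by noting that $v\mapsto M(\beta(v\pm\epsilon\hate_2))$ is $1$-Lipschitz and $J_L\star$ is an $L^\infty$-contraction, whence $\|m^\pm_z-\mathbf m^\pm\|_2\le\|\bar m-\bar{\mathbf m}\|_{L^\infty}\le\xi$. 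With these repairs your argument is sound, and your identification of the geometric point --- that the separation $\ell_>/16\gg L$ prevents the convolution from sampling $\Sp(\Gamma)$, where the profile is uncontrolled --- matches the role this separation plays in the paper.
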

\begin{proof}
For a start let us assume that $\delta, \epsilon_0, \beta_0>0$ as in Proposition \ref{P:Contract}.   We will adjust $\beta_0$ in a moment.
The function $m \mapsto M(\beta(m\pm \epsilon \hate_2))$ is $C$-Lipschitz within $\{x: \|x- \bar{\mathbf m}\|_2 < \xi\}$ since 
\[
\nabla_h M(\beta h= \beta \nabla M(\beta h) = \beta \left(\langle \sigma \hat \cdot \sigma \rangle - \langle \sigma \rangle \hat \cdot \langle \sigma \rangle\right)
\]
where $ \hat \cdot $ denotes the outer product of two vectors and the expectation is taken under $\nu$ tilted by $e^{\beta \sigma \cdot h}$.  In fact, increasing $\beta_0$ as necessary
\[
\|\beta \nabla M(\beta J\star \bar{m}\pm \epsilon \hate_2)\|<1
\]
on $E$ since in this case $\beta J\star \bar{m}\pm \epsilon \hate_2$ are uniformly bounded away from $0$.  
So if
\[
K^{\pm}(m^+, m^-)= m^{\pm} - M(\beta(J\star \bar{m}\pm \epsilon \hate_2))
\]
then $K^\pm$ are (at worst) $2$-Lipschitz (in $L^{\infty}$).
We will use this at the end of the proof as well.

Using the Lipschitz continuity, we have global existence and uniqueness on $L^{\infty} \times L^{\infty}$ for any initial condition in $E^*_0$.
Let
\[
E=\{(m^+, m^-)\in L^{\infty} \times L^{\infty}:  \|\bar m - \bar{\mathbf m}\|_{L^{\infty}}< \xi,\: \|m^{\pm}\|_{L^{\infty}}< 1, m^{\pm}\cdot \hate_1 > 0 \}.
\]
Note that $E^*_0 \subset E$.
We first claim that the dynamics preserves $E$:

By continuity in $t$, given $T_0$ sufficiently small and positive, we may assume that on the interval $[0, T_0]$ the solution is in $E$.  Note that the constant profile $(a^+, a^-) \equiv (\mathbf m^+, \mathbf m^-)$ is in $E^*_0$.
Since this profile is stationary we have
\begin{multline}
\frac{1}{2} \partial_t \|\bar a(t)- \bar m_2(t)\|_2^2= - \|\bar a- \bar m_2\|_2^2
+ \left[\bar a- \bar m_2\right] \cdot \\
\left[ \frac{1}{2}(M(\beta(J_L \star \bar{a}+ \epsilon \hate_2)+ M(\beta(J_L \star \bar{a}- \epsilon \hate_2)) - \frac{1}{2}( M(\beta(J_L \star \bar{m}_2+ \epsilon \hate_2))+ M(\beta(J_L \star \bar{m}_2- \epsilon \hate_2))) \right].
\end{multline}
By assumption on $\beta$ and Proposition \ref{P:Contract}
\begin{multline}
\| \frac{1}{2}(M(\beta(J_L \star \bar{a}+ \epsilon \hate_2)+ M(\beta(J_L \star \bar{a}- \epsilon \hate_2)) - \frac{1}{2}( M(\beta(J_L \star \bar{m}_2+ \epsilon \hate_2))+ M(\beta(J_L \star \bar{m}_2- \epsilon \hate_2))) \|_2
\\
\leq (1- c \epsilon^2)\| J_L\star \bar a - J_L \star \bar{m}_2\|_2
\end{multline}
for some $c>0$.
Since $J \star$ is a contraction in $L^{\infty}$,
\[
\frac{1}{2} \partial_t \|\bar a- \bar m_2\|_2^2= - \|\bar a- \bar m_2\|_2^2
+ (1- c \epsilon^2) \xi \|\bar a- \bar m_2\|_2
\]
for all $t \leq T_0$.  By a continuity argument, if \textit{a priori} $\|\bar a(0)- \bar m_2(0)\|_2< \xi$ it remains so for all times $t\in [0, \infty)$.

Consider next the boundedness of the individual components $m_t^{\pm}$.  The condition $m^{\pm} _t\cdot \hate_1> 0$ is evidently preserved by considering the flow velocity along with the fact that $\|\bar {m} - \bar a\|_2 < \xi \: \: \: \forall t \geq 0$.  If the initial condition satisfies $\|m^{\pm}_0\|_{L^{\infty}}< 1$,  then since $\|M(\beta(J_L \star \bar{m}_2\pm \epsilon \hate_2)))\|_2$ is uniformly bounded away from  $1$ (on $E$), it is easy to see that the dynamics is a coordinate wise contraction when either component lies in $\rho< \|m^{\pm}_0\|< 1$ for some $\rho$ sufficiently close to one.
We conclude via another continuity argument that the dynamics preserves $E$.

Under the flow, the free energy functional $F_{J_L, \delta(\Gamma)_{Strip}}(m^+(t), m^-(t)| \bar m_0)$  subject to initial conditions lying in $E$, evolves as:
\begin{multline}
\partial_t F_{J_L, \delta(\Gamma)_{Strip}} =\\
\frac{1}{2} \int \textd r \left[ J \star m +  \epsilon \hate_2 + \nabla S( m^{+})\right]  \cdot \left[m^{+} - M(\beta(J\star \bar{m}+  \epsilon \hate_2))\right]\\
+\frac{1}{2} \int \textd r \left[ J \star m  - \epsilon \hate_2 + \nabla S( m^{-})\right]  \cdot \left[m^{-} - M(\beta(J\star \bar{m}- \epsilon \hate_2))\right]
\end{multline}
Now  $M$ and $-\nabla S$ are inverses of one another when the latter is restricted to $\|m\|_2<1$, so each integrand is of the form
\[
-(M(m)- M (h))\cdot (m-h).
\]
Since $\nabla M \geq  0$ as an operator,  we conclude that $\partial_t F_{J_L, \delta(\Gamma)_{Strip}}$ is decreasing.  It is strictly decreasing unless  $(m^+, m^-)$ satisfies \eqref{E:Station} because
\[
v \cdot \nabla M  v= \textrm{Var}_h(\sigma \cdot v) \neq 0
\]
for any unit vector $v$ and any $h \in \R^2$.  Here $\textrm{Var}_h$ denote the variance of a random variable under the measure defined by tilting $\nu$ by $e^{\sigma \cdot h}$.  It is easy to see that the functional is bounded below on $E$, so we conclude that any limit point of the flow started from an element of $E$ must be a stationary point.

From \eqref{E:Station}, we have
\[
m^{\pm}= M(\beta(J_L\star \bar{m}\pm \epsilon \hate_2))
\]
so that the two components of any stationary point MUST be determined by the average vector $\bar{m}$.  In particular using the representation \eqref{E:M*}
\[
\|m^{\pm}_z-\mathbf{m}^{\pm}\|_2= \|M(\beta(J_L\star \bar{m}_z\pm \epsilon \hate_2))- M(\beta(J_L\star \bar{\mathbf m}\pm \epsilon \hate_2))\|_2 \leq \| \bar{m}_z- \bar{\mathbf m}\|_{L^\infty}
\]
where we have used that $M(\beta v\pm \epsilon \hate_2)$ is $1$-Lipschitz in $v$ and that $J_L\star$ is a contraction in $L^{\infty}$.
We have proved that any stationary solution has block averages in $E^*_0$.
\end{proof}

The next order of business is to show that if $m^{\pm}_r$ is a stationary point with boundary condition $m_0^{\pm}$ then $\bar{m}_r$ is very close to $\bar{\mathbf m}$ if $r$ is deep inside $\delta(\Gamma)_{Strip}$.
\begin{lemma}
With the same hypotheses as in the previous proposition, let $(m^{+}_r, m^{-}_r)$ be a stationary solution to \eqref{E:Station} on $\delta(\Gamma)_{Strip}$ subject to the boundary condition coinciding with $(m^+_0, m^-_0)\in E_0$ on $\delta(\Gamma)_{Strip}^c$.  Then there exist universal constants $C, c>0$ so that
\[
\|\bar m_r- \bar{\mathbf m}_r\|_2 < C e^{- c \epsilon^2 L^{\lambda}}
\]
whenever $d(r, \delta(\Gamma)_{Strip}^c)>\frac {\ell_>} 8 $
\end{lemma}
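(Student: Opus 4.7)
The plan is to iterate a pointwise contraction estimate obtained from the stationary equation together with Proposition \ref{P:Contract}. Averaging the two components of \eqref{E:Station} gives, for any stationary solution,
\[
\bar m_z \;=\; M^*(J_L \star \bar m_z) \quad \text{on}\ \delta(\Gamma)_{Strip},
\]
with $M^*$ as in Proposition \ref{P:Contract}. Since $\bar{\mathbf m}$ is spatially constant we have $J_L\star\bar{\mathbf m}=\bar{\mathbf m}$, and the mean field equations give $\bar{\mathbf m}=M^*(\bar{\mathbf m})$. Subtracting,
\[
\bar m_z - \bar{\mathbf m} \;=\; M^*(J_L\star \bar m_z) - M^*(\bar{\mathbf m}).
\]
By Lemma \ref{L:contract}, the stationary block-average profile lies in $E^*_0\subset E$, hence $\|\bar m_z-\bar{\mathbf m}\|_2<\xi$ and therefore $\|J_L\star \bar m_z-\bar{\mathbf m}\|_2<\xi<\delta$. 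Invoking Proposition \ref{P:Contract}, condition \eqref{E:BE}, and choosing $\delta$ small enough that the quadratic remainder is absorbed into the linear contraction yields
\[
\|\bar m_z-\bar{\mathbf m}\|_2 \;\leq\; (1-c\epsilon^2)\,\|J_L\star \bar m_z-\bar{\mathbf m}\|_2 \;\leq\; (1-c\epsilon^2)\sup_{|y-z|\leq L}\|\bar m_y-\bar{\mathbf m}\|_2,
\]
where the last inequality uses $J_L\geq 0$ and $\int J_L=1$.

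Next I would iterate this bound away from the boundary of $\delta(\Gamma)_{Strip}$. Let $u(z)=\|\bar m_z-\bar{\mathbf m}\|_2$, defined on all of $\R^d$ by extending $\bar m$ via the prescribed boundary condition outside $\delta(\Gamma)_{Strip}$, so that the crude bound $u\leq 2$ holds globally. Define $D_k=\{z\in\delta(\Gamma)_{Strip}:d(z,\delta(\Gamma)_{Strip}^c)>kL\}$. If $z\in D_k$, then every $y$ with $|y-z|\leq L$ belongs to $D_{k-1}\subset\delta(\Gamma)_{Strip}$, so the displayed inequality can be applied $k$ times in succession to obtain
\[
u(z)\;\leq\; 2(1-c\epsilon^2)^{k}.
\]
For $z$ with $d(z,\delta(\Gamma)_{Strip}^c)>\ell_>/8$ one may take $k=\lfloor L^\lambda/8\rfloor$ (recalling $\ell_>=\lceil L^{1+\lambda}\rceil$ and $L^\lambda>8$), and the result follows from $(1-c\epsilon^2)^{L^\lambda/8}\leq e^{-c'\epsilon^2 L^\lambda}$.

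The main technical subtlety is not the iteration itself—which is a routine propagation-of-smallness argument—but rather ensuring that the quadratic remainder in Proposition \ref{P:Contract} does not spoil the linear contraction when applied at scale $\xi$. This is handled by choosing $\delta$ in terms of the universal constants $c_1,c_2$ of Proposition \ref{P:Contract} so that $O(\|h-\bar{\mathbf m}\|_2^2)$ is bounded by $(c_2/4)\epsilon^2\|h-\bar{\mathbf m}\|_2$ throughout the $\delta$-ball, after which the effective contraction factor is at most $1-(c_2/4)\epsilon^2$, still of the form $1-c\epsilon^2$. All other ingredients are already supplied by Proposition \ref{P:Contract}, Lemma \ref{L:contract}, and the geometric setup.
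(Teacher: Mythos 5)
Your proof is correct and follows essentially the same route as the paper's: rewrite the stationary equation as the fixed-point identity $\bar m = M^*(J_L\star\bar m)$, use Proposition \ref{P:Contract} together with the finite range and normalization of $J_L$ to get a factor $(1-c\epsilon^2)$ per layer of width $L$, and iterate $\sim L^\lambda/8$ times inward. Your explicit treatment of the quadratic remainder in Proposition \ref{P:Contract} and of the base case of the iteration is a slightly more careful rendering of the step the paper states in one line.
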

\begin{proof}
If $m\in E$ is stationary then (using notation from Section \ref{S:MFT}),
\[
\bar{m}- \bar{\mathbf m}= M^*_{\beta}( \bar{m} ) -  M^{\star}_\beta(\bar{\mathbf m}) .
\]
Hence by Proposition \ref{P:Contract}, if on some region $D$ $\|\bar{m}- \bar{\mathbf m}\|_{L^{\infty}(D)}< \xi$ then on the interior $D_L^o= \{x \in D: \dist(x, D^c)> L\}$
\[
\|\bar m- \bar{\mathbf m}\|_\infty < (1- c \epsilon^2) \xi.
\]
Iterating, if  $\dist(r, D^c)> k L$ then
\[
\|\bar m_r- \bar{\mathbf m}\|_2 < (1- c \epsilon^2)^k \xi.
\]
Hence we conclude that any stationary solution has
\[
\|\bar m_r- \bar{\mathbf m}\|_2 < C e^{- c \epsilon^2 L^{\lambda}} \xi
\]
if $\dist(r, \delta(\Gamma)_{Strip}^c)> \frac{\ell_>}{8}$.
Since the operator norm of $\nabla M^*_{\beta}$ is bounded by $1$ the same holds for $m^\pm$ using \eqref{E:Station}.
\end{proof}

Finally, if $(m^{+}_z, m^{-}_z)$ is a stationary solution to \eqref{E:Station} on $\delta(\Gamma)_{Strip}$ subject to the boundary condition coinciding with $(m^+_0, m^-_0)\in E_0$ on $\delta(\Gamma)_{Strip}^c$, then its block averages $(m^{\ell_<, +}_z, m^{\ell_<, -}_z)$ lie in $E^*_0$ and satisfy
\[
\|m^{\ell_<, \pm}_z- \mathbf m^{\pm}\|_2 < C e^{- c \epsilon^2 L^{\lambda}}
\]
if $\dist(r, \delta(\Gamma)_{Strip}^c)> \frac{\ell_>}{8}$.
From the proof of Lemma \ref{L:contract}
\[
F_{J_L, \delta(\Gamma)_{Strip}}(m^+, m^-| \bar m_0)= \inf_{(m_1^+, m_1^-)\in E}  F_{J_L, \delta(\Gamma)_{Strip}}(m^+, m^-| \bar m_0)
\]

Note that
\[
\|J\star\bar m_z- J\star \bar m^{\ell_<}_z\|_2 \leq c L^{\lambda-1}
\]
and that
\[
\int_{|z-r|\leq \ell_<}\textd z f(m_z^{+}, m_z^-) \geq \ell_<^d  f(m^{\ell_<, +}_z, m^{\ell_<, -}_z)
\]
on $E_0^*$ since $f$ is convex on $\{\|(m^{+}, m^-)\pm ({\mathbf m^+, \mathbf m^-})\| < \xi\}$.

Thus we have
\[
F_{J_L, \delta(\Gamma)_{Strip}}(m^{\ell_<, +}, m^{\ell_<, -}| \bar m_0)- F_{J_L, \delta(\Gamma)_{Strip}}(m^+, m^-| \bar m_0) \leq c L^{\lambda-1} |\delta(\Gamma)_{Strip}|.
\]

To summarize, we have shown the existence of an $\ell_<$-constant magnetization profile $(m^{\ell_<, +}_z, m^{\ell_<, -}_z)$ so that
\begin{multline}
F_{J_L, \delta(\Gamma)_{Strip}}(m^{\ell_<, +}, m^{\ell_<, -}| \bar m_0) \\
\leq  \inf_{(m_1^+, m_1^-)\in E}  F_{J_L, \delta(\Gamma)_{Strip}}(m^+, m^-| \bar m_0)+ c L^{\lambda-1} \log L |\delta(\Gamma)_{Strip}|
\end{multline}
and
\[
\|m^{\ell_<, \pm}_z- \mathbf m^{\pm}\|_2 < C e^{- c \epsilon^2 L^{\lambda}}
\]
if $\dist(z, \delta(\Gamma)_{Strip}^c)> \frac{\ell_>}{8}$.
To complete the argument, we simply modify $m^{\ell_<, \pm}_z$ on $\dist(z, \delta(\Gamma)_{Strip}^c)> \frac{\ell_>}{8}$ to be \textit{equal} to $\mathbf m^{\pm}$.
Call the new profile $(m^{\ell_<, +}_{a, z}, m^{\ell_<, -}_{a,z})$.
Then
\begin{multline}
F_{J_L, \delta(\Gamma)_{Strip}}(m_a^{\ell_<, +}, m_a^{\ell_<, -}| \bar m_0) \\
\leq  \inf_{(m_1^+, m_1^-)\in E}  F_{J_L, \delta(\Gamma)_{Strip}}(m_1^+, m_1^-| \bar m_0)+ c L^{\lambda-1} |\delta(\Gamma)_{Strip}| + C e^{- c \epsilon^2 L^{\lambda}}  |\delta(\Gamma)_{Strip}|
\end{multline}
since the on-site term in $F_{J_L, \delta(\Gamma)_{Strip}}$ can only be decreased by this modification.
This concludes the proof of Proposition \ref{P:Strip}.
\end{proof}

\noindent
\textbf{Acknowledgements:}
The author would like to acknowledge interesting and helpful discussions with  D.~Ioffe and S.~Shlosman on the behavior of this model with nearest neighbor coupling and with D.~Ioffe on the generic behavior of  models of this type in the mean field setting.

\end{document}